%
\documentclass{llncs}
\pagestyle{plain}
%

\usepackage[hidelinks]{hyperref}

\usepackage{geometry}
\geometry{
	letterpaper,         
	textwidth=15.5cm,  
	textheight=23cm, 
	heightrounded,   
	hratio=1:1,      
	vratio=2:3,      
}

\usepackage{caption}

\usepackage[title]{appendix}
\usepackage{graphicx}
\usepackage{amssymb,amsmath,bm}
\usepackage{algorithm}
\usepackage[noend]{algcompatible}
\usepackage{appendix}
\usepackage{multicol}
\usepackage[dvipsnames]{xcolor}
\usepackage{tablefootnote}
\usepackage[T1]{fontenc}

\usepackage{enumitem}
\usepackage{marvosym}

\usepackage{multirow,booktabs,arydshln}
\setlength\heavyrulewidth{1pt}

\newcommand{\tabincell}[2]{\begin{tabular}{@{}#1@{}}#2\end{tabular}}

\definecolor{camel}{rgb}{0.76, 0.6, 0.42}

\algdef{SE}[PROCEDURE]{Procedure}{EndProcedure}{}{\algorithmicend\ }
\algtext*{Procedure}
\algtext*{EndProcedure}
\newcommand{\If}{\IF}
\newcommand{\Else}{\ELSE}
\newcommand{\EndIf}{\ENDIF}

\algrenewcommand\algorithmicwhile{\textbf{upon}}
\newcommand{\Upon}{\WHILE}
\newcommand{\EndUpon}{\ENDWHILE}

\algrenewcommand\algorithmicloop{\textbf{wait}}
\newcommand{\Wait}{\LOOP}
\newcommand{\EndWait}{\ENDLOOP}

\newcommand{\Repeat}{\LOOP}
\newcommand{\EndRepeat}{\ENDLOOP}

\algrenewcommand\algorithmicindent{1em}
\algrenewcommand{\algorithmiccomment}[1]{\hskip0.0em{\color{camel}   $\rhd$ #1}}

\newcommand{\argmax}{\mathop{\mathrm{argmax}}\nolimits} 

\newcommand{\ID}{\mathsf{ID}}

\newcommand{\node}{\mathcal{P}}
\newcommand{\hash}{\mathcal{H}}
\newcommand{\adv}{\mathcal{A}}
\newcommand{\cha}{\mathcal{C}}
\newcommand{\negl}{\mathsf{negl}}
\newcommand{\bigO}{\mathcal{O}}

\newcommand{\AVSS}{\mathsf{AVSS}}
\newcommand{\sssh}{\mathsf{AVSS\textrm{-}Sh}}
\newcommand{\ssrec}{\mathsf{AVSS\textrm{-}Rec}}
\newcommand{\wcs}{\mathsf{WCS}}
\newcommand{\wcrb}{\mathsf{\gamma\textrm{-}coin}}
\newcommand{\coin}{\mathsf{Coin}}
\newcommand{\elect}{\mathsf{Election}}
\newcommand{\seedgen}{\mathsf{Seeding}}
\newcommand{\scrape}{\mathsf{Scrape}}
\newcommand{\PVSS}{\mathsf{PVSS}}

\newcommand{\Gen}{\mathsf{KenGen}}

\newcommand{\Vrfy}{\mathsf{Vrfy}}
\newcommand{\Sigforge}{\mathsf{Sig\textrm{-}forge}}

\newcommand{\Adv}{\mathcal{A}}
\newcommand{\Q}{\mathcal{Q}}
\newcommand{\RBC}{\mathsf{RBC}}
\newcommand{\CBC}{\mathsf{RBC}}

\newcommand{\C}{\mathit{C}}
\newcommand{\Proof}{\mathit{\Pi}}
\newcommand{\A}{\mathit{A}}
\newcommand{\G}{\mathit{G}}

\newcommand{\R}{\mathit{R}}
\newcommand{\Si}{\mathit{S}}
\newcommand{\K}{\mathit{K}}

\newcommand{\VRF}{\mathsf{VRF}}
 
\newcommand{\KEYSHARE}{\textsc{KeyShare}}
\newcommand{\STORED}{\textsc{KeyStored}}

\newcommand{\CIPHER}{\textsc{Cipher}}
\newcommand{\VAL}{\textsc{Val}}
\newcommand{\ECHO}{\textsc{Echo}}
\newcommand{\READY}{\textsc{Ready}}
\newcommand{\KREC}{\textsc{KeyRec}}

\newcommand{\RECREQ}{\textsc{RecRequest}}
\newcommand{\LOCK}{\textsc{Lock}}
\newcommand{\COMMIT}{\textsc{Commit}}
\newcommand{\CONFIRM}{\textsc{Confirm}}
\newcommand{\CANDIDATE}{\textsc{Candidate}}

\newcommand{\KEY}{\textsc{Key}}

\newcommand{\SCRIPT}{\textsc{PvssScript}}
\newcommand{\LOCKPVSS}{\textsc{AggPvss}}
\newcommand{\COMMITPVSS}{\textsc{AggPvssCommit}}
\newcommand{\CONFIRMPVSS}{\textsc{AggPvssStored}}
\newcommand{\SEEDSHARE}{\textsc{SeedShare}}
\newcommand{\SEED}{\textsc{Seed}}
\newcommand{\SEEDECHO}{\textsc{SeedEcho}}
\newcommand{\SEEDREADY}{\textsc{SeedReady}}

\newcommand{\AUX}{\textsc{Aux}}
\newcommand{\CONF}{\textsc{Conf}}

\newcommand{\Sign}{\mathsf{Sign}}
\newcommand{\Eval}{\mathsf{VRF.Eval}}

\newcommand{\Verify}{\mathsf{SigVerify}}
\newcommand{\VerifyVRF}{\mathsf{VRF.Verify}}
\newcommand{\GenVRF}{\mathsf{VRF.Gen}}

\newcommand{\ABA}{\mathsf{ABA}}
\newcommand{\ADKG}{\mathsf{ADKG}}
\newcommand{\VBA}{\mathsf{VBA}}
\newcommand{\aba}{\mathsf{\textrm{$b$-}biased\textrm{-}ABA}}
\newcommand{\zeroaba}{\mathsf{\textrm{$0$-}biased\textrm{-}ABA}}

\newcommand{\sh}{\mathsf{sh}}
\newcommand{\cmt}{\mathsf{cmt}}
\newcommand{\cmax}{\mathsf{vrf_{max}}}

\newcommand{\deal}{\mathsf{Deal}}
\newcommand{\weights}{\mathsf{Weights}}
\newcommand{\vrfyscript}{\mathsf{VrfyScript}}
\newcommand{\vrfysecret}{\mathsf{VrfySecret}}
\newcommand{\agg}{\mathsf{AggScripts}}
\newcommand{\share}{\mathsf{GetShare}}
\newcommand{\vrfyshare}{\mathsf{VrfyShare}}
\newcommand{\aggshares}{\mathsf{AggShares}}
\newcommand{\pvss}{\mathsf{pvss}}
\newcommand{\param}{\mathsf{param}}
\newcommand{\ek}{{ek}}
\newcommand{\dk}{{dk}}
\newcommand{\allek}{\mathsf{ek}}
\newcommand{\sk}{{sk}}
\newcommand{\vk}{{vk}}
\newcommand{\allvk}{\mathsf{vk}}
\newcommand{\pvsstag}{\mathsf{tag}}


\usepackage{multirow}
\usepackage{arydshln}

\newcommand{\ignore}[1]{}

\newcommand{\yuan}[1]{\textcolor{blue}{#1}}

\begin{document}
	\title{Efficient Asynchronous Byzantine Agreement\\ without Private Setups}
	%
	%
	\author{}
	\institute{ }

	\author{Yingzi Gao\inst{1,3}  \and
		Yuan Lu\inst{1}
		\and
		Zhenliang Lu\inst{2}
		\and  
		Qiang Tang\inst{2} 
		\and 
		Jing Xu\inst{1} 
		\and  
		Zhenfeng Zhang\inst{1}   
	}

	%
	%
	\institute{ 
		Institute of Software, Chinese Academy of Sciences\\
		\email{\{yingzi2019,luyuan,xujing,zhenfeng\}@iscas.ac.cn} 
		\and
		School of Computer Science, The University of Sydney\\
		\email{zhlu9620@uni.sydney.edu.au,qiang.tang@sydney.edu.au} \\ 
		\and
				University of Chinese Academy of Sciences\\
	}

	\maketitle              
	\begin{abstract}
		
		Though recent breakthroughs greatly improved the efficiency of asynchronous Byzantine agreement (BA) protocols, they mainly focused on the setting with private setups, e.g., assuming established non-interactive threshold cryptosystems. Challenges remain to reduce the large communication complexities in the absence of such setups. For example, Abraham et al. (PODC'21)  recently gave the first private-setup free construction for asynchronous   validated BA (VBA) with expected $\mathcal{O}(n^3)$ messages and $\mathcal{O}(1)$ rounds, relying on only public key infrastructure (PKI), but the design still costs $\mathcal{O}(\lambda n^3 \log n)$ bits. Here $n$ is the number of  parties, and $\lambda$ means the cryptographic security parameter capturing the length of hash, digital signature, etc. 
		
		We reduce  the   communication of private-setup free asynchronous BA  to expected $\mathcal{O}(\lambda n^3)$ bits. At the core of our design, we present a systematic treatment of reasonably fair common randomness protocols in the asynchronous network, and proceed as:
		\begin{itemize}
			\item We give an efficient  reasonably fair common coin  protocol  in the asynchronous setting  with only PKI setup. It costs only $\mathcal{O}(\lambda n^3)$ bit and $\mathcal{O}(1)$ asynchronous rounds, and ensures that  with at least $1/3$ probability, all honest parties can output a common bit that is as if randomly flipped. This directly renders more efficient private-setup free asynchronous binary agreement (ABA)  with expected $\mathcal{O}(\lambda n^3)$ bits  and expected constant rounds. 
			\item Then, we lift our common coin   to attain perfect agreement by using a single ABA.
			This gives us a reasonably fair random leader election protocol with expected $\mathcal{O}(\lambda n^3)$ communication and expected constant rounds. It is  pluggable in all existing VBA protocols (e.g., Cachin et al., CRYPTO'01;  Abraham et al., PODC'19; Lu et al., PODC'20) to remove the needed private setup or distributed key generation (DKG). As such, the  communication of private-setup free VBA is reduced to expected $\mathcal{O}(\lambda n^3)$ bits while preserving fast termination in  expected $\mathcal{O}(1)$ rounds. Moreover, our result paves a generic path to   private-setup free asynchronous BA protocols, as it is not restricted to merely improve Abraham et al.'s specific VBA protocol in PODC'21.
		\end{itemize}
		Our results and techniques could be found  useful and interesting for a broad array of  applications such as asynchronous DKG and DKG-free asynchronous random beacon.
	\end{abstract}

    \setcounter{footnote}{0} 


\section{Introduction}
\label{sec:intro}
 
Recently, following the unprecedented demand of deploying BFT protocols on the Internet for robust and highly available decentralized applications, renewed attentions are gathered to implement more efficient asynchronous Byzantine agreements \cite{abraham2018validated,keidar2021all,miller2016honey,guo2020dumbo,lu2020dumbo,yang2021dispersedledger}.
%
Nevertheless, asynchronous  protocols have to rely on randomized executions to circumvent the seminal FLP ``impossibility'' result \cite{fischer1982impossibility}.
In particular, to  quickly decide the    output in expected constant rounds of interactions, many asynchronous   protocols \cite{canetti1993fast,mostefaoui2015signature,miller2016honey,cachin2000random,cachin2001secure,abraham2018validated,lu2020dumbo,guo2020dumbo,keidar2021all,cohen2020not,blum2020asynchronous,abraham2010fast,abraham2021reaching} essentially need   {\em common randomness}, given which for ``costless'',  one at least can construct optimally resilient asynchronous Byzantine agreement (BA) protocols that   cost   expected $\bigO(n^2)$ messages and expected    $\bigO(1)$ rounds \cite{abraham2018validated,lu2020dumbo,mostefaoui2015signature,cachin2000random,cachin2001secure,crain2020two}. 

However,  
 efficient ways to implement asynchronous common randomness in practice mostly rely on different varieties of private setups.
For example, initiated by M. Rabin \cite{rabin1983randomized}, it     assumes that a trusted dealer directly uses secret sharing to   distribute a large number of random secrets among the participating parties before the   protocol starts, so the parties can collectively reconstruct a sequence of common randomness while running the protocol.
Later, Cachin et al. \cite{cachin2000random} presented how to set up a non-interactive threshold pseudorandom function (tPRF) 
by assuming that a trusted dealer can faithfully share a short tPRF key, which now is  widely used by   existing practical asynchronous BFT protocols including \cite{abraham2018validated,keidar2021all,miller2016honey,lu2020dumbo}.
%
%
%

%
%
These private setups  might cause unpleasant  deployment hurdles, preventing asynchronous protocols  from  being widely used in  broader settings. 
Hence, it becomes critical to reduce the   setup assumptions for easier real-world deployment. 


\smallskip
\noindent{\bf Existing efforts on reducing setups}.
There are a few known approaches to construct private-setup free  asynchronous BA, but 
most   are costly  or even  prohibitively expensive. 
 
Back to 1993, Canetti and Rabin  \cite{canetti1993fast} gave a beautiful common coin construction (CR93) centering around  asynchronous verifiable secret sharing ($\AVSS$), from which a fast and optimally resilient asynchronous binary agreement ($\ABA$) can be realized. 
Here, $\AVSS$ is a two-phase protocol that allows a dealer to confidentially ``commit'' a secret   across $n$ participating parties  during a sharing phase, 
after which   a reconstructing phase can be invoke   to let the honest parties collectively  recover the earlier committed secret.
The resulting {\em common coin} is {\em reasonably fair}, as it ensures all honest parties to output either 0 or 1 with some constant probability.
%
Though this reasonably fair  common coin attains constant asynchronous rounds and can be directly plugged into many binary agreement constructions \cite{mostefaoui2015signature,canetti1993fast,crain2020two}, 
it incurs tremendous $\bigO(n^6)$ messages and $\bigO(\lambda n^8 \log n)$ bits, where $\lambda$ is the  cryptographic security parameter.
The huge complexities of CR93 are   dominated by its expensive $\AVSS$.
Since then,  many more  efficient private-setup free $\AVSS$ protocols \cite{cachin2002asynchronous,backes2011computational,alhaddad2021high,bangalore2018almost} were proposed and can directly  improve it.
For example, Cachin et al. \cite{cachin2002asynchronous} gave an $\AVSS$    to share $n$ secrets with only $\bigO(n^2)$ messages and $\bigO(\lambda n^3)$ bits, but the resulting common coin and $\ABA$ protocols (CKLS02)  still incur $\bigO(n^3)$ messages and $\bigO(\lambda n^4)$ bits, which remains expensive and exists an $\bigO(n)$ gap between the message and the communication complexities.


Recently, Kokoris-Kogias et al. \cite{kokoris2020asynchronous} (KMS20)  presented a new   path to reducing the common coin primitive to $\AVSS$.\footnote{KMS20 requires a strengthened  $\AVSS$   with high-threshold secrecy: the adversary cannot learn the   secret  before $n-2f$ honest parties start to reconstruct  (where $f$ is the number of   corrupted parties). In contrast,  the classic   $\AVSS$  notion \cite{canetti1993fast} only preserves secrecy, before the first honest party activates reconstruction.}
In particular, KMS20  incurs $\bigO(\lambda n^4)$ bits and $\bigO(n)$ asynchronous rounds to generate a single random coin, which   is seemingly worse than CKLS02. Nonetheless, once being bootstrapped, it  can  continually  generate      coins at a lower per-coin cost of $\bigO(\lambda n^2)$ bits and $\bigO(1)$ rounds, thus being cheaper in an amortized way.
 In another recent  breakthrough, Abraham et al. \cite{abraham2021reaching} presented an elegant asynchronous validated  Byzantine agreement ($\VBA$) protocol  (AJM+21) without private setup.
 %
 It costs expected $\bigO(n^3)$ messages,  constant rounds, and  $\bigO(|m|n^2 + \lambda n^3 \log n)$ bits for $|m|$-bit input,\footnote{Through the paper, we   consider the input size $|m|$ of $\VBA$ to be at most $\lambda n$ bits, so     the $|m| n^2$ term does not dominate the communication complexity, thus ignored.    
 	For   larger input, it can be an   orthogonal problem to push the $|m| n^2$ term to $|m| n$, as discussed by many  ``extension'' protocols \cite{lu2020dumbo,patra2011communication,nayak2020improved,ganesh2021optimal} for multi-valued BA.} and only assumes the presence of a bulletin PKI that can facilitate the management of public keys.
 At the core of AJM+21 $\VBA$, it lifts reasonably fair common coin to a new  {\em random proposal election} primitive, 
 such that with a   constant probability, the honest parties can randomly decide a common value proposed by some non-corrupted party.
 As such, a   certain $\VBA$  protocol  called {\em No-Waitin' HotStuff} (NWH)   was   tailored to cater for this special proposal  election primitive.
 However,  this election notion is too specific to be used in other existing  $\VBA$ constructions \cite{abraham2018validated,lu2020dumbo,cachin2001secure}, due to the  imperfect of necessary agreement. 
 Still, AJM+21 $\VBA$ costs   $\bigO(\lambda n^3 \log n)$ bits, and leaves room for further reducing the  communication cost asymptotically by removing the  $\log n$ factor.

\smallskip
Bearing the state-of-the-art,  it calls 
to systematically treat the fundamental basis of more efficient (reasonably fair) common randomness protocols, such that we can make them private-setup free and   pluggable in the existing  asynchronous Byzantine agreement designs, thus overcoming the current deployment hurdles of   asynchronous protocols. That said, the following question remains open:

\noindent
\begin{center}
{\em Can we design   efficient  asynchronous      common randomness protocols with \\  fewer setup assumptions, thus   reducing the expected communication cost of \\asynchronous Byzantine   agreements (e.g., $\ABA$ and $\VBA$) to    $\bigO(\lambda n^3)$ bits?} 
\end{center}



\subsection{Our contribution} 
We give an affirmative answer to the above question. At the core of our solution, we develop a set of new techniques to design an efficient private-setup free construction for reasonably fair common coin that are   pluggable in many existing $\ABA$ protocols \cite{canetti1993fast,mostefaoui2015signature,crain2020two}; more interestingly, we formalize and construct an efficient (reasonably fair) leader election notion with perfect agreement, by lifting our common coin protocol to  be always agreed. This leader election primitive can be directly plugged in all existing $\VBA$ protocols \cite{cachin2001secure,abraham2018validated,lu2020dumbo,guo2022speeding,gelashvili2021jolteon} to remove their reliance on private setups.  

\vspace{-0.8cm}
\begin{table}[]
	\centering	
	\caption{\bf Comparison of private-setup free asynchronous BA protocols}\label{tab:comp}
	\begin{tabular}{c|cc|cc|ccc}
		\hline\rule{0pt}{8pt}
		\multirow{2}{*}{}           & \multicolumn{2}{c|}{$\ABA$/Coin}                               & \multicolumn{2}{c|}{$\VBA$/$\elect$}                               & \multirow{2}{*}{\begin{tabular}[c]{@{}c@{}}Adaptive \\ Security?\end{tabular}} & \multirow{2}{*}{\begin{tabular}[c]{@{}c@{}}Cryptographic\\ Assumption\end{tabular}} & \multirow{2}{*}{\begin{tabular}[c]{@{}c@{}}Setup\\ Assumption\end{tabular}} \\ \cline{2-5}\rule{0pt}{8pt}
		& Comm.                                 & Round                       & Comm.                                 & Round                       &                                                                               &                                                                                     &                                                                             \\ \hline\rule{0pt}{8pt}
		CKLS02 \cite{cachin2002asynchronous}  $^\S$                    & $\bigO(\lambda n^4)$                  & $\bigO(1)$                  & -                                     & -                           & Yes                                                                           & Dlog+hash                                                                           & global param $^*$                                                                \\\rule{0pt}{8pt}
		KMS20 \cite{kokoris2020asynchronous} $^\dagger$                      & $\bigO(\lambda n^4)$                  & $\bigO(n)$                  & $\bigO(\lambda n^4)$                  & $\bigO(n)$                  & No$^\star$                                                                           & RO+DDH$^\star$                                                                              & PKI  $^\sharp$                                                                        \\\rule{0pt}{8pt}
		AJM+21 \cite{abraham2021reaching} $^{\dagger\ddagger}$                     & $\bigO(\lambda n^3 \log n)$           & $\bigO(1)$                  & $\bigO(\lambda n^3 \log n)$$^\P$           & $\bigO(1)$                  & No                                                                            & RO+SXDH                                                                             & PKI                                                                         \\ \hline\rule{0pt}{8pt}
		\multirow{2}{*}{This paper} & \multirow{2}{*}{$\bigO(\lambda n^3)$} & \multirow{2}{*}{$\bigO(1)$} & \multirow{2}{*}{$\bigO(\lambda n^3)$} & \multirow{2}{*}{$\bigO(1)$} & No                                                                           & RO+SXDH                                                                             & PKI                                                             \\ \cline{6-8}\rule{0pt}{9pt}
		&                                       &                             &                                       &                             & Yes                                                                            & RO+DDH                                                                            & PKI, 1-time rnd $^{**}$                                                                         \\ \hline
	\end{tabular}
	\vspace{-0.2cm}
	\begin{scriptsize}
		{   
			\begin{itemize}
				\item[$*$] Global parameters   capture some minimal setups such as an agreed group description and   group generators. For some  schemes relying on   collision-resistant     hash    \cite{cachin2002asynchronous,kokoris2020asynchronous}, a  one-time common random string   is   needed to key  the  hash functions.
				\item[$\S$] CKLS02 \cite{cachin2002asynchronous} did not construct $\VBA$ or   leader election. We also  do not realize any complexity-preserving reductions    to it. 
				\item[$\star$] KMS20 states that it might be adaptively secure by using the pairing-based adaptively secure threshold signature \cite{libert2016born}, and this might cause it rely on SXDH assumption instead of only DDH assumption.
				\item[$\sharp$] The PKI setup in KMS20 can be removed by recent high-threshold $\AVSS$ presented in \cite{alhaddad2021high}.
				\item[$\dagger$]
				Note that KMS20  and AJM+21 did not present an explicit construction  for random   leader election ($\elect$). Nevertheless, they gave   asynchronous distributed key generation protocols ($\ADKG$) that   can   bootstrap    threshold verifiable random function and   thus can  set up $\elect$ (and also common coin) schemes via $\ADKG$. 
				\item[$\ddagger$] AJM+21 only presents an explicit  $\VBA$  construction but does not construct $\ABA$. However, $\VBA$ implies $\ABA$ with same complexities, because  there is a simple complexity-preserving reduction from $\ABA$ to $\VBA$ in the PKI setting, cf. \cite{cachin2001secure}.
				\item[$\P$] The communication   of AJM+21 can be reduced to $\bigO(\lambda n^3)$ by a recent     reliable broadcast protocol \cite{dasasynchronous}, but   this   only applies to the specific AJM+21 $\VBA$ construction, while our result is generic and can be adapted to all existing $\VBA$ protocols.
				\item[$**$] 1-time rnd means a    one-time common random string can be published   after PKI registration but before protocol execution.

			\end{itemize}
		}
	\end{scriptsize}
\end{table}
\vspace{-0.5cm}
 
In greater details, our technical contribution is three-fold:
\begin{itemize}

	\item We  implement $\bigO(\lambda n^3)$-bit and $\bigO(1)$-round {\bf common coin and   $\ABA$ with only   PKI setup} in the asynchronous network, conditioned on SXDH assumption and random oracle.
	
	The crux of our design is a novel efficient construction for the reasonably fair common coin in the bulletin PKI setting  (in  the random oracle model).  
	Different from   CR93 (that used $n^2$ $\AVSS$ instances), we use verifiable random function in  combination with more efficient   $\AVSS$ construction to reduce the number of necessary $\AVSS$ instances by an $\bigO(n)$ order. This private-setup free common coin costs only  $\bigO(\lambda n^3)$ bits and constant asynchronous rounds.
	With our common coin protocol at hand, we can implement private-setup free $\ABA$s with expected $\bigO(n^3)$ message complexity and  $\bigO(\lambda n^3)$ communication complexity with only bulletin PKI. 
	As illustrated in Table \ref{tab:comp}, it closes the $\bigO(n)$ gap between the message and the communication complexities in the earlier private-setup free $\ABA$ protocols such as CKLS02 \cite{cachin2002asynchronous}, while preserving other benefits such as the maximal $n/3$ resilience and the optimal expected constant  rounds. Even comparing with a  recent work due to Abraham et al. \cite{abraham2021reaching} that presents a more efficient $\VBA$ construction and improves $\ABA$ as a by-product,\footnote{Remark that there might exist efficient reduction from  $\ABA$ to $\VBA$  in the   public key infrastructure setting, which was discussed in \cite{cachin2001secure}. 
	Therefore, the recent private-setup free $\VBA$ protocol in \cite{abraham2021reaching} also improves $\ABA$.} our approach still realizes a $\log n$ factor improvement.

	\smallskip
	\item We also realize $\bigO(\lambda n^3)$-bit and $\bigO(1)$-round   {\bf random leader election and   $\VBA$ with only    PKI}  in the asynchronous setting, assuming SXDH assumption and random oracle. 
	
	At the core of this contribution, 
	we   use one single $\ABA$ protocol to lift  our   common coin and clean up the possible disagreement among honest parties,
	and then obtain an efficient random leader election protocol with reasonable fairness and also {\em perfect agreement} in the absence of private setups.
	%
	The leader election protocol costs   expected $\bigO(\lambda n^3)$ bits and expected constant asynchronous rounds, and can directly be plugged in all existing $\VBA$ protocols (i.e., multi-valued Byzantine agreement with external validity) \cite{cachin2001secure,abraham2018validated,lu2020dumbo,guo2022speeding,gelashvili2021jolteon} to replace its   counterpart relying on private setups. 
	%
	The resulting $\VBA$ protocols can realize   the maximal $n/3$ resilience and optimal expected constant rounds, with costing expected $\bigO(n^3)$ messages and  $\bigO(\lambda n^3)$ bits. As shown in Table \ref{tab:comp}, this construction closes the $\bigO(\log n)$ gap between the message and the communication complexities of $\VBA$ protocols. 

	\smallskip
	\item   Along the way, we develop a set of crucial techniques that could be of independent interests.
	
	We set forth a new  primitive called weak core-set selection ($\wcs$) to simplify the cumbersome  component of information gather in CR93 \cite{canetti1993fast} and AJM+21 \cite{abraham2021reaching}. Recall that information gather is a multi-sender extension of reliable broadcast \cite{bracha1987asynchronous}, such that each   party reliably broadcasts a value and then outputs a set of values that is a superset of some $(n-f)$-sized core-set. Selecting a core-set out of $n$   broadcasted values requires another $2n$   reliable broadcasts in \cite{canetti1993fast}.
	We conceptually weaken the primitive in a way that     $f+1$ honest parties (instead of all honest parties) are ensured to output a superset of the core-set. 
	This appropriate weakening   significantly simplifies the protocol   (i.e., replace a couple of reliable broadcasts by only two multicast rounds), and still  it is a powerful building block, from which we can implement efficient   common coin in the PKI setting.
	
	We also give an efficient $\AVSS$ construction (satisfying the classic CR93 notion \cite{canetti1993fast}) with only bulletin PKI setup (under the discrete logarithm assumption). The $\AVSS$ protocol is adaptively secure, and costs only  $\bigO(n^2)$ messages and $\bigO(\lambda n^2)$ bits when sharing a $\lambda$-bit secret. 
	Prior art with the same communication complexity either relies on private setup \cite{backes2013asynchronous,kate2019brief,kate2010constant} or incurs at least $\bigO(\lambda n^3)$ bits \cite{cachin2002asynchronous,backes2011computational} (except two  recent work \cite{alhaddad2021high,yurek2021hbacss}, yet  they still have an extra $\log n$ factor).
	
\end{itemize}

\ignore{
\begin{table}[h!]
		\centering
	\begin{scriptsize}
	\caption{Complexities of private-setup free asynchronous protocols} 
	\label{tab:comp}

	\begin{tabular}{c|c|c|c|c|c|c|c}
		\hline\hline\rule{0pt}{8pt}
		\multirow{2}{*}{\begin{tabular}[c]{@{}c@{}}  {\tabincell{c}{ \\ Primitive}}\end{tabular}} 
		& \multicolumn{4}{c|}{\bf Earlier Results}                                                                                                                                                            
		& \multicolumn{3}{c}{\bf Our Result}                                                                                             \\ \cline{2-8} \rule{0pt}{12pt}

		&  Studies  
		& \begin{tabular}[c]{@{}c@{}}Comm. \end{tabular} 
		& \begin{tabular}[c]{@{}c@{}}Round \end{tabular} 
		& \begin{tabular}[c]{@{}c@{}}Assumption, \\ Setups \end{tabular} 
		& \begin{tabular}[c]{@{}c@{}}Comm. \end{tabular} 
		& \begin{tabular}[c]{@{}c@{}}Round \end{tabular} 
		& \begin{tabular}[c]{@{}c@{}}Assumption, \\ Setups \end{tabular} 
		\\ \hline\hline\rule{0pt}{12pt} 
		
		\multirow{5}{*}{\tabincell{c}{\bf Common  \\ \bf Coin }  }                             
		& CKLS02 \cite{cachin2002asynchronous}       & $\bigO(\lambda n^4)$         & $\bigO(1)$       &  \tabincell{c}{Dlog+hash, \\  gParam$^*$ }                                     
		& \multirow{5}{*}{$\bigO(\lambda n^3)$}      & \multirow{5}{*}{  $\bigO(1)$}   & \multirow{5}{*}{\tabincell{c}{RO, \\  PKI+gParam }}    \\ \cline{2-5}\rule{0pt}{12pt}
		& KMS20 \cite{kokoris2020asynchronous}       & $\bigO(\lambda n^4)$         & $\bigO(n)$   &    \tabincell{c}{RO+DDH, \\  gParam }   &  & \\ \cline{2-5}\rule{0pt}{12pt}
		& AJM+21 \cite{abraham2021reaching}$^\dagger$        & $\bigO(\lambda n^3 \log n)$         & $\bigO(1)$   &  \tabincell{c}{RO+SXDH, \\ PKI+gParam }      & &  \\ \hline\hline\rule{0pt}{12pt}

		\tabincell{c}{\bf Random   \\ \bf Leader Elect \\ /w agreement }
		& AJM+21 \cite{abraham2021reaching}$^\dagger$  & $\bigO(\lambda n^3 \log n)$   & $\bigO(1)$  &  \tabincell{c}{RO+SXDH, \\  PKI+gParam }  & $\bigO(\lambda n^3)$   & $\bigO(1)$ &  \tabincell{c}{RO,   \\ PKI+gParam } \\ \hline\hline\rule{0pt}{11pt}

		\multirow{3}{*}{\tabincell{c}{\bf Async. \\ \bf Binary \\ \bf Agreement  }   }
		& CKLS02 \cite{cachin2002asynchronous}      & $\bigO(\lambda n^4)$     & $\bigO(1)$       &    \tabincell{c}{Dlog+hash, \\  gParam }                             
		& \multirow{2}{*}{$\bigO(\lambda n^3)$}   & \multirow{2}{*}{$\bigO(1)$}       & \multirow{3}{*}{\tabincell{c}{RO+SXDH, \\ PKI+gParam }}  \\ \cline{2-5}\rule{0pt}{12pt}
		& AJM+21 \cite{abraham2021reaching}$^\ddagger$    & $\bigO(\lambda n^3 \log n)$     & $\bigO(1)$    &  \tabincell{c}{RO+SXDH, \\ PKI+gParam } &   & \\ \hline\hline\rule{0pt}{12pt}


		\tabincell{c}{\bf Async. \\ \bf Validated \\ \bf Agreement }    & AJM+21 \cite{abraham2021reaching}      & $\bigO(\lambda n^3 \log n)$    & $\bigO(1)$  &  \tabincell{c}{RO, \\  PKI+gParam }   & $\bigO(\lambda n^3)$    & $\bigO(1)$ &  \tabincell{c}{RO, \\  PKI+gParam }     \\ \hline\hline\rule{0pt}{12pt}
			
		\multirow{3}{*}{\tabincell{c}{\bf Async. \\ \bf DKG}}   
		& KMS20 \cite{kokoris2020asynchronous}    & $\bigO(\lambda n^4)$     & $\bigO(n)$   & \tabincell{c}{RO+DDH, \\  gParam }                                      
		& \multirow{3}{*}{$\bigO(\lambda n^3)$}   & \multirow{3}{*}{$\bigO(1)$}   &  \multirow{3}{*}{\tabincell{c}{RO, \\  PKI+gParam }}       \\ \cline{2-5}\rule{0pt}{12pt}
		& AJM+21 \cite{abraham2021reaching}    & $\bigO(\lambda n^3 \log n)$     & $\bigO(1)$    &  \tabincell{c}{RO, \\  PKI+gParam }    &  &    \\ \hline\hline

	\end{tabular}
	{   
		
	\begin{itemize}
		\item[$^*$] The global parameters might capture some minimal setups such as an agreed group description and randomly chosen group generators. For some  schemes relying on   collision resistance of   hash  function \cite{cachin2002asynchronous,kokoris2020asynchronous}, a  one-time common random string   is also needed to key  the hash function.
		\item[$\dagger$]
		Note that AJM+21 \cite{abraham2021reaching} did not present an explicit construction  for   reasonably fair common coin and leader election. Nevertheless, this   recent work gave an asynchronous distributed key generation protocol that costs  expected  $\bigO(\lambda n^3 \log n)$ bits and  constant rounds, which   can   bootstrap    threshold verifiable random function and   thus   set up common coin and leader election schemes. 
		\item[$\ddagger$] AJM+21   $\VBA$   implies asynchronous binary agreement ($\ABA$) with same complexities, because  there is a simple complexity-preserving reduction from $\ABA$ to $\VBA$ in the PKI setting, cf. \cite{cachin2001secure}.

	\end{itemize}
	}
	\end{scriptsize}
\end{table}
}




\ignore{
\begin{table}[]
	\centering
	\begin{tabular}{c|cc|cc}
		\hline\hline
		& \multicolumn{2}{c|}{\begin{tabular}[c]{@{}c@{}}Prior art from\\ private-setup free AVSS \end{tabular}}                                   & \multicolumn{2}{c}{\begin{tabular}[c]{@{}c@{}}Our results from \\ private-setup free AVSS \end{tabular}}                   \\ \cline{2-5} 
		& \begin{tabular}[c]{@{}c@{}}Comm.\\ Complexity\end{tabular}                & \begin{tabular}[c]{@{}c@{}}Running\\ Time\end{tabular} & \begin{tabular}[c]{@{}c@{}}Comm.\\ Complexity\end{tabular} & \begin{tabular}[c]{@{}c@{}}Running\\ Time\end{tabular} \\ \hline
		Common Coin \\ (Probable Agreement) & $\lambda n^4$  & $n$                                                    &$\lambda n^3$                                   & $1$                                   \\  \hline
		Leader Election \\ (Perfect Agreement)             & -                                                                                 & -                                                      & $\lambda n^3$                                                      & $1$                                                    \\ \hline
		Binary agreement (ABA)                          & $\lambda n^4$                                                                     & $1$                                                    & $\lambda n^3$                                                      & $1$                                                    \\ \hline
		$n$  ABAs in Parallel           & $\lambda n^4$  $^\star$                                                                  & $n$ $^\star$                                                   & $\lambda n^4$                                                      & $\log n$                                               \\ \hline
		Validated Agreement                         & $\lambda n^4$ $^\star$                                                                    & $n$ $^\star$                                                   & $\lambda n^3$                                                      & $1$                                                    \\ \hline\hline
	\end{tabular}
\end{table}

\begin{table}[]
	\begin{tabular}{c|cc|cc}
		\hline
		& \multicolumn{2}{c|}{\begin{tabular}[c]{@{}c@{}}Prior art from AVSS\\ without private setup\end{tabular}}                                   & \multicolumn{2}{c}{\begin{tabular}[c]{@{}c@{}}Our results from AVSS\\ without private setup\end{tabular}}                   \\ \cline{2-5} 
		& \begin{tabular}[c]{@{}c@{}}Communication\\ Complexity\end{tabular}                & \begin{tabular}[c]{@{}c@{}}Running\\ Time\end{tabular} & \begin{tabular}[c]{@{}c@{}}Communication\\ Complexity\end{tabular} & \begin{tabular}[c]{@{}c@{}}Running\\ Time\end{tabular} \\ \hline
		\multirow{2}{*}{Common Coin} & \begin{tabular}[c]{@{}c@{}}$\lambda n^4$\\ (amortized $\lambda n^3$)\end{tabular} & $n$                                                    & \multirow{2}{*}{$\lambda n^3$}                                     & \multirow{2}{*}{$1$}                                   \\ \cline{2-3}
		& $\lambda n^4$                                                                     & $1$                                                    &                                                                    &                                                        \\ \hline
		Leader Election              & -                                                                                 & -                                                      & $\lambda n^3$                                                      & $1$                                                    \\ \hline
		Binary agreement (ABA)                          & $\lambda n^4$                                                                     & $1$                                                    & $\lambda n^3$                                                      & $1$                                                    \\ \hline
		$n$  ABAs in Parallel           & $\lambda n^4$                                                                     & $n$                                                    & $\lambda n^4$                                                      & $\log n$                                               \\ \hline
		Validated agreement                         & $\lambda n^4$                                                                     & $n$                                                    & $\lambda n^3$                                                      & $1$                                                    \\ \hline
	\end{tabular}
\end{table}
}

\subsection{Challenges and our techniques}

\smallskip
\noindent
{\bf Remaining efficiency hurdles}. 
To flip coins, both CR93 and AJM+21 let each party commit an unbiased   secret gathered from enough parties.  
In CR93, everyone plays a role of ``delegate'' for each party to share a secret through $\AVSS$,	
then everyone picks and commits $n-f$ secrets from distinct ``delegates'',  the aggregation of which is uniformly distributed. 
In AJM+21, each party again plays a role of ``delegate'' to choose a random secret for everyone, which is hidden in form of aggregatable public verifiable secret sharing  ($\PVSS$).
Every party now combines $n-f$ $\PVSS$ scripts from distinct ``delegates'' to obtain an aggregated  $\PVSS$, which also  hides an unbiased secret.
Then, the aggregated $\PVSS$ script, which  has $\bigO(\lambda n)$ bits,  can be committed via a reliable broadcast (the broadcast of $\PVSS$ can be analog to   $\AVSS$, though no explicit $\AVSS$ invocation). 

After enough parties commit their unbiased secrets,   it invokes a procedure to select a core-set of these secrets. That means,   all parties would choose a set of indices corresponding to some indeed committed (yet unknown) secrets, and more importantly, the honest parties' choices  shall have a large enough intersection (called core-set).  
Hence, a simple trick to flip the coin can be imagined: each party just reconstructs the secrets, then the lowest bit of the largest reconstructed secret   becomes the coin. This works because the largest secret has a constant probability to appear in the  core-set. 
Usually,    such a core-set  is obtained via several reliable  broadcasts \cite{attiya2004distributed,abraham2010fast,canetti1993fast}, whose input is a  set of $\bigO(n)$ indices.  

As such, further reducing the communication  of the  CR93 and AJM+21 frameworks  seems challenging, because every party    reliably broadcasts and/or verifiably shares at least $\bigO(\lambda n)$ bits. 
While, on the other side, although KMS20 gives a workaround to the above steps 
and reduces the number of secrets to share by an $\bigO(n)$ order, it on the contrary causes  
   slow termination of  $\bigO(n)$    rounds.

\begin{figure}[h] 
	\begin{center}
		\vspace{-0.5cm}
		\captionsetup{font={footnotesize}}
		\includegraphics[width=14cm]{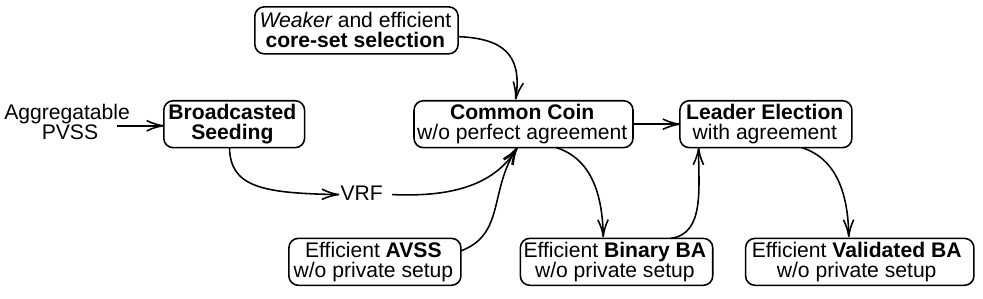}\\
		\vspace{-0.2cm}
		\caption{Our new path to constructing efficient asynchronous common randomness and   Byzantine agreement protocols in the absence of private setups.}	\label{fig:technique}
	\end{center}
	\vspace{-0.8cm}
\end{figure}

\smallskip
\noindent
{\bf Techniques to efficient common coin from PKI}.
We present {\em a collection of new techniques} to circumvent  the efficiency hurdles lying in the phases of committing secrets and selecting core-set, as illustrated in Figure \ref{fig:technique}. 
In greater detail, we proceed as:

	\smallskip
	\noindent
	\underline{\smash{\em An efficient construction of private-setup free $\AVSS$}}.
	The underlying $\AVSS$    shall cost at most quadratic communication, if we aim at  common coin with cubic communication.  
	We realize this by lifting Pedersen's verifiable secret sharing \cite{pedersen1991non} to  asynchronous network, through exploiting the wisdom of hybrid secret sharing \cite{krawczyk1994secret}. First, the dealer just collects $n-f$ signatures from distinct parties for the same Pedersen's polynomial commitment \cite{pedersen1991non}. 
	Such that these signatures can ensure that at least $f+1$ honest parties receive the  same commitment binding a unique encryption key.  
	Then, the dealer uses the key to encrypt its actual secret, and leverages the $n-f$ solicited signatures to convince the honest parties to participate into a reliable broadcast for the ciphertext. By the design, it  avoids reliably broadcasting the large polynomial commitment to the whole network.   
	
 \smallskip
\noindent
\underline{\smash{\em Weakening   Core-Set selection for more efficient construction}}. 
	As earlier pointed out, the selection of core-set is important to harvest some reasonable probability to make the honest parties output some common randomness. And we further tackle the problem of how to efficiently attain core-set.
	Our main observation is that if the core-set primitive is used to pick some confidentially shared  secrets   
	that come with some ``proofs'' attesting the unbiased generation, e.g., VRFs,
	we might slightly weaken the primitive. I.e.,  only $f+1$ honest parties obtain the core-set, instead of all honest parties. 
	This works because if  the largest VRF appears in this weaker core-set, $f+1$ parties can reconstruct this largest VRF, and then multicast it to let all parties accept it. 
	The weaker notion can be easily designed in the PKI setting: each party multicasts its input set, and then waits for $n-f$ signatures returned by distinct parties, 
	indicating that at least $f+1$ honest parties   have a superset. 
	This becomes an efficient workaround to avoid $\bigO(n)$ reliable broadcasts in the conventional  core-set selection \cite{canetti1993fast,abraham2021reaching,abraham2010fast,attiya2004distributed}. 
	

	%

	 \smallskip
	\noindent
	\underline{\smash{\em A broadcast version ``coin'' to patch VRF}}. 	
	With efficient  $\AVSS$ and core-set selection at hand,
	it is   enticing to let every party confidentially share its own $\VRF$ evaluation with proof via $\AVSS$ (instead of gathering an  unbiased secret from many parties). 
	Then, the weak core-set selection can ensure that at least  $f+1$ honest parties get the core of $n-f$ shared VRFs. 
	As such,   the largest VRF seemingly can appear in the core-set with a constant probability,  
	and  its   least significant bits would naturally become the common coin. 
	
	However, the above seemingly appealing idea  does not   work in the bulletin PKI setting,
	because if the VRFs are evaluated on deterministic seeds.
	Corrupted parties can simply exploit malicious key generation to bias its own VRF, for example: just run key generation for polynomial times, choose the most favorable VRF key pair,
	and register    the verification key at  PKI. As such, VRFs evaluated by corrupted parties gain great advantage to be the largest. Thus, the adversary at least can nearly always know the flipped coin in advance.
	
	To patch VRF with providing an unpredictable seed,
	we set forth a notion called reliable broadcasted seeding ($\seedgen$) and construct it from aggregatable  $\PVSS$ \cite{gurkan2021aggregatable}. 
	To some extent, $\seedgen$  
	can be viewed as a broadcast version   common coin, which 
	``broadcasts'' an unpredictable VRF seed and is led by the party who evaluates the VRF. 
	Moreover, as long as an honest party gets the VRF seed from the $\seedgen$ protocol, all honest parties would   do so, and thus all parties can  verify the leader's VRF evaluation on the seed. 
	On the other side, if a corrupted $\seedgen$ leader intentionally blocks the protocol, no honest party can verify its VRF evaluation, 
	and this actually ``harm'' the corrupted leader itself, because no honest party would support to solicit it into the core-set.
	
	Putting everything together, we   design a novel $\coin$ framework,  
	Then a weak core-set can gather $n-f$  $\AVSS$es that share  VRFs (patched by $\seedgen$). 
	With 1/3 probability,  the largest VRF appears in the core and   is   also evaluated by some honest party, and its lowest bits   become  the flipped coin. 

	\smallskip
	\noindent
	{\bf Lifting   agreement for   Leader $\elect$}.   
	Although   common coin   is a   powerful tool  to enable asynchronous Binary agreement, it   faces a few barriers for implementing the interesting class of validated Byzantine agreements ($\VBA$), thus missing the key step to reduce the communication complexity of expected constant-round $\ADKG$   \cite{abraham2021reaching}. 
	The main issue stems from the fact that   most existing $\VBA$s \cite{cachin2001secure,abraham2018validated,lu2020dumbo} require a leader election  with at least one significant strengthening  relative to common coin, i.e., 
	ensuring agreement all the time.

	
	Noticing that our  $\coin$ protocol only lacks agreement in some unlucky cases (when the largest VRF does not appears in the core-set), 
	we introduce a single $\ABA$ instance along with a  set of voting rules to ``detect''  the possible disagreement, thus lifting $\coin$  to attain perfect agreement.
	
	In particular, everyone reliably broadcasts the speculative largest VRF heard at the end of   $\coin$ execution, and then waits for $n-f$ such broadcasted VRFs.
	If there exists a majority VRF (out of the $n-f$ received) that is also the largest, they vote 1 to $\ABA$; otherwise, they vote 0.
	When $\ABA$ returns 0, it is okay for the honest parties to elect a default leader, e.g., the first party;
	while if $\ABA$ outputs 1, 
	any two honest parties shall not have two distinct VRF evaluations that is both largest and majority,
	because the VRF evaluation satisfying the rules must be unique, so they can just select the leader according to the lower bits of this VRF evaluation.
	This fixes the lack of agreement in common coin without incurring extra factors in the asymptotic complexities.


\subsection{Application scenarios}

$\ABA$ and $\VBA$ protocols are at the core of many   asynchronous fault-tolerant   systems (e.g., BFT state-machine replication and robust MPC service) deployed over the unstable wide-area network \cite{miller2016honey,guo2020dumbo,lu2019honeybadgermpc}. Our techniques can remove unpleasant setup assumptions of them, and therefore have a broad array of applications.
Here are a couple of typical examples:

\smallskip
\noindent
{\bf Fast  asynchronous DKG with cubic communication}.  Our new path to   efficient $\VBA$ protocols can   be used to replace the $\VBA$ instantiation in AJM+21  \cite{abraham2021reaching}  to improve   the same paper's asynchronous distributed key generation ($\ADKG$) protocol, thus reducing the communication cost from $\bigO(\lambda n^3 \log n)$ to $\bigO(\lambda n^3)$ with preserving all other benefits such as fast termination in expected constant   rounds and optimal   resilience, cf.   Section \ref{sec:application} for details.

\smallskip
\noindent
{\bf Asynchronous random beacon without DKG}. 
Our random leader election protocol can be easily adapted to realize  randomness beacon, which can continually output a sequence of unbiased random strings in the asynchronous network.
Here an unbiased random string means that its distribution  is uniform (taken over all possible executions) despite the adversary.
Conceptually, the construction only has to sequentially run many leader election protocols, and thus preserves expected $O(\lambda  n^3)$ communication   and $O(1)$ rounds.
Different from prior asynchronous randomness beacon \cite{cachin2000random} that has to run $\ADKG$ \cite{abraham2021reaching,kokoris2020asynchronous,dasasynchronous} to bootstrap, our implementation does not go to $\ADKG$, thus being more friendly for dynamic join and leave, cf.   Section \ref{sec:application} for more details.



\section{Other Related Work}
Byzantine agreement was   introduced by Shostak, Pease and Lamport \cite{lamport1982byzantine}. Since then, it has been extensively studied in various settings with different flavors \cite{Fitzi2003EfficientPP,Fischer2005EasyIP,Dwork1988ConsensusIT,abraham2019communication,dolev1985bounds,bitcoin,david2018ouroboros,Katz2006OnEC,blum2020asynchronous}. 
In the asynchronous network, the Byzantine agreement problem has to be solved via randomized protocols, and was initially studied by Ben-Or \cite{ben1983another} and Rabin \cite{rabin1983randomized}, respectively.

Kokoris-Kogias et al. \cite{kokoris2020asynchronous} and Abraham et al. \cite{abraham2021reaching} 
also lifted their private-setup free Byzantine agreement protocols towards asynchronous distributed key generation ($\ADKG$), thus emulating the trusted dealer of threshold cryptosystems against asynchronous adversary. The rationale behind $\ADKG$ is that it might amortize the cost of asynchronous   protocols, if the   threshold cryptosystem is used for many times. 
The latter study (AJM+21) is the state-of-the-art $\ADKG$ approach with using private-setup free $\VBA$ as the core building block. 
Our new path to $\VBA$    can directly reduce  its communication cost by a $\log n$ factor.

Cohen et al. \cite{cohen2020not} constructed an asynchronous common coin   with sub-optimal resilience using VRFs, but it uses pre-determined nonce for  VRFs. So it actually takes an implicit assumption that either a trusted third-party performs honest VRF key generations on behalf of all parties or a common random string can be provided after all parties register at PKI.
G{\k{a}}gol et al. \cite{gkagol2019aleph} presented how to efficiently     re-configuring     asynchronous random beacon  by running over   permissioned directed acyclic graph (DAG). However, their bootstrapping might similarly need a common random string (generated after the registration of   parties). 
We insist on more  stringent setting without on-line common random strings.
In addition, our leader election  can be plugged in such  DAG-based     protocols \cite{gkagol2019aleph,keidar2021all} to make them efficient and   private-setup free.

There are many $\AVSS$ protocols \cite{backes2013asynchronous,kate2019brief} that can realize optimal communication with relying on private setups.
Prior to this paper and a concurrent work \cite{dasasynchronous}, the best known result are a couple of  recent studies \cite{alhaddad2021high,yurek2021hbacss} that incurs $\bigO(n^2)$ messages and $\bigO(\lambda n^2 \log n)$ bits for $\lambda$-bit input secret. 
Some $\AVSS$ protocols \cite{kate2019brief,alhaddad2021high} also focus on linear amortized  communication for sufficiently large input secret, 
but they remain to exchange quadratic messages and bits while sharing a short secret. Our $\AVSS$ can easily combine the information dispersal technique \cite{cachin2005asynchronous} to realize the same linear amortized  communication.

In the informational theoretic setting, a few interesting studies explored asynchronous leader election and byzantine agreement \cite{kapron2010fast,king2013byzantine}, but they realized sub-optimal resilience   and non-constant asynchronous rounds with a security   guarantee of only $1-1/poly(n)$. Our results and the prior art \cite{abraham2021reaching,cachin2005asynchronous} in the computational setting can attain optimal $n/3$ resilience and (expected) constant asynchronous rounds, with overwhelming probability $1-1/superploy(n, \lambda)$, conditioned on the hardness of underlying cryptographic assumptions.

Common coin  can also  circumvent the Dolev-Strong bound \cite{dolev1983authenticated} to fasten the termination of (partially) synchronous protocols \cite{dinsdale2020afgjort,abraham2019synchronous,Katz2006OnEC,feldman1997optimal}.
Technique-wise, Afgjort \cite{dinsdale2020afgjort} can be thought of our counterpart in the (partially) synchronous setting with the extra help from on-line common random strings,
it also gathers VRFs in a core-set, and then uses the least significant bits of the maximal VRF to toss coins.
However, Afgjort explicitly relies on the (partial) synchrony assumption to wait for that the largest VRF appears in the core-set.
In contrast, we make two significant enhancements to adapt our private-setup free asynchronous setting. First, unpredictable seeds are generated on the fly to patch VRFs due to the lack of on-line common random string. Second, $\AVSS$ and a special  asynchronous core-set selection protocols are designed to ensure that honest parties' VRFs are not leaked until a large enough core-set is fixed.

\ignore{

Kokoris-Kogias et al. \cite{kokoris2020asynchronous} and Abraham et al. \cite{abraham2021reaching}  
also lifted their  Byzantine agreement protocols towards asynchronous distributed key generation (ADKG), thus emulating the trusted dealer of threshold cryptosystems against asynchronous adversary. 
The latter study (AJM+21) presented a  more  efficient ADKG with using private-setup free $\VBA$ as the core building block. Our new path to $\VBA$    can reduce  the communication cost of AJM+21 ADKG by a $\log n$ factor,   resulting in an ADKG  protocol with expected constant rounds and $\bigO(\lambda n^3)$ communication. 


Cohen et al. \cite{cohen2020not} constructed an asynchronous common coin   with sub-optimal resilience using VRFs, but it uses pre-determined nonce for  VRFs. So it actually takes an implicit assumption that a trusted third-party performs honest VRF key generations on behalf of all parties, otherwise, malicious parties can easily run  a polynomial times of VRF key generations to choose the most preferable VRF keys and break their design. G{\k{a}}gol et al. \cite{gkagol2019aleph} presented how to efficiently     re-configuring     asynchronous random beacon  by running over   permissioned directed acyclic graph (DAG). However, their bootstrapping might need a common random string (generated after the registration of   parties), which is unavailable in our bulletin PKI setting. Remarkably, our leader election  can be plugged in such  DAG-based     protocols \cite{gkagol2019aleph,keidar2021all} to make them efficient and   private-setup free.

There are many $\AVSS$ protocols \cite{backes2013asynchronous,kate2019brief} that can realize optimal communication with relying on private setups.
Prior to this paper and a concurrent work \cite{dasasynchronous}, the best known result are a couple of  recent studies \cite{alhaddad2021high,yurek2021hbacss} that incurs $\bigO(n^2)$ messages and $\bigO(\lambda n^2 \log n)$ bits for $\lambda$-bit input secret. 
%
Some $\AVSS$ protocols \cite{kate2019brief,alhaddad2021high} also focus on linear amortized  communication for sufficiently large input secret, 
but they remain to exchange quadratic messages and bits while sharing a short secret. Our $\AVSS$ can easily combine the information dispersal technique \cite{cachin2005asynchronous} to realize the same linear amortized  communication.

In the informational theoretic setting, a few interesting studies explored asynchronous leader election and byzantine agreement \cite{kapron2010fast,king2013byzantine}, but they realized sub-optimal resilience   and non-constant asynchronous rounds, with high-probability   guarantee (e.g. $1-1/poly(n)$) at best. Our results and the prior art \cite{abraham2021reaching,cachin2005asynchronous} in the computational setting can attain optimal $n/3$ resilience and (expected) constant asynchronous rounds, with overwhelming probability ($1-1/superploy(n, \lambda)$). 
}

\medskip
\noindent
{\bf Note on results in a concurrent work \cite{dasasynchronous}}. A concurrent work from Das et al. \cite{dasasynchronous} presented the technique of asynchronous data dissemination (ADD) to improve the efficiency of reliable broadcast and relevant protocols such as $\AVSS$.
It can reduce the communication    of the specific AJM+21 $\VBA$ protocol  to $\bigO(n^3)$ by   replacing the reliable broadcast building block.
It also has applications to reduce the communication   of AJM+21 ADKG to cubic. Though Das et al.'s reliable broadcast and some of their proposed $\AVSS$ protocols can be adaptively secure,  their applications to   $\VBA$ and ADKG are also in the random oracle model against only static corruptions. 

Different from Das et al. that focused on  improving the broadcast components, we present a set of very different techniques to simplify  the protocol structures of common coin and random leader election, which are the basis protocols of fast-terminating Byzantine agreement.  
In particular, our common coin and leader election can be directly plugged into any Byzantine agreement protocols that requires such a building block to improve the efficiency of their private-setup free variants, while ADD only explicitly helps the specific AJM+21 $\VBA$   (e.g., if without our results or future studies on asynchronous common randomness). 
In addition, our leader election protocol can be adapted into a reconfiguration-friendly random beacon protocol with DKG, while  Das et al.'s results can only  bootstrap random beacon protocol through DKG. 
Moreover, it might be interesting to explore the new design space provided by the combination of ADD and our techniques towards practical private-setup free asynchronous protocols.




\section{Models}
\label{sec:model}

{\bf Fully asynchronous system without private setup}. 
There are $n$ designated parties, each of which has a unique identity (i.e., $\node_1$ through $\node_n$)  known by everyone.
Moreover, we consider the fully-meshed asynchronous message-passing model with   Byzantine corruptions and bulletin public key infrastructure (PKI). 
In particular, our system and threat models can be detailed as:
\begin{itemize}
	\item {\em Bulletin PKI.}  There exists a PKI functionality that can be viewed as a bulletin board, such that each party $\node_i \in \{\node_j\}_{j \in [n]}$ can register some public keys (e.g., the verification key of digital signature) bounded to its identity via the PKI before the start of protocol. 
	Once a public key is registered, we assume all parties can receive them immediately from the PKI.
	
	\item {\em  Computing model.} 
	Following \cite{cachin2001secure,abraham2018validated} and modern cryptographic practices, we 
	let the $n$ parties and the adversary $\adv$ be probabilistic polynomial-time interactive Turing machines (ITMs).
	A party $\node_i$ is an ITM defined by the given protocol:
	it is activated upon receiving an incoming message to carry  out  some polynomial steps of computations,  update  its   states,  possibly  generate  some  outgoing  messages, and wait for the next  activation.
	Moreover, we explicitly require the bits of the messages generated by honest parties to be probabilistic uniformly bounded by a polynomial in the security parameter $\lambda$, 
	which naturally rules out infinite protocol executions and thus restrict the running time of the adversary through the entire protocol. 

	\item {\em Up to $n/3$   Byzantine corruptions.}  
	The adversary can  choose up to $f$  out of $n$ parties to corrupt and fully control. 
	No asynchronous BA protocols can tolerate more than $f = \lfloor (n-1)/3 \rfloor$ such Byzantine corruptions, so this is the optimal resilience.
	We also consider that the adversary can   control  the corrupted parties to  generate their key materials maliciously, which captures that the compromised parties might exploit advantages while registering public keys at   PKI.

	\item {\em Fully asynchronous network.} 
	We assume that there exists an established     p2p channel between any two   parties.
	The  channels are considered as {\em secure}, which means the adversary cannot modify or drop the messages sent between honest parties 
	and   it is computationally infeasible for the adversary to learn any information of the messages except their lengths.
	Moreover, the adversary must be consulted to approve the delivery of     messages, namely,  
	it can arbitrarily delay and reorder   messages. 
	Here we assume asynchronous secure channels (instead of merely  reliable asynchronous channels) for presentation simplicity, and they   can be obtained from the bulletin PKI through authenticated key exchange, and therefore are not extra assumptions.
	

	\item  {\em Miscellany.} All system parameters, such as $n$, are (probably unfixed) polynomials in  the security parameter  $\lambda$ \cite{cachin2001secure,abraham2018validated,lu2020dumbo}.
	
\end{itemize}

\noindent
{\bf Quantitative performance  metrics}. 
Since we are particularly interested in constructing efficient asynchronous  protocols, e.g., for generating common randomness or reaching consensus, without private setup, it becomes needed to introduce quantitative metrics to define the term ``efficiency'' in the context.
To this end, we consider the following widely adopted notions to quantify the performance of  protocols in the asynchronous network:

\begin{itemize}
	\item {\em Communication complexity} is defined as the   bits of all messages exchanged among honest parties  during  a protocol execution. Sometimes, an asynchronous protocol  might have randomized executions, so we  might consider  the upper bound of expected communication complexity (averaged over all possible executions) under the influence of adversary.
	
	\item {\em Message complexity} captures the  number of messages exchanged among honest parties in a protocol  execution. 
	Similar to communication's, we sometimes might consider the upper bound of expected message complexity.
	
	\item {\em Asynchronous  rounds}. 
	The eventual delivery of   asynchronous network   might  cause that the protocol execution is somehow independent to ``real time''.
	Nevertheless, it is  needed to characterize the running time of asynchronous protocols. 
	A standard way to do so  is: for each message the adversary  assigns a virtual round number $r$, subject to the condition that any $(r-1)$-th round messages between any two correct parties must be delivered before any $(r+1)$-th round message is sent \cite{canetti1993fast}.
	We then can measure the running time by counting such  asynchronous ``rounds''. 
	
\end{itemize}

\noindent
{\bf Note on ``private-setup free''}. More precisely, our private-setup free model   admits bulletin PKI  with some system parameters that are just group descriptions and random group generators. These  parameters are indeed one-time setup (belonging to a global system  instead of just for ours), and are usually ignored in the literature \cite{abraham2021reaching,dasasynchronous}. 
Except that, we   consider     other structured common reference strings such as that of KZG polynomial commitment \cite{kate2010constant} fall  into the category of private setups. We  might consider that  no one can  stay  on-line to provide a trusted common random string after  PKI registration, which is the subtle reason why we have to patch VRF with a broadcasted seeding protocol. 

\medskip
\noindent
{\bf Note on static/adaptive adversary}. Some of our results (e.g., our $\AVSS$ protocol) can be secure against an adaptive adversary that can corrupt up to $f$ parties while the protocol is running.
While our common coin and random leader election protocols are secure in the static model, in which the adversary is restricted to corrupt parties before the protocol starts.
However, this is only because the existing aggregatable $\PVSS$ scheme is not proven to be adaptively secure (which actually is the same reason why AJM+21 \cite{abraham2021reaching} is statically secure).
To demonstrate that, we can introduce a one-time online common random string  assumption,   thus avoid the broadcasted seeding protocol that relies on $\PVSS$, and then show that our common coin (as well as random leader election)  become adaptively secure.
Namely, we can assume a trusted one-time randomness that is announced after PKI registration but before protocol execution, and adaptive security can be realized by our protocols in the setting, as 
$\PVSS$ is no longer needed, cf. more detailed discussions in Section \ref{sec:coin}.

Moreover, the assumption of adaptively secure private channels can be easily realized by existing techniques (e.g., in the erasure model \cite{canetti1999adaptive}).


\ignore{

\subsection{Defining the Security Goals}

\subsubsection{Asynchronous Verifiable Secret Sharing }
\begin{itemize}
	\item {\bf Termination }. If the dealer is honest, then all parties complete the Sh phase and output. 
	\item {\bf Agreement}.  If some honest party output in the Sh phase, then all honest parties output.
	\item {\bf Commitment}.  For any dealer, there exists a value $s^*$ can be fixed when some honest party complete the Sh phase. All honest parties can reconstruct $s^*$ at the end of the Rec phase. 
	\item {\bf Correctness}.  If the dealer is honest, the reconstructed value $s^*$ is equal to the dealer's secret s .
	\item {\bf Secrecy}.  If the dealer is honest, the adversary can not learn the secret during the Sh Phase .
\end{itemize}

	\noindent 
	{\bf Secrecy Game}. 
	{\em 
		The Secrecy game between an adversary $\adv$ and a challenger $\cha$ is defined as follows to capture the secrecy threat in the $\AVSS$ scheme among $n$ parties with up to $f$ static corruptions in the PKI setting:
		\begin{enumerate}
			\item The adversary $\adv$ chooses a set $\bar{Q}$ of up to $f$ parties to corrupt (which excludes the dealer), generates the secret-public key pairs for each corrupted party in $\bar{Q}$, and sends $\bar{Q}$ along with all relevant public keys to the challenger $\cha$.
			\item The challenger $\cha$ generates the secret-public key pair  for every honest party in $[n] \setminus \bar{Q}$, and sends these public keys to the adversary $\adv$.
	
			\item $\adv$ chooses two secrets $s_0$ and $s_1$ with same length and send them to $\cha$. 
			
			\item The challenger  $\cha$ decides a hidden bit $b\in\{0,1\}$ randomly, executes the $\sssh$ protocol (on behalf of all honest parties) to share $s_b$.
			
			\item The adversary $\adv$ guesses a bit $b'$, on input    its view and the message $s_b$. Here the view of the adversary includes: (i) all corrupted parties internal states, (ii) the protocol scripts sent to the corrupted parties, and (iii) the   length and types of all messages sent among the honest parties.
		\end{enumerate}
		The advantage of $\adv$ in the above IND-Secrecy game is   $|\Pr[b=b'] -1/2 |$.
	}

\subsubsection{Common Coin }Common coin
\begin{itemize}
	\item {\bf Termination }. All honest parties eventually output a pair of value and proof. 
	\item {\bf Validity}.  The probability of all honest parties outputting 0 and  all outputting 1 are at least $\alpha/2$.
	\item {\bf Secrecy}.  Before the first party start the protocol, the advantage of adversary in predicting the output is $(1-\alpha)/2$.
\end{itemize}

\subsubsection{Leader Election}Common coin
\begin{itemize}
	\item {\bf Termination }. All honest parties eventually output a index. 
	\item {\bf Agreement }. For any two honest parties $i$, $j$ output $l_i$ and $l_j$ respectively, $l_i = l_j$. 
	\item {\bf Validity}.  With a probability of at least $\beta/n$, each index is elected as leader.
	\item {\bf Secrecy}.  Before the first party start the protocol, the advantage of adversary in predicting the output is $(n-1)(1-\beta)/n$.
\end{itemize}

}



\section{Preliminaries}


\noindent{\bf Reliable broadcast ($\mathsf{RBC}$)} \cite{bracha1987asynchronous} is a protocol   among a set of $n$ parties, 
in which a party called sender aims to send a value to all. It satisfies the next properties:
\begin{itemize}
	\item {\em Agreement}. If any two honest parties output $ v $ and $v'$ respectively,   $ v = v'$.
	
	\item {\em Totality}. If an honest party outputs $ v $, then all honest parties output $ v $.
	
	\item {\em Validity}. If an honest sender inputs $ v $,   all honest parties would output $ v $.
\end{itemize}

\noindent
{\bf Digital signature}. A digital signature scheme consists of a tuple of  algorithms $(\Gen, \Sign, \Verify)$:
\begin{itemize}
	\item $\Gen(1^\lambda) \rightarrow (sk, pk)$ is a probabilistic algorithm   generating the signing and verification keys.

	\item $\Sign(sk, m) \rightarrow \sigma$ takes   a  signing key $sk$ and a message  $m$ as input to compute a signature $\sigma$.

	\item $\Verify(pk, m, \sigma) \rightarrow 0/1$   verifies whether $\sigma$ is a valid signature produced by a certain party with verification key $pk$ for the message $m$ or not.

\end{itemize}


We require the digital signature scheme to be existentially unforgeable under an adaptive chosen-message attack (i.e., \textsf{EUF-CMA} secure). 
In the bulletin PKI setting, 
every party is bounded to a unique   verification key for  signature. 
For presentation brevity, in a protocol   with an explicit   identifier $\ID$, 
we might let $\Sign_i^\ID(m)$   denote $\Sign(sk_i, \langle \ID, m \rangle)$, which means a specific party $\node_i$ signs a message $m$ with using its private key, and also  let $\Verify_i^\ID(m, \sigma)$ to denote $\Verify(pk_i, \langle \ID, m \rangle, \sigma)$, where $pk_i$ is the public key of a certain party $\node_i$. 

\ignore{
Formally speaking, EUF-CMA can be formalized the following experiment $\Sigforge_{\Adv,\Pi}(n)$ for an Adversary $\Adv$ and parameter $n$:
\begin{itemize}
	\item 
\begin{itemize}
	\item $\Gen(1^n)$ is run to obtain keys $(pk,sk)$.
	\item Adversary $\Adv$ is given $pk$ and access to an oracle $\Sign_{sk}(\cdot)$.The adversary then outputs $(m \sigma)$. Let $\Q$ denote the set of all queries that $\Adv$ asked its oracle.
	\item$\Adv$ succeed if and only if (1)$\Vrfy_{pk}(m, \sigma)=1$ and (2)$m \notin \Q$.In this case the output of the experiment is defined to be 1.
\end{itemize}
if for all probabilistic polynomial adversary $\Adv$ there is  a negligible function $\negl$ such that:
$$
\Pr[\Sigforge_{\A,\Pi}(n)=1]\leq \negl(n).
$$
The signature scheme $\mathsf{\Pi}=(\Gen, \Sign,\Vrfy)$ is \textsf{existentially unforgeable under an adaptive chosen-message attack}.
\end{itemize}
}

\medskip
\noindent
{\bf Verifiable random function}. A verifiable random function ($\VRF$) \cite{micali1999verifiable} is a pseudorandom function that also returns a proof to attest that the correctness of its evaluation result. It consists  of three algorithms $(\GenVRF, \Eval, \VerifyVRF)$:
\begin{itemize}
	\item $\GenVRF(1^\lambda) \rightarrow (sk, pk)$ is a probabilistic algorithm that generates a pair of private key and public verification key for verifiable random function.

	\item $\Eval(sk, x) \rightarrow (r,\pi) $ takes a secret key $sk$ and a value $x$ as input  and outputs a pseudorandom value $r$  with a proof $\pi$.

	\item $\VerifyVRF(pk, x, r,\pi) \rightarrow 0/1$ verifies  whether $r$ is correctly computed from $x$ and $sk$ using $\pi$ and the corresponding $pk$.

\end{itemize}

$\VRF$  shall satisfy {\em unpredictability}, {\em verifiability} and  {\em uniqueness}.  Here  {\em verifiability}    conventionally    means $\Pr[~ \VerifyVRF(pk, x, \Eval(sk, x)) = 1 ~|~  (sk, pk) \leftarrow \GenVRF(1^\lambda) ~] = 1$. 
{\em Unpredictability} requires that for any input $x$, it is computationally  infeasible to distinguish the value $r = \Eval(sk, x)$ from another uniformly sampled value $r'$ without access to $sk$. 
{\em Uniqueness} requires that it is computationally infeasible to find $x$, $r_1$, $r_2$, $\pi_1$, $\pi_2$ such that $r_1 \neq r_2$ but $\VerifyVRF(pk, x,r_1, \pi_1) = \VerifyVRF(pk, x,r_2, \pi_2) = 1$.

Taking the bulletin PKI for granted, everyone can be associated to a unique $\VRF$ public   key. For example, an honest party $\node_i$  runs   $\GenVRF$     to generate a unique pair of private key $sk_i$ and public key $pk_i$, and then registers its $pk_i$ via the PKI.
However, 
  traditional VRF notion \cite{micali1999verifiable} does not    malicious key generation done by corrupted parties.
To capture such threat, we actually require a stronger {\em unpredictability} property called {\em unpredictability under malicious key generation} due to David et al. \cite{david2018ouroboros} throughout the paper, which means that even if the adversary is allowed to corrupt some parties to conduct malicious key generation,   $\VRF$ remains to perform like a random oracle. Such $\VRF$  ideal functionality  can be achieved in the random oracle (RO)  model under   CDH assumption \cite{david2018ouroboros}. 

Notation-wise, we   let $\Eval_i^\ID(x)$ be short for $\Eval(sk_i, \langle\ID, x \rangle)$, where $sk_i$ represents the private key of a party $\node_i$, and $\ID$ in our context is an explicit session identifier of a protocol instance. Similarly,  $\VerifyVRF_i^\ID(x, r,\pi)$  then denotes $\VerifyVRF(pk_i, \langle\ID, x \rangle, r,\pi)$.

\medskip
\noindent
{\bf (Aggregatable) public verifiable secret sharing}.  A $(n,t)$ non-interactive $\PVSS$ scheme can be described as a tuple of  non-interactive  algorithms as follows (with taking  $\param$ as an  implicit input):
\begin{itemize}
	\item $\deal(\allek, s) \rightarrow \pvss$ is an algorithm that takes a secret $s$ as input  and outputs   a script $\pvss$.
	
	\item $\vrfyscript(\allek, \pvss) \rightarrow 0/1$ is a deterministic algorithm that takes all encryption keys as input, and can verify whether a $\PVSS$ script $\pvss$ is valid in the sense that $\pvss$ commits a fixed polynomial that can be   reconstructed collectively by $n$ parties (i.e., output 1) or not (i.e., 0).

	\item $\share(\dk_i, \pvss) \rightarrow \sh_i$ is executed by the party $\node_i$,   takes a valid $\pvss$ script and $\node_i$'s decryption key $\dk_i$ as input, and outputs the secret share $\sh_i$ of the secret committed to $\pvss$.
	
	\item $\vrfyshare(j, \sh_j, \pvss) \rightarrow 0/1$ takes the $\PVSS$ script  $\pvss$ and party $\node_j$'s secret share $\sh_j$    as input, and verifies whether $\sh_j$ is the correct $j^{th}$ share of the polynomial committed to $\pvss$   or not.
	
	\item $\aggshares(\{(j, \sh_j)\}_{t}) \rightarrow a$  takes $t$ valid secret shares from distinct parties regarding an implicit $\PVSS$ script  $\pvss$,   and computes the secret $a$ committed to the   $\pvss$.
	
	\item $\vrfysecret(s, \pvss) \rightarrow 0/1$  verifies whether a secret $s$ is indeed committed to $\pvss$   or not.
	
\end{itemize}

Gurkan et al. \cite{gurkan2021aggregatable} recently proposed to lift      $\PVSS$ scheme to further enjoy aggregability, which need to slightly adapt the syntax. Here we only highlight the small adaptions to these algorithmic interfaces: 

\begin{itemize}
	
	\item $\deal(\allek,  \sk_i, s) \rightarrow \pvss$. Now the algorithm takes an extra secret signing key $\sk_i$ as input, which is needed to make the $\pvss$ script to carry an unforgeable weight tag bounded to the identity $\node_i$. 
	
	\item $\vrfyscript(\allek, \allvk, \pvss) \rightarrow 0/1$. It takes some verification keys $\allvk$ besides $\allek$ and $\pvss$ as input. The output   still represents whether $\pvss$ is valid or not.


	\item $\agg(\pvss_1, \pvss_2) \rightarrow \pvss$. This is a newly introduced algorithm that takes two valid $\PVSS$ scripts $\pvss_1$ and $\pvss_2$ as input and outputs a valid $\PVSS$ script $\pvss$. 

	\item $\weights(\pvss) \rightarrow \vec{w}$. This is another new algorithm. It takes a valid $\pvss$ script as input and outputs an $n$-sized vector $\vec{w}$,  every $j^{th}$ element in which belongs to $\mathbb{N}^0$ and represents that the $\pvss$ script indeed aggregates a certain $\pvss$ script from the party $\node_j$.

\end{itemize}

The aggregatable $\PVSS$ scheme due to Gurkan et al. \cite{gurkan2021aggregatable} satisfies   a few nice security properties such as {\em   verifiable commitment}, {\em   verifiable aggregation} and {\em secrecy}. 
Informally, {\em verifiable commitment} means that any party can verify that a $\PVSS$ script $\pvss$ indeed commits a fixed secret $s$ that can later be collectively reconstructed by the participating parties; {\em secrecy} means that it  is infeasible for an adversary to compute the committed secret from the $\PVSS$ script;
{\em verifiable aggregation} means   if $\weights(\pvss)$ returns $(w_1, w_2, \cdots, w_n)$, then the secret $s$ committed to $\pvss$ indeed equals $\sum_{i=1}^{n}w_i s_i$, where $s_i$ is the secret committed to some $\PVSS$ script $\pvss_i$ that is solely generated (and signed) by the party $\node_i$.
We defer the detailed descriptions of these properties to  Appendix \ref{app:seed}.
	

\section{Warm-up: AVSS and Weaker Core Set from   PKI}
\label{sec:avss}

As briefly mentioned in Introduction, 
our common coin and leader election protocols require a more efficient private-setup free $\AVSS$ instantiation and an efficient construction for a weaker core-set selection ($\wcs$) notion.
When construing our common coin, $\AVSS$ is used to let everyone confidentially share an unbiased VRF evaluation, 
and $\wcs$ can be used let enough honest parties hold an intersecting core-set containing at least $n-f$ completed $\AVSS$es.
Thus, with a constant probability, the largest VRF evaluated by some honest party can appear in the core-set, and its lowest bits are ensured to be the common   coin.

This Section   focuses on the  needed preparing building blocks --- $\AVSS$ and core-set selection.
The $\AVSS$ protocol to present attains $\bigO(\lambda n^2)$   bits in the PKI setting.\footnote{Through the paper, the input secret to $\AVSS$ is assumed small, e.g., $\bigO(\lambda)$ bits.}
Then, we put forth to and construct a weak core-set selection, which can  ensure  $f+1$ honest parties (instead of all) to get some superset of a $(n-f)$-sized  core-set.

 
\subsection{Efficient Private-Setup Free AVSS}

Instead of varieties of strengthened $\AVSS$es, we focus on  the hereinbelow classic $\AVSS$  notion  defined by Canetti and Rabin in 1993 \cite{canetti1993fast}.

\begin{definition}	[Asynchronous Verifiable Secret Sharing \cite{canetti1993fast}]	\label{def:avss}
	An $\AVSS$ consisting of a tuple of protocols $(\sssh,\ssrec)$   can be defined as follows.
	
	\smallskip
{\textsc{Syntax}}. In each $\sssh$ instance with a session identifier $\ID$, 
a designated dealer $\node_D$ inputs a secret 
and each party outputs a string (e.g., a share of  the input secret). In the corresponding $\ssrec$ instance, the parties   input their outputs of    $\sssh$ to  collectively reconstruct the shared secret.
	
	\smallskip
{\textsc{Properties}}.   $\AVSS$      satisfies   next properties except with negligible probability:
	\begin{itemize}

		\item {\bf Totality}.  If some honest party outputs in the $\sssh$ instance associated to $\ID$, 
		then every honest party  activated to execute the $\sssh$ instance would complete the execution and output.
		
		\item {\bf Commitment}.  When an honest party outputs in the $\sssh$ instance for $\ID$, there exists a fixed value $m^*$, such that when all honest parties  are 
		activated to run the corresponding $\ssrec$ instance, all of them can reconstruct the same value $m^*$. 
		
		\item {\bf Correctness}. If the dealer is honest and inputs  secret $m$  in $\sssh$, then:
		\begin{itemize}
			\item If all honest parties are activated to run $\sssh$ on $\ID$, all honest parties would  output in the $\sssh$ instance;
			\item If any honest party reconstructs some value $m^*$ in the corresponding $\ssrec$ instance, $m^*=m$.
		\end{itemize}
		
		
		\item {\bf Secrecy}. In   any $\sssh$ instance, if the dealer is honest, the adversary shall not learn any information about the input secret from its view (which includes all internal states of corrupted parties and all messages sent to the corrupted parties), before the first honest party starts the corresponding $\sssh$ instance. This can be formalized as that the adversary has negligible advantage in the Secrecy game (deferred to Appendix \ref{app:sec_game}).
	\end{itemize}
\end{definition}

\ignore{

	\smallskip
{\textsc{Remark}}. It is worth noticing that although the above $\AVSS$ definition is quite classic in contrast to some   advanced variants of  $\AVSS$ (e.g., high-threshold secrecy and strong commitment),
it is still powerful in the sense of empowering more efficient (reasonably fair) common randomness   and consensuses in the asynchronous network environments. In particular, later Sections would   demonstrate how useful this seemingly primitive  $\AVSS$ definition  actually can be, by showing more efficient common coin and leader election centering around it.

}

\noindent
{\bf High-level rationale}. The intuitions behind our $\AVSS$ construction is simple.
Inspired by the hybrid approach of secret sharing \cite{krawczyk1994secret}, our sharing sub-protocol is split into two steps: 
(i) it first takes the advantage of PKI model to let the dealer multicast a polynomial commitment to an encryption key and then 
collect enough signatures (e.g., $n-f$) on the commitment, thus ensuing at least $f+1$ honest parties commit the same encryption key, 
and (ii) then the dealer multicasts the $n-f$ signatures solicited from the first step to convince the whole network to participate into a reliable broadcast to disseminate the ciphertext encrypting the actual secret. 

While in the reconstruction phase, probably only $f+1$ honest parties might output the polynomial commitment to the decryption/encryption key, though all honest parties have the ciphertext.
This seemingly causes some parties fail to reconstruct the secret (because they cannot decrypt).
Nevertheless,  the $f+1$ honest parties holding the correct commitment can   recover the decryption key and then multicast it to all parties.
So all honest parties can count whether the same key is  from $f+1$ distinct parties, and finally use it to decrypt the secret.

\begin{algorithm}[ht]
	\begin{footnotesize}
		
		\caption{$\sssh$ protocol with identifier $\ID$ and dealer $\node_D$}
		\label{alg:sh}
		\begin{algorithmic}[1]
			\vspace*{0.1 cm}
			\Statex {\color{blue}  /* Protocol for {the dealer $\node_D$} */ }
			\vspace*{0.1 cm}
			\Upon {receiving input secret $m \in \mathbb{Z}_q$}
			\State choose two random polynomials $A(x)$ and $B(x)$   from $\mathbb{Z}_q[x]$ of degree at most $f$ 
			\State let $a_j$ to be the $j^{th}$ coefficient of $A(x)$ and $b_j$ to be that of $B(x)$ for {$j \in [0,f]$},
			\State let $key \leftarrow a_{0}=A(0)$, i.e., $A(0)$ is also called $key$
			\State compute $c_j\leftarrow g_1^{a_j}g_2^{b_j}$ for each $j \in [0,f]$, and let $\C \leftarrow \{c_j\}_{j\in[0, f]}$ 
			
			\State	\textbf{send} $\KEYSHARE(\ID, \C, A(j), B(j))$ to $\node_j$ for {each $j \in [n]$}
			
			\EndUpon
			\vspace*{0.1 cm}
			\Upon {receiving $\STORED(\ID, \sigma_j)$ from $\node_j$ s.t. $\Verify^\ID_j( \C, \sigma_j)=1$}
			\State $\Proof \leftarrow \Proof \cup \{(j, \sigma_j)\}$
			\IF {$\vert \Proof \vert = n-f$}
			\State  $cipher \leftarrow key \oplus m$
			and \textbf{multicast} ${\CIPHER(\ID, \Proof,  \C,cipher)}$ to all parties
			\EndIf      
			\EndUpon
			
			\vspace*{0.15 cm}
			\Statex {\color{blue}  /* Protocol for {each party $\node_i$} */ }
			\vspace*{0.1 cm}
			
			\State
			$\sh_A \leftarrow \bot$, $\sh_B \leftarrow \bot$, $\cmt \leftarrow \bot$
			\vspace*{0.1 cm}
			\Upon {receiving $\KEYSHARE(\ID, \C', A'(i), B'(i))$ from $\node_D$ for the first time}
			\State parse $\C'$ as $\{c'_0, c'_1,\dots, c'_f\}$
			\IF	{$g_1^{A'(i)}g_2^{B'(i)} =\prod_{k=0}^{f} {c'_k}^{i^k}$}
			\State record $A'(i)$, $B'(i)$ and $\C'$,  $\sigma \leftarrow \Sign^\ID_i{( \C')}$,
			\textbf{send} $\STORED(\ID, \sigma)$ to $\node_D$
			\EndIf
			\EndUpon
			
			\vspace*{0.1 cm}
			\Upon {receiving $\CIPHER(\ID, \Proof, C, cipher)$ from $\node_D$ for the first time}
			\Wait { for  a valid $\KEYSHARE$ message s.t. $A'(i)$, $B'(i)$ and $\C'$ are recorded} \EndWait
			\IF {$ C'  = C$   and $\Proof$ has $n-f$ valid signatures for $C$ from distinct parties} 
			\State $\sh_A \leftarrow A'(i)$,
			$\sh_B \leftarrow B'(i)$
			and $\cmt \leftarrow C'$
			\State \textbf{multicast} ${\ECHO(\ID, cipher)}$ to all parties
			\EndIf

			\EndUpon

			\vspace*{0.1 cm}
			\Upon {receiving $2f+1$ $\ECHO(\ID, cipher)$ from distinct parties}
			\State \textbf{multicast} $\READY(\ID,cipher)$ to all parties if $\READY$ not sent yet
			\EndUpon
			
			\vspace*{0.1 cm}
			\Upon {receiving $f+1$ $\READY(\ID, cipher)$ from distinct parties}
			\State \textbf{multicast} $\READY(\ID, cipher)$ to all parties if $\READY$ not sent yet
			\EndUpon
			
			\vspace*{0.1 cm}
			\Upon {receiving $2f+1$ $\READY(\ID, c)$ from distinct parties}
			\State \textbf{output} $(cipher, \sh_A, \sh_B, \cmt)$
			\EndUpon
		\end{algorithmic}

	\end{footnotesize}
	
\end{algorithm}

\medskip
\noindent
{\bf Constructing $\AVSS$ without private setups}.
 The rationale behind our $\AVSS$ construction is straightforward.
The sharing phase splits the hybrid approach of secret sharing \cite{krawczyk1994secret} into two steps: 
(i) it first takes the advantage of PKI model to let the dealer
collect enough signatures (e.g., $n-f$) on a polynomial commitment \cite{pedersen1991non} to an encryption key, thus ensuing at least $f+1$ honest parties commit the same encryption key, 
and (ii) then the dealer multicasts the $n-f$ signatures solicited from the first step to convince the whole network to participate into a reliable broadcast to disseminate the ciphertext encrypting the actual secret. 

In particular, the sharing protocol $\sssh$    proceeds in the following steps: 
%
\begin{enumerate}
	
	
	\item {\em Key sharing} (Line 1-6, 11-15). In this phase, the dealer distributes the key shares to all parties using Pedersen's VSS scheme \cite{pedersen1991non}. The dealer   randomly constructs two polynomials $A(x)$ and $B(x)$ of degree at most $f$. Let $A(0)=key$. Then, the dealer computes a commitment $\C=\{c_j\}$ to the polynomial $A(x)$ with using  $B(x)$ for hiding, where each element $c_j=g_1^{a_j}g_2^{b_j}$, and $a_j$ and $b_j$ represent the $j^{th}$ coefficients of $A(x)$ and $B(x)$, respectively. The dealer   sends a $\KEYSHARE$ message to each party $\node_j$ containing the commitment $\C$ as well as $A(j)$ and  $B(j)$.
	
	Once a party $\node_i$ receives $\KEYSHARE$ message from the dealer, 
	it checks that $\C$ indeed commits $A(i)$ with using $B(i)$ for blinding, and 
	returns a signature for $\C$ to the dealer via a $\STORED$ message.
	
	\item {\em Cipher broadcast} (Line 7-10, 16-26). 
	After receiving $n-f$ valid $\STORED$ messages from distinct parties, the dealer sends a $\CIPHER$ message to all parties containing a ciphertext $cipher$ encrypting its input $m$, the   commitment $\C$, and a quorum proof $\Pi$ containing $n-f$ valid signatures for $\C$. 
	
	The remaining process of the phase is similar to a Bracha's reliable broadcast for $cipher$ \cite{bracha1987asynchronous}, except that 
	a party   would not ``echo'' $cipher$  if not yet receiving valid $\Pi$ for $\C$.
	At the end of the phase, each party can output $(cipher, A(i), B(i), \C)$, where
	$A(i)$, $B(i)$ and $\C$ can be $\bot$.
	
\end{enumerate}

\begin{algorithm}[ht]
	\begin{footnotesize}
		
		\caption{$\ssrec$ protocol with identifier $\ID$, {\color{blue}   for {each party $\node_i$}  }}
		\label{alg:rec}
		\begin{algorithmic}[1]

			\Statex   \textbf{Initialization:} $\Phi \leftarrow \emptyset$
			
			\vspace*{0.1 cm}
			\Upon {being activated with input $(cipher, \sh_A, \sh_B, \cmt)$}
			\IF {$\cmt \neq \bot$, $\sh_A \neq \bot$ and $\sh_B \neq \bot$}
			\State \textbf{multicast} $\KREC(\sh_A, \sh_B)$ to all parties
			\EndIf
			\EndUpon
			
			\vspace*{0.1 cm}
			\Upon {receiving ${\KREC(\sh_{A,j},\sh_{B,j})}$ from $\node_j$ for the first time}
			\IF {$\cmt \neq \bot$  }
			\State parse $\cmt$ as $\{c_0, c_1,\dots, c_f\}$
			\IF	{$g_1^{\sh_{A,j}}g_2^{\sh_{B,j}} =\prod_{k=0}^{f} {c_k}^{j^k}$}
			\State $\Phi \leftarrow \Phi \cup {(j, \sh_{A,j})}$
			\IF {$|\Phi| = f+1$}
			\State interpolate polynomial $A(x)$ from $\Phi$ and compute $key \leftarrow A(0)$
			\State $\textbf{multicast}$ $\KEY(\ID, key)$ to all parties
			\EndIf
			\EndIf
			\EndIf
			\EndUpon
			
			\vspace*{0.1 cm}
			\Upon {receiving $f+1$ $\KEY$ messages  containing the same $key$}
			\State  $m \leftarrow key \oplus cipher$
			and $\textbf{output}$ $m$
			\EndUpon
			
		\end{algorithmic}

	\end{footnotesize}
	
\end{algorithm}

Then, the  $\ssrec$  phase can be activated to  proceed in the next two   phases: 
\begin{enumerate}
	\item {\em Key recovery} (Line 1-10). For each party $\node_i$ that activates $\ssrec$ with taking the output of $\sssh$ as input, it sends $\KREC$ message containing  $A(i)$ and $B(i)$, in case these variables are not $\bot$. 
	At least $f+1$ honest parties   have already received the commitment $\C$, so they can eventually solicit $f+1$ valid shares of the polynomial committed to $\C$ through  $\KREC$ messages, and then interpolate the shares to reconstruct $A(x)$ and compute $key=A(0)$.
	\item {\em Key amplification}. After a party obtains decryption $key$, it multicasts $key$  via a  $\KEY$ message. So all honest parties can receive $f+1$ $\KEY$ messages containing the same $key$,  and then compute and output $m=key \oplus cipher$.  
\end{enumerate}

For the sake of completeness, we also present   formal pseudocode descriptions for $\sssh$ and $\ssrec$ in Alg. \ref{alg:sh} and Alg. \ref{alg:rec}, respectively. 

\ignore{
The intuition of proving our simple $\AVSS$ protocol is clear: the totality is mainly because our construction employs Bracha broadcast's message pattern for distributing the ciphertext encrypting the input secret and the hash of polynomial commitment; the secrecy follows the information theoretic argument due to Pedersen \cite{pedersen1991non} about his verifiable secret sharing; the commitment is ensured by Pedersen commitment   and the unforgeability of   signatures. 
For the sake of completeness, we defer the proof to Appendix \ref{app:avss_proof}
}

\medskip
\noindent
{\bf Security analysis of $\AVSS$}. The intuition of proving our simple $\AVSS$ protocol is clear: the totality is mainly because our construction employs Bracha broadcast's message pattern for distributing the ciphertext encrypting the input secret and the hash of polynomial commitment; the secrecy follows the information theoretic argument due to Pedersen \cite{pedersen1991non} about his verifiable secret sharing; the commitment is ensured by Pedersen commitment   and the unforgeability of   signatures. More formally, the security of our $\AVSS$ protocol can be proved as follows.

\begin{lemma}\label{lemma:sh-consistent}
	If any two honest parties $\node_i$ and $\node_j$ output $(cipher, \cdot,\cdot,\cdot)$ and $(cipher', \cdot,\cdot,\cdot)$ in $\sssh[\ID]$, respectively, then $cipher=cipher'$ 
	except  with negligible probability.
\end{lemma}
\begin{proof}
	Suppose that $cipher \neq cipher'$, $\node_i$ receives $2f+1$ $\READY$ messages containing $cipher$, the senders of which include at least $f+1$ honest parties; in the same way, $\node_j$ must have received at least $f+1$ $\READY$ messages containing $cipher'$ from honest parties; so it induces that at least one honest party sent two different messages, which is impossible. So there is a contradiction if $cipher \neq cipher'$, implying $cipher=cipher'$.
	
\end{proof}
\begin{lemma}\label{lemma:totality}
	If some honest party outputs in the $\sssh$ instance associated to $\ID$, 
	then every honest party  activated to execute the $\sssh$ instance would complete the execution and output.
\end{lemma}
\begin{proof}
	Assume that an honest party outputs in the $\sssh$, it must have received $2f+1$ $\READY$ messages. At least $f+1$ of the messages are sent from honest parties. Therefore, all parties will eventually receive $f+1$ $\READY$ messages from these $f+1$ honest parties and send a $\READY$ message as well. Then, all honest parties will eventually receive $2f+1$ valid $\READY$ messages and output. So the totality property always holds.
\end{proof}

\begin{lemma}\label{lemma:commit}
	When some honest party outputs in the $\sssh$ instance for $\ID$, there exists a value $m^*$ that is fixed associated to  $\ID$.
\end{lemma}
\begin{proof}
	Firstly, we prove that if any two honest parties record $\cmt$ and $\cmt'$ respectively, then $\cmt=\cmt'$. Note that honest party will record a $\cmt$  only if it receives a valid $\Pi$ containing  $n-f$ valid signatures for $\cmt$. Suppose that $\cmt \neq \cmt'$, $\node_i$ receives a $\Pi$, which means at least $f+1$ honest parties have signed for $\cmt$; in the same way, $\node_j$ receives a $\Pi'$ and at least $f+1$ honest parties have signed for $\cmt'$. So according to the unforgeability of digital signatures,  at least one honest party signed for both $\cmt$ and $\cmt'$, which is impossible. Thus, $\cmt = \cmt'$. 
	Moreover, $\C$ is computationally binding conditioned on DLog assumption, so all honest parties agree on the same polynomial $A^*(x)$ committed to $\C$, which fixes a unique $key^*$.
	From Lemma \ref{lemma:sh-consistent} and the totality of $\AVSS$, when some honest party outputs in the $\sssh$ instance for $\ID$, all honest parties receive the same cipher $cipher$. So there exists a unique $m^*=cipher \oplus key^*$, which can be fixed once some honest party outputs in $\sssh$.
\end{proof}

\begin{lemma}\label{lemma:commit-rec}
	When all honest parties activate  $\ssrec$ on $\ID$, each of them can reconstruct the same value $m^*$
\end{lemma}
\begin{proof} 
	Any honest party outputs in the $\sssh$ subprotocol must receive $2f+1$ $\READY$ messages from distinct parties, at least $f+1$ of which are from honest parties. Thus, at least one honest party has received $2f+1$ $\ECHO$ messages from distinct parties. This ensures that at least $f+1$ honest parties get the same commitment $\C$ and a valid quorum proof $\Pi$. 
	Since the signatures in $\Pi$ are unforgeable, at least $f+1$ honest parties did store valid shares of $A^*(x)$ and $B^*(x)$ along with the corresponding commitment $\C$ except with negligible probability. 
	So after all honest parties start $\ssrec$, there are at least $f+1$ honest parties would broadcast $\KREC$ messages with valid shares of $A^*(x)$ and $B^*(x)$. These messages can be received by all parties and can be verified by at least $f+1$ honest parties who record $\C$. With overwhelming probability, at least $f+1$ parties can interpolate $A^*(x)$ to compute $A^*(0)$  as $key$ and broadcast it, and all parties can receive at least $f+1$ same $key^*$ and then output the same $m^* = cipher \oplus key^*$ as they obtain the same ciphertext $cipher$ from $\sssh$. 
\end{proof}

\begin{lemma}\label{lemma:correct-1}
	If the dealer is honest and all honest parties are activated to run $\sssh$ on $\ID$, all honest parties would  output in the $\sssh$ instance.
\end{lemma}
\begin{proof}
	If the dealer is honest and all honest parties are activated, it is clear that (i) all honest parties can eventually wait the shares of $A(x)$ and $B(x)$ as well as the same commitment $\C$ so all honest parties will sign for $\C$. Thus, the honest dealer must can collect at least $n-f$ valid digital signature for $\C$ from distinct parties to form valid $\Pi$ and (ii) all honest parties can eventually broadcast the same $\ECHO$ messages and the same $\READY$ messages after receiving the shares of $A(x)$ and $B(x)$ as well as the same $\C$, thus finally outputting in the $\sssh$ instance.
\end{proof}

\begin{lemma}\label{lemma:correct-2}
	If the dealer is honest and inputs secret $m$, the value $m^*$ reconstructed by any honest party in the corresponding $\ssrec$ instance must be equal to $m$, for all $\ID$.
\end{lemma}
\begin{proof}
	From Lemma \ref{lemma:commit-rec}, we have proved that all honest parties will reconstruct the $m^*$ which is fixed when some honest party completes the $\sssh$. So all we need is to prove that the fixed $m^*$ is equal to the $m$ that the honest dealer inputs in the $\sssh$. 
	It is easy to see that (i) any honest party must output a ciphertext $cipher$ same to the ciphertext computed by the honest sender and (ii) due to the correctness and binding of commitment scheme honest parties must receive the same $C$ to $A(x)$ and $B(x)$, where $A(x)$ and $B(x)$ are chosen by the honest deader. So $m^* = cipher \oplus A(0) = m$.
\end{proof}

\begin{lemma}\label{lemma:secure}
	In   any $\sssh$ instance, if the dealer is honest, the adversary shall not learn any information about the key shared by the dealer from its view.
\end{lemma}
\begin{proof}
	The adversary's $view$ in an $\sssh$ execution with an honest dealer would include the commitment $\C$, the ciphertext $cipher$, some signatures for $\C$, the secret shares received by up to $f$ corrupted parties as well as all public keys and corrupted parties secret keys. The signatures leak nothing related to the shared $key$ (even if the adversary can fully break digital signature to learn the private signing keys). 
	Thus, following the information-theoretic argument in \cite{pedersen1991non}, since the commitment $\C$ is perfectly hiding and $f$ shares of Shamir's secret sharing scheme also leaks nothing about the key, the adversary can  learn nothing about $key$.
\end{proof}


\begin{theorem}\label{thm:avss}
	The algorithms shown in   Alg. \ref{alg:sh} and Alg. \ref{alg:rec}  realize $\AVSS$ as defined in Definition \ref{def:avss}, in the asynchronous message-passing model with $n/3$ adaptive byzantine corruption and bulletin PKI assumption without private setups, conditioned on the hardness of Discrete Log problem and   \textsf{EUF-CMA}  security of digital signature.
\end{theorem}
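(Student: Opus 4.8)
The plan is to establish the four properties of Definition~\ref{def:avss} one at a time, treating \textbf{Commitment} as the heart of the argument and reducing the other three either to it or to standard reliable-broadcast reasoning.

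First I would dispose of \textbf{Totality}, which is inherited essentially verbatim from Bracha's reliable broadcast running on the pair $(h,c)$ through the $\ECHO$/$\READY$ traffic. If one honest party outputs, it has seen $2f+1$ messages $\READY(\ID,h,c)$, so at least $f+1$ of them come from honest parties; by the $f+1$-$\READY$ amplification rule every honest party eventually (re)sends $\READY(\ID,h,c)$, whence all $\ge 2f+1$ honest parties collect $2f+1$ matching $\READY$ and output. The output may carry $\bot$ shares for parties that never reached $flag=1$, which is harmless here.

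The crux is \textbf{Commitment}. When the first honest party outputs, I would trace the $2f+1$ $\READY$ it saw back to the \emph{first} honest $\READY$, which cannot have used the $f+1$-$\READY$ rule (at most $f$ malicious $\READY$ exist beforehand) and so was triggered by $2f+1$ $\ECHO(\ID,h,c)$, hence by at least $f+1$ honest $\ECHO$ senders. Each such honest sender passed the $\CIPHER$ guard, so it holds $flag=1$ together with shares that verify against a commitment $\C'$ with $\hash(\C')=h$. Collision resistance of $\hash$ forces all these $\C'$ to be a single commitment $\C^{*}$, and the $\ge f+1$ verified evaluations pin down a unique degree-$\le f$ polynomial $A^{*}(x)$ and thus $key^{*}=A^{*}(0)$; I then \emph{define} $m^{*}:=key^{*}\oplus c$, fixed at the first output. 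For reconstruction these same $f+1$ honest parties send valid $\KREC$, so every honest party holding $\cmt=\C^{*}$ collects $f+1$ verifying shares and interpolates $key^{*}$. The delicate point is that a malicious party could inject a $\KREC$ share at position $j$ still satisfying the Pedersen check $g_1^{\sh_{A,j}}g_2^{\sh_{B,j}}=\prod_{k} c_k^{j^{k}}$; but any such share differing from $A^{*}(j)$ yields $g_1^{\sh_{A,j}-A^{*}(j)}=g_2^{B^{*}(j)-\sh_{B,j}}$, a nontrivial $\mathrm{DLog}$ relation between $g_1$ and $g_2$, contradicting the hardness assumption. Hence every accepted share lies on $A^{*}$, the interpolation is unambiguous, and after the $\KEY$-amplification step all honest parties receive $f+1$ copies of the same $key^{*}$ and output $m^{*}=key^{*}\oplus c$.

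\textbf{Correctness} then follows by specializing to an honest dealer: all $\ge 2f+1$ honest parties pass the $\KEYSHARE$ check and return $\STORED$, so the dealer assembles $\Proof$ and every honest party echoes, guaranteeing outputs via the Totality argument; moreover the unique $\C^{*}$ is now the dealer's honest commitment, so $key^{*}=A(0)=key$ and $m^{*}=key\oplus c=key\oplus(key\oplus m)=m$. Finally, for \textbf{Secrecy} I would argue in the Secrecy game that the adversary's view is statistically independent of the shared secret: the $\le f$ corrupted shares reveal at most $f$ evaluations of the degree-$f$ polynomial $A$, leaving $A(0)=key$ uniform (by Vandermonde nonsingularity), while the Pedersen commitment $\C$ is \emph{perfectly hiding}, so publishing $\{c_j\}$ adds no information about the coefficients. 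Consequently $key$ is uniform from the adversary's standpoint and $c=key\oplus m$ acts as a one-time pad, so the two challenge secrets induce identically distributed views and the advantage is negligible. The main obstacle throughout is this Commitment argument under a malicious dealer and a ``weak core'' in which only $f+1$ honest parties ever learn $\C^{*}$; the binding reduction to $\mathrm{DLog}$ and the $\KEY$-amplification trick are exactly what make a consistent $m^{*}$ recoverable despite that asymmetry.
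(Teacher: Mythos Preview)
Your proposal is correct and follows essentially the same structure as the paper's proof: Totality via the Bracha $\ECHO$/$\READY$ pattern, Commitment via agreement on $(h,c)$ plus collision-resistance of $\hash$ plus Pedersen binding under DLog, Correctness by specializing to an honest dealer, and Secrecy via the perfect-hiding property of Pedersen together with the information-theoretic privacy of $f$ Shamir shares.

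There is one small but noteworthy difference in the Commitment step. The paper identifies the $f+1$ honest shareholders by appealing to the quorum proof $\Proof$ and the unforgeability of signatures (arguing that $n-f$ signatures on $h$ imply $f+1$ honest signers who stored shares). You instead trace back from the first honest $\READY$ to $\ge f+1$ honest $\ECHO$ senders, each of which already has $flag=1$ and hence its own verified share and commitment $\C'$ with $\hash(\C')=h$. Your route is slightly cleaner: it does not need to invoke signature unforgeability at this point (the honest $\ECHO$ sender's own $flag=1$ suffices), and it lines up exactly with the theorem's stated assumptions of DLog hardness and hash collision-resistance. You are also more explicit than the paper about why a malicious $\KREC$ share that passes the Pedersen check cannot derail interpolation, spelling out the reduction $g_1^{\sh_{A,j}-A^{*}(j)}=g_2^{B^{*}(j)-\sh_{B,j}}$ to a discrete-log relation; the paper leaves this implicit in the phrase ``$\C$ is computationally binding.'' Neither difference changes the overall architecture of the argument.
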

\begin{proof}
	Here prove that Alg. \ref{alg:sh} and Alg. \ref{alg:rec} satisfy   $\AVSS$'s properties one by one:
	\begin{itemize}
		\item {\em Totality}. Totality can be proved immediately from Lemma \ref{lemma:totality}.
		
		\item {\em Commitment}.	Commitment can be proved from Lemma \ref{lemma:commit} and Lemma \ref{lemma:commit-rec}.
		
		\item {\em Correctness}. Correctness can be proved from Lemma \ref{lemma:correct-1} and Lemma \ref{lemma:correct-2}
		
		\item {\em Secrecy}. From Lemma \ref{lemma:secure}, if the dealer is honest, the adversary can learn nothing about $key$. So the adversary cannot distinguish the distribution of $cipher = key \oplus m_b$ and a uniform  distribution (otherwise, the adversary can be invoked to break Lemma \ref{lemma:secure}). Therefore, the adversary's advantage in the Secrecy game ${\bf Adv_{sec}}$ is negligible.
	\end{itemize}
\end{proof}
{\textsc{Remark on adaptive security}}. Our $\AVSS$ protocol is adaptively secure.
This is because our $\sssh$ sub-protocol is similar to that of \cite{cachin2002asynchronous}, which use Pedersen's polynomial commitment in combination of Shamir's secret sharing to consistently distribute the secret shares  (of an encryption key) to the participating parties, which avoids the shortage of using static cryptographic primitive such as non-interactive $\PVSS$.
While the major difference between \cite{cachin2002asynchronous} and our $\sssh$ is that we concatenate $n-f$ \textsf{EUF-CMA} secure digital signatures to form a quorum proof attesting that enough honest parties have received the consistent secret shares. Nevertheless, this doesn't sacrifice adaptive security, because the quorum proof ensures that there must exist $f+1$ honest parties that  can never be corrupted and also received the consistent secret shares, and hence these $f+1$ forever honest parties can help all other honest parties to recover the same encryption key during the $\ssrec$ protocol.

\medskip
\noindent
{\bf Complexities of $\AVSS$}. The complexities of the $\AVSS$  protocol shown   in   Alg. \ref{alg:sh} and Alg. \ref{alg:rec}  can be easily seen:
Both $\sssh$ and $\ssrec$ protocols cost at most a constant number of asynchronous rounds to terminate. Each round at most exchanges $n^2$ messages, indicating $O(n^2)$ message complexity.
Moreover, there are $O(n)$ messages having $\bigO(\lambda n)$ bits and $O(n^2)$ messages having $\bigO(\lambda)$ bits, thus the communication complexity of the protocol is of overall $O(\lambda n^2)$ bits. Recall that $\lambda$ captures the size of cryptographic objects.


\ignore{
\medskip
\noindent
{\bf Further efficiency improvement}. 
Remark that assuming the random oracle model, one can further reduce the number of needed public key operations in Algorithm \ref{alg:sh} and \ref{alg:rec}. The key idea is   to replace ``$c \leftarrow key \oplus m$'' (the line 10 of  Alg. \ref{alg:sh}) and ``$m \leftarrow key \oplus c$'' (the line 13 of  Alg. \ref{alg:rec}) by ``$c \leftarrow \hash(key) \oplus m$'' and ``$m \leftarrow \hash(key) \oplus c$'', respectively, where $\hash$ is a random oracle. After the replacement, the polynomial $B(x)$ is no longer needed to hide $key$, and thus all operations related to $B(x)$ can be removed. The proof for $\AVSS$ secrecy can be similarly argued, because gaining non-negligible advantage in the $\AVSS$'s secrecy-game   indicates that the adversary can break Dlog assumption to query RO with the right $key$ in a polynomial number of queries. 

Moreover, our simple $\AVSS$ protocol and the technique to construct it could be of practical merits. In particular, we remark that the seemingly dominating computational cost incurred by line 7 in Algorithm 2 (the reconstruction phase) actually can be pre-computed upon receiving the polynomial commitment $C$ in Algorithm 1 (the sharing phase), indicating that each party's actual on-line computational cost could be restricted to a linear number of cheap group operations at best. This hints its real-world applications, e.g., construct concretely more efficient asynchronous casual broadcast \cite{cachin2001secure,miller2016honey,guo2020dumbo} (which is a stronger variant of asynchronous atomic broadcast ensuring that the output transactions remains confidential, before they are surely finalized to output).
}

\subsection{Weak Core Set Selection}
Core-set selection is a critical component while flipping a coin in the asynchronous network \cite{canetti1993fast,abraham2021reaching}.
It allows each party to output a set of indices representing some completed reliable broadcasts \cite{abraham2021reaching} or completed $\AVSS$es \cite{canetti1993fast}, and more importantly, the  the intersection of all honest parties' outputs corresponds to an always large enough core-set (e.g., $n-f$).
%

Instead of this widely known approach, here we introduce a weakened core set selection primitive in which probably only $f+1$ honest parties can receive the core.
In the presence of PKI, it can be constructed very efficiently, and remains to be an expressive notion while flipping a coin.
The idea is to use it select a core-set of $\AVSS$es that hide some $\VRF$s.
With a constant probability, the largest $\VRF$ can luckily appear  in the   core-set, so at least $f+1$ honest parties can reconstruct this largest  $\VRF$ and multicast it to the whole network, thus still ensuring all honest parties to get the largest $\VRF$.

Let us focus on this weakened notion   and its efficient construction.
%
%
More formally, we can define the weak core-set notion as follows. 
\begin{definition}[Weak Core-Set Selection] \label{def:wcs}
	A protocol    among $n$ parties with up to $f$ Byzantine corruptions realizes a weak core-set selection, if  has syntax and properties as follows.
	
	\smallskip
	{\textsc{Syntax}}.
	For each protocol instance with session identifier $\ID$,  
	every party  $\node_i$ inputs a set of indices $\Si_i$ s.t. $|\Si_i|\geq n-f$. Note that each index  belongs to $[n]$, and each honest party's input set $\Si_i$ can   monotone increase over the protocol execution. Then, every honest party $\node_i$ outputs a set of indices $\hat{\Si}_i$. 

	\smallskip
	{\textsc{Properties}}. 
	It satisfies the next properties except with negligible probability:
	\begin{itemize}
		\item {\bf Termination}. If any index  in any honest party's input set can  eventually appear in all honest parties' input sets,  then every honest party   would output. 
		\item {\bf $(f+1)$-Supporting Core-Set}. Once the first honest party outputs, there exists a core-set $\Si^*$   consisting of at least $n-f$ distinct indices,
		and $\Si^*$ must be the intersection of at least $f+1$ honest parties' output sets.
		\item {\bf Validity}. Any index in the honest parties' outputs can be found in some honest party's input set.
	\end{itemize}
\end{definition}

Intuitively, the above definition captures our purpose that after each party conducts reliable broadcast or verifiable secret sharing, 
each party can invoke the primitive to output a set of indices  representing which reliable broadcasts or $\AVSS$es are indeed completed. 
More importantly, the output sets of at least $f+1$ honest parties share a $(n-f)$-sized  intersection, 
representing that all these honest parties have output in these reliable broadcasts or $\AVSS$es.

\begin{algorithm}[ht]
	\begin{footnotesize}
		
		\caption{$\wcs$ protocol with identifier $\ID$, {\color{blue} for {each party $\node_i$}}}
		\label{alg:wcs}
		\begin{algorithmic}[1]
			\State $\widetilde{\Si} \leftarrow \bot$, $\hat{\Si} \leftarrow \bot$
			\vspace*{0.15 cm}
			\Upon {receiving the input set $\Si$ and $|\Si| \geq n-f$}
			\State $\widetilde{\Si} \leftarrow \Si$
			and \textbf{multicast} $\LOCK(\ID, \widetilde{\Si})$ to all parties
			\EndUpon
			
			\vspace*{0.15 cm}
			\Upon {receiving $\LOCK(\ID, \widetilde{\Si}_j)$ from $\node_j$ for the first time}
			\If {$|\widetilde{\Si}_j| \geq n-f$}
			\Wait { for $\widetilde{\Si}_j \subseteq \Si$}
			\State $\sigma^{j}_i \leftarrow \Sign^\ID_i(\widetilde{\Si}_j)$
			and \textbf{send} $\CONFIRM(\ID, \sigma^{j}_i )$ to $\node_j$
			\EndWait
			\EndIf
			\EndUpon
			
			\vspace*{0.15 cm}
			\Upon {receiving $\CONFIRM(\ID, \sigma^{i}_j )$ from $\node_j$ s.t. $\Verify^\ID_j(\widetilde{\Si}, \sigma^i_{j})=1$}
			\State $\Sigma \leftarrow \Sigma \cup \{j, \sigma^i_{j}\}$ 
			\If {$|\Sigma| = n-f$} 
			\State \textbf{multicast} $\COMMIT(\ID, \Sigma, \widetilde{\Si})$ to all parties
			\EndIf
			\EndUpon
			
			\vspace*{0.15 cm}
			\Upon {receiving  $\COMMIT(\ID, \Sigma_j, \widetilde{\Si}_j)$ message from $\node_j$ for the first time}
			\If {$\Sigma_j$ contains $n-f$ valid signatures  for $\widetilde{\Si}_j$ from distinct parties}
			
			\Comment{I.e., check   $|\{k ~|~ (k, \sigma^j_{k}) \in \Sigma_j\} |=n-f$ $\wedge$ $\forall (k, \sigma^j_{k})$ $\in$ $\Sigma_j$, $\Verify^\ID_k(\widetilde{\Si}_j, \sigma^j_{k})=1$}
			\State $\widehat{\Si} \leftarrow \Si$ and  \textbf{output} $ \widehat{\Si}$
			\EndIf
			\EndUpon
			
		\end{algorithmic}

	\end{footnotesize}
	
\end{algorithm}

\medskip
\noindent
{\bf Constructing $\wcs$ without private setup}. Here we present a concise construction of weak core set    $\wcs$  (formally shown in Alg. \ref{alg:wcs}):
\begin{enumerate}
	\item Once an honest party $\node_i$ receives an input local set $\Si$ which contains $n-f$ values, it takes a ``snapshot'' $\widetilde{\Si}$ of $\Si$ and
	multicasts $\widetilde{\Si}$ to all parties. Note that $\node_i$'s local $\Si$ can increase monotonically after the multicast, as new indices might be added to $\Si$.
	Then, if receiving some $\widetilde{\Si}_j$ sent from some party $\node_j$, 
	the party $\node_i$ checks $|\widetilde{\Si}_j|=n-f$, and   waits for that its local $\Si$ eventually becomes a superset of $\widetilde{\Si_j}$, after which, it returns a signature for $\widetilde{\Si_j}$ to $\node_j$.
	
	\item Eventually, $\node_i$ might collect  $n-f$ distinct signatures for its multicasted ``snapshot'' $\widetilde{\Si}$, which corresponds to a  quorum proof $\Sigma$  for  $\widetilde{\Si}$. Finally, $\node_i$ multicasts  $\Sigma$ and $\widetilde{\Si}$ to all parties.
	After receiving  a valid  quorum proof $\Sigma_j$ for  $\widetilde{\Si_j}$   from some party $\node_j$, 
	the party $\node_i$ can immediately output its current local set $\Si$ (without halt).
\end{enumerate}

\medskip
\noindent
{\bf Security analysis of $\wcs$}. The security intuition of our  $\wcs$ construction  is that when the first honest party outputs, it must receive a valid quorum proof attesting that 
at least $f+1$ honest parties have signed the same set consisting of $(n-f)$  indices. Thus, these $f+1$ honest parties must have a $(n-f)$-sized intersection in their outputs. More formally, we can prove the following security theorem.

\begin{theorem}\label{thm:wcs} 
The algorithm shown in   Alg. \ref{alg:wcs} realizes  $\wcs$ against $n/3$  adaptive byzantine corruptions in the asynchronous message-passing model, conditioned on that the underlying digital signature scheme is \textsf{EUF-CMA}  secure.
\end{theorem}

\begin{proof}
	We prove that Alg.  \ref{alg:wcs} realizes the  properties of $\wcs$ in  Def. \ref{def:wcs} one by one:
	\begin{itemize}
		\item {\em Termination}. If any value $v$ in some honest party's input set will eventually be included into all honest parties' input sets, any honest $\node_i$'s $\widetilde{\Si_i}$ will be included in all honest parties' local sets. So any honest $\node_i$ can collect a set $\Sigma_i$ containing at least $n-f$ signatures for its $\widetilde{\Si_i}$ and multicast it to all parties via a $\COMMIT$ message. 
		For any honest party $\node_i$, once it  receives a valid $\Sigma_j$ for the first time, it will fix a $\hat{\Si}$ and output. 
		\item {\em (f+1)-Supporting Core-Set}. When the first honest party outputs from the protocol, it has received a valid $\Sigma_j$ with $n-f$ signatures for $\Si_j$ and at least $f+1$ signatures are signed by honest parties. Note that an honest party $\node_i$ will sign for some $\Si_j$ only if $\Si_j \geq n-f$ and $\Si_j \subseteq \Si_i$. Thus, trivially from the unforgeability of digital signatures, with all but negligible probability, once the first honest party receives a valid $\Sigma_j$ for $\Si_j$ and outputs, there exists a core set $\Si^*=\Si_j$ which is subset of at least $f+1$  forever honest parties' local $\Si$. After that, if some of the $f+1$ forever honest parties can output a set $\hat{\Si}$, then $\Si^* \subseteq \hat{\Si}$.  
		\item {\em Validity}. The validity of $\wcs$ is trivial since each honest party outputs its local set $\Si$ which is the input of itself.
		
	\end{itemize}
\end{proof}
{\textsc{Remark on adaptive security}}. We concatenate $n-f$ \textsf{EUF-CMA}  secure signatures from distinct parties to attest the existence of a core set,
and therefore, whenever any so-far honest party outputs, 
there must exist $f+1$  honest parties that have signed the core set and can never be corrupted by the adaptive adversary, 
and at least these $f+1$ forever honest parties would share a $(n-f)$-sized intersection in their output sets. This facts prevent the adaptive adversary from corrupting parties posteriorly to make the core-set is received by less than $f+1$ honest parties.

\medskip
\noindent
{\bf Complexities of $\wcs$}. The complexities of the $\wcs$ protocol can be easily seen. All parties can terminate after three asynchronous rounds (i.e., $\LOCK$, $\CONFIRM$ and $\COMMIT$). The over message complexity is   $O(n^2)$, because each honest party sends at most $3n$  messages. Each message contains at most $O(\lambda n)$ bits, so the overall communication cost is $O(\lambda n^3)$.
	
 
\section{Backbone: Reasonably Fair Common Coin from   PKI}
\label{sec:coin}

This section presents a novel way to private-setup free $\ABA$.
At the core of the design, it is a new reasonably fair common coin ($\coin$) which can be instantiated by $\AVSS$, $\wcs$ along with using VRFs in the bulletin PKI setting. 
The $\coin$ protocol attains constant running time, $\bigO(n^3)$ messages and $\bigO(\lambda n^3)$ bits.
Thus, many existing $\ABA$ protocols \cite{crain2020two,mostefaoui2015signature} can  directly adopt it for reducing private setup,
and preserve other benefits such as expected constant rounds and optimal resilience, with incurring expected cubic communicated bits.

\subsection{Common coin without private setup}
The backbone of our results is an efficient private-setup free common coin ($\coin$) protocol that  costs only   $\bigO(\lambda n^3)$ communicated bits and terminate in constant rounds. Formally, we consider the common coin notion defined as follows: 

\begin{definition}[$(n,f, f+k,\alpha)$-Common Coin] \label{def:coin}
	A protocol realizes $(n,f,f+k, \beta)$-$\coin$, if it is executed among $n$ parties with up to $f$ static byzantine corruptions  and has syntax and properties   as follows.
	
	\smallskip
	{\textsc{Syntax}}.
	For all executions of each protocol instance with   session identifier $\ID$, 
	every party takes the system's public knowledge (i.e., $\lambda$ and all public keys) and its own private keys  as input, and outputs a single bit. 

	\smallskip
	{\textsc{Properties}}. 
	It  satisfies the next properties except with negligible probability:
	\begin{itemize}
		\item {\bf Termination}. If all honest parties are activated on  $\ID$, every honest party will output a bit for $\ID$. 
		\item {\bf Reasonably fair bit-tossing}. Prior to that $k$ honest parties ($1 \le k \le f+1$) are activated on $\ID$,
		the adversary $\adv$ cannot fully guess the output. More precisely,
		consider the   predication game:
		$\adv$ guesses a bit $b^*$ before $k$ honest parties activated on $\ID$,
		if $b^*$ equals to some honest party's output for $\ID$, we say that $\adv$ wins;
		we require $\Pr[\adv \textnormal{ wins}]\leq 1-\alpha/2$. 
		
		Here  $\alpha$ represents the lower-bound probability that all honest parties would output the same bit that is as if uniformly distributed over $\{0,1\}$,
		while $1-\alpha$ captures the possibility that the adversary might predicate/bias the output (which also capture the case that the  honest parties output differently). 
	\end{itemize}
\end{definition}

Intuitively, with at least $\alpha$ probability taken over all possible $\coin$  executions, the adversary cannot predict the output bit better than guessing. A $\coin$ protocol is said to be perfect, if $\alpha=1$. Nevertheless, 
many $\ABA$ constructions \cite{canetti1993fast,mostefaoui2015signature} actually  do not necessarily need perfect $\coin$, and can terminated in expected constant rounds with optimal $n/3$ resilience, as long as using a $(n,n/3,n/3+k,\alpha)$-$\coin$ scheme, where $\alpha < 1$ is a certain constant. This is mainly because such constructions can tolerate the probable disagreement of such imperfect coins, and then   repeat by iterations to explore the $\alpha$-probability good case for terminating. 

\ignore{
Formally, we would design an $\wcrb$ protocol attaining the following properties in the asynchronous   network:

\begin{definition} [$(n,f,f+k,\alpha)$-$\gamma$-bit Common Coin] \label{def:rand}
	A protocol is said to be $(n,f,f+k,\alpha)$-$\wcrb$, if it is among $n$ parties with up to $f$ static byzantine corruptions, and has syntax and properties as follows.
	
	\smallskip
	{\textsc{Syntax}}.
	For all executions of each protocol instance with session identifier $\ID$, 
	every party  takes the system's public knowledge (i.e., $\lambda$ and all public keys) and its own private keys  as input, and outputs a $\gamma$-bit value $r$. 
	%
	
	%
	
	\smallskip
	{\textsc{Properties}}. 
	It  satisfies the following properties except with negligible probability:
	\begin{itemize}
		\item {\bf Termination}. Conditioned on that all honest parties are activated on  $\ID$, every honest party would output a $\gamma$-bit value $r$. 
		\item {\bf Correctness}.  With at least $\alpha$ probability all honest parties will output the same value $r$. Moreover, before $k$ honest parties ($1 \le k \le f+1$) are activated on  $\ID$, each value  $0 \leq r <2^\gamma$ has at least $\alpha/{2^\gamma}$ probability to be output.
		\ignore{
		\item {\bf Agreement}. With at least $\alpha$ probability, for any two honest parties $\node_i$ and $\node_j$ that output $r_i$ and $r_j$ for $\ID$, respectively, there is $r_i = r_j$. 
		\item {\bf Unpredictability}. Before $k$ honest parties ($1 \le k \le f+1$) are activated on  $\ID$, 
		the adversary $\adv$ cannot fully predicate the output $r$ on condition that all honest parties achieve an agreement. 
		More precisely, 
		consider the following predication game:
		the adversary $\adv$ guesses a value $r^*$ before $k$ honest parties   activated on $\ID$,
		if $r^*$   equals to some honest party's output for $\ID$, we say that $\adv$ wins;
		we let the adversary's advantage in the predication game to be $| Pr[ \adv \textnormal{ wins} |\textnormal{ honest parties reach agreement}] - \beta/2^\gamma ~|$ and require it at most $1-\beta$.
	}
	\end{itemize}
\end{definition}
}




\medskip
\noindent{\bf  Tackling the seed of  VRF}. As briefly mentioned in Introduction, our $\coin$ construction   relies on VRF to let each party evaluate an unbiased random output, thus reducing the number of needed $\AVSS$es.
Before elaborating our $\coin$ construction, it is worth mentioning an issue of this cryptographic primitive in the PKI setting.
Different from many studies that implicitly assume the private key of VRF is generated by a trusted third-party \cite{cohen2020not},
we aim to opt out of such private trusted setup for VRFs. 
So the VRF key generation is conducted by each participating party itself.
That means, if the  $\coin$  protocol also uses some deterministic seeds for VRF evaluations, a compromised party can register at PKI with some maliciously chosen VRF keys, and probably can bias the distribution of its VRF during the protocol execution. 

{\em Trusted nonce from ``genesis''}. In many settings, this might not be an issue, since there could be a trusted nonce generated after all parties have registered their VRF keys (e.g., the same rationale behind the   ``genesis block'' in some Proof-of-Stake blockchains \cite{david2018ouroboros}), which can be naturally used as the VRF seed. Such a functionality  was earlier formalized by David,  Gazi,  Kiayias and Russell  in  \cite{david2018ouroboros} as an initialization functionality to output an unpredictable VRF seed. 

{\em Generating VRF seed on the fly}. Nevertheless, we might still expect less setup assumptions to get rid of the trusted ``genesis block''.   
and therefore have to handle the issue by generating unpredictable VRF seeds on the fly during the course of the coin protocol.
To this end, we put forth a new notion called {\em reliable broadcasted  seeding} ($\seedgen$) as amendment to patch VRF with unpredictable nonce (or called seed interchangeably). 
Intuitively, the notion can be understood as a   ``broadcast'' version of common coins, and might not  terminate if encountering malicious leader. 
$\seedgen$ is a   protocol   with two successive   committing   and   revealing phases, and can be defined as follows.  

\begin{definition}[Reliable Broadcasted Seeding]\label{def:seeding} $\seedgen$ is a protocol with
	two successive committing and revealing phases, and can be defined as follows.
	
	\smallskip
	{\textsc{Syntax}}.
	For each      protocol instance with an identifier $\ID$,   
	it has a designated party called leader $\node_L$ and is executed among $n$ parties with up to $f$ byzantine corruptions.
	Each party   takes as input the system's public knowledge  and its   private keys, and then sequentially executes the committing phase and the revealing phase,
	at the end of which it outputs a $\lambda$-bit string $seed$. 

	\smallskip
	{\textsc{Properties}}. 
	It  satisfies the next properties   with all but negligible probability:
	\begin{itemize}
		
		\item {\bf Totality}.  If some honest party outputs in the $\seedgen$ instance associated to $\ID$, 
		then every honest party  activated to execute the $\seedgen$ instance would complete the execution and output.
		
		
		\item {\bf Correctness}. For all $\ID$, if the leader $\node_L$ is honest and all honest parties are activated on $\ID$, all honest parties would  output for $\ID$.
		
		\item {\bf Committing}. Upon  any honest party completes the protocol's committing phase
		and starts to run the protocol's revealing phase on  session $\ID$, there exists a fixed  value $seed$, 
		such that if any honest party outputs for $\ID$, then it outputs value $seed$ .
		
		\item {\bf Unpredictability}. Prior to that $k$ honest parties ($1 \le k \le f+1$) are activated to run the protocol's revealing phase on  session $\ID$, 
		the adversary $\adv$ cannot predicate the output $seed$. 
		Namely, $\adv$ guesses a value $seed^*$ before $k$ honest parties are activated on $\ID$, 
		then the probability that $seed^*=seed$ shall be negligible, where $seed$ is the output of some honest party for $\ID$.
	\end{itemize}
\end{definition}

\smallskip
{\textsc{Remark on committing and unpredictability}}. Combining the committing and unpredictability properties would ensure that no one can predicate the output, before the seed to output is already   fixed, which is critical to guarantee that VRFs evaluated on the output $seed$ cannot be biased by manipulating the seed generation. Intuitively, if the adversary still can bias its own VRF's output, it must can query the VRF oracle (which performs as random oracle \cite{david2018ouroboros}) with the right   $seed$ in a number of polynomial queries (before $seed$ is committed), which would raise contradiction to break the unpredictability property.

\smallskip
{\textsc{Remark on totality}}. The totality property ensures that no honest party would receive some output $seed$ solely. 
I.e., if an honest party gets $seed$, we can assert that all honest parties will do so.

\begin{lemma}\label{lemma:seed}
	In the asynchronous message-passing model with bulletin PKI assumption,
	there exists a $\seedgen$ protocol   among $n$ parties that can tolerate up to $f < n/3$ static byzantine corruptions, terminate in constant asynchronous rounds, and cost  $\bigO(n^2)$ messages and $\bigO(\lambda n^2)$ bits, assuming \textsf{EUF-CMA} secure  digital  signature and SXDH assumption.
\end{lemma}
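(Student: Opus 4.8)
The plan is to build $\seedgen$ from an aggregatable publicly verifiable secret sharing scheme whose reconstruction threshold is set to $2f+1$ and whose secrecy reduces to SXDH, glued together by Reiter's consistent broadcast \cite{reiter1994secure} in the committing phase and Bracha's reliable broadcast \cite{bracha1987asynchronous} (equipped with an external validity predicate) in the revealing phase. Concretely, in the committing phase each party deals a fresh random sub-script to the leader $\node_L$; the leader homomorphically aggregates $n-f$ valid sub-scripts into one $\bigO(\lambda n)$-bit script $\pvss$ and runs a consistent broadcast on $\hash(\pvss)$. A party signs $\hash(\pvss)$ only after it has locally checked that its \emph{own} sub-script was incorporated into the aggregate, so the resulting quorum certificate of $n-f$ signatures simultaneously binds $\pvss$ and certifies that at least $n-2f=f+1$ honest sub-secrets are summed into the committed value. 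A party completes the committing phase once it holds $\pvss$ together with a valid certificate. In the revealing phase, a party that has completed committing decrypts its own (publicly verifiable) share from $\pvss$ and returns it to the leader — this is exactly the act that ``activates the revealing phase''; after gathering $2f+1$ verified shares the leader reconstructs the unique committed secret $seed$ and reliably broadcasts only the short pair $(\hash(\pvss),seed)$, where an honest party echoes iff $seed$ is consistent with the commitment bound by the certificate.

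Given this construction, I would obtain \textbf{Totality} and the liveness part of \textbf{Correctness} directly from the broadcast primitives: once some honest party outputs it has observed $2f+1$ \READY messages, and the amplification rules ($f+1$ \READY $\Rightarrow$ send \READY and $2f+1$ \READY $\Rightarrow$ output) then force every activated honest party to output the same $seed$; when the leader is honest all honest parties receive a valid $\pvss$ and certificate, contribute shares, and validate the correctly reconstructed $seed$, so all of them output. For \textbf{Commitment} I would use quorum intersection: any two sets of $n-f$ signers overlap in at least $f+1$ parties and hence in at least one honest party, who signs a single $\hash(\pvss)$; by collision resistance all honest parties completing the committing phase therefore hold the same $\pvss$, binding of the PVSS fixes a unique reconstructable $seed$, and the echo predicate together with uniqueness of the PVSS secret prevents any honest party from ever outputting a different value.

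The crux is \textbf{Unpredictability}, which must survive a corrupted leader, and this is exactly what the inclusion-checked aggregation buys us: since at least one honest sub-secret is summed into $seed$ and is hidden from everyone but its own dealer by PVSS secrecy, no coalition containing the leader knows $seed$ at the end of the committing phase. I would then account for the shares the adversary can collect once revealing begins: it decrypts the $f$ corrupted shares for free and, since honest parties send their shares to the (possibly corrupted) leader, it learns one further share for each honest party that activates the revealing phase. Hence before the largest admissible number $k=f+1$ of honest parties activates revealing, the adversary holds at most $f+(k-1)=2f<2f+1$ shares, strictly below the threshold; by the high-threshold secrecy of the aggregatable PVSS the value $seed$ is then indistinguishable from uniform, so any advance guess $seed^\ast$ equals $seed$ with only negligible probability. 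A standard hybrid turns a non-negligible prediction advantage into an SXDH distinguisher, closing this case.

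Finally, for the stated complexity I would tally $\bigO(n)$ messages of $\bigO(\lambda n)$ bits (the sub-scripts, the aggregate, and the certificate), $\bigO(n)$ messages of $\bigO(\lambda)$ bits (signatures and decryption shares), and the broadcast of $(\hash(\pvss),seed)$ contributing $\bigO(n^2)$ messages of $\bigO(\lambda)$ bits, for a total of $\bigO(n^2)$ messages, $\bigO(\lambda n^2)$ bits, and constant asynchronous rounds. I expect the principal obstacle to be respecting this budget against a malicious leader: such a leader may hand the full $\bigO(\lambda n)$-bit script $\pvss$ to as few as $f+1$ honest parties, so the remaining honest parties cannot recheck $seed$ against $\pvss$, yet every message of the revealing broadcast must stay $\bigO(\lambda)$-sized. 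I would resolve this exactly as in the reconstruction phase of our $\AVSS$: let the succinct certificate (which every party receives inside the broadcast) bind a short commitment against which $seed$ is verified, and lean on Bracha's \READY-amplification so that the honest parties lacking $\pvss$ still relay and eventually output the unique agreed $seed$ without validating it locally.
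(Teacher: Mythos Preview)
Your overall architecture matches the paper's: aggregate per-party $\PVSS$ sub-scripts at the leader, fix the aggregate via a quorum certificate on $\hash(\pvss)$, have parties release decryption shares only after seeing the certificate, and then Bracha-broadcast the reconstructed $seed$. The correctness, totality, commitment, and complexity arguments you sketch are essentially those of the paper.

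There is, however, a real gap in how you certify that the committed aggregate contains honest randomness. You stipulate that ``a party signs $\hash(\pvss)$ only after it has locally checked that its \emph{own} sub-script was incorporated into the aggregate,'' and then require a certificate of $n-f$ such signatures. But an honest leader, after collecting and aggregating the first $n-f=2f+1$ sub-scripts it hears, can only hope for signatures from those $2f+1$ contributors; up to $f$ of them may be corrupted and simply withhold their signatures, leaving the honest leader with at most $f+1$ signatures, strictly fewer than the $2f+1$ you need. So your Correctness property (honest leader $\Rightarrow$ all honest parties output) fails. Lowering the certificate threshold to $f+1$ does not save you either, since two $(f+1)$-sets need not intersect in an honest party and Commitment would then break.

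The paper closes this gap with an ingredient you do not use: the aggregatable $\PVSS$ of Gurkan et al.\ carries \emph{unforgeable weight tags} (built from signatures of knowledge), so that \emph{any} party---contributor or not---can publicly verify from the script alone that it aggregates contributions from $2f+1$ distinct identities, hence from at least $f+1$ honest ones. With this public check replacing your self-inclusion check, every honest party is willing to sign a valid aggregate, the honest leader comfortably collects $2f+1$ signatures, and Unpredictability then follows exactly along the lines you outline (at most $2f$ decryption shares are available before $f+1$ honest parties enter the revealing phase, below the $2f+1$ threshold). The missing idea is thus not in the broadcast plumbing but in the mechanism that makes ``enough honest randomness is inside $\pvss$'' a \emph{publicly} verifiable predicate rather than one only a contributor can attest to.
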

The recent elegant result of aggregatable public verifiable secret sharing ($\PVSS$) due to Gurkan et al. \cite{gurkan2021aggregatable} lift a $\PVSS$  scheme to enjoy aggregability. 
Employing this aggregatable $\PVSS$, we can construct an exemplary $\seedgen$ protocol. Intuitively, it is simple to let each party send an aggregatable $\PVSS$ script carrying a random secret to the leader,  so the leading party can aggregate them to produce an aggregated  $\PVSS$ script committing an unpredictable nonce contributed by enough parties (e.g., $2f+1$).
Then, before recovering the unpredictable secret hidden behind the aggregated $\PVSS$ script, 
the leader must send it to at least $2f+1$  parties to collect enough digital signatures to form a ``certificate'' to prove that the nonce is fixed and committed to the $\PVSS$ script. Only after seeing such proof, each party would decrypt its corresponding share from the committed  $\PVSS$ script, thus ensuring the unpredictability and commitment properties.
We defer the proof for Lemma \ref{lemma:seed} along with the  exemplary  construction (cf. Alg. \ref{alg:seed}) in Appendix \ref{app:seed}.

\ignore{
	\begin{definition}[$(n, f, f+k, \beta)$-Weak Leader election] \label{def:wle}
		A weak leader election protocol has syntax and properties as follows.
		
		\smallskip
		{\textsc{Syntax}}.
		For each protocol instance wit session identifier $\ID$, 
		every party  takes the system's public knowledge (i.e., $\lambda$ and all public keys and ) and its own private keys as input, and outputs a tuple $(j, r, \pi)$. 

		\smallskip
		{\textsc{Properties}}. 
		It satisfies the next properties except with negligible probability:
		\begin{itemize}
			\item {\bf Termination}. If all honest parties are activated on  $\ID$, every honest party will output for $\ID$. 
			\item {\bf validity}. If some honest party output a tuple $(j, r, \pi)$, then $(r,\pi)$ is the correct VRF evaluation and proof computed by $\node_j$
			\item {\bf Unpredictability}. Prior to that $k$ honest parties ($1 \le k \le f+1$) are activated to execute the protocol on   $\ID$, 
			the adversary $\adv$  cannot fully predicate a $l \in [n]$ which is related to some honest party's the output. More precisely, 
			consider the following predication game:
			$\adv$ guesses a $l^* \in n$ before $k$ honest parties activated on $\ID$,
			if $l^*$ equals to $r \mod n$, which $r$ is some honest party's output for $\ID$, we require that $\Pr[\adv \textnormal{ predicts } l^*]\leq 1-\alpha/n$.
		\end{itemize}
	\end{definition}
}

\begin{figure}
	\begin{center}
		\vspace{-0.5cm}
		\captionsetup{font={footnotesize}}
		\includegraphics[width=14cm]{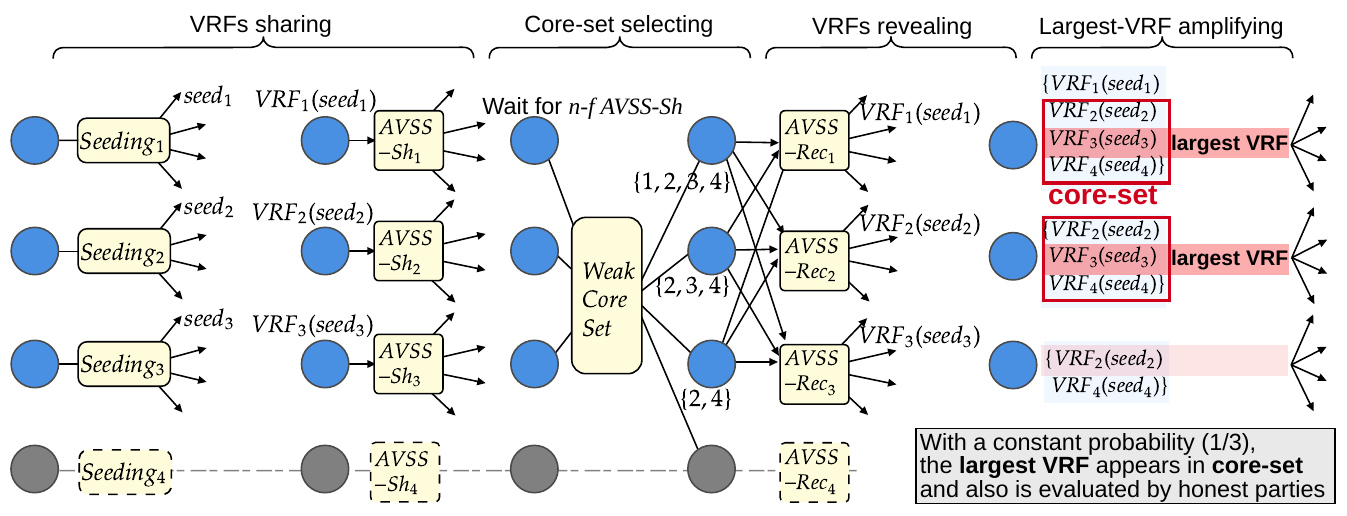}\\
		\vspace{-0.2cm}
		\caption{Overview of our $\coin$ protocol.  VRFs (patched by $\seedgen$) are shared via $\AVSS$es. Then, $f+1$ honest parties  get a   core-set   representing $n-f$ completed $\AVSS$es. After that, the VRFs are revealed, and each party multicasts the largest VRF seen by it.}	\label{fig:coin}
	\end{center}
	\vspace{-0.4cm}
\end{figure}

\medskip
\noindent
{\bf Constructing private-setup free $\coin$}. With $\seedgen$ at hand,  we are ready to present our $\coin$ construction (formally shown in Alg. \ref{alg:coin}), which has four main steps:  
\begin{enumerate}
	\item {\em VRFs sharing} (Line 1-8). Each party activates a $\seedgen$ process as leader and participants in all other $\seedgen$ processes  to get the seeds.
	A party $\node_i$ activates its own $\sssh$ instance as dealer to share its VRF evaluation-proof after obtaining its own VRF seed $seed_i$;
	once obtaining $seed_j$ besides $seed_i$,
	$\node_i$ also activates the corresponding $\sssh$ instance as a participant.
	
	\begin{algorithm}[h!]
		\begin{footnotesize}
			
			\caption{$\coin$ protocol with identifier $\ID$, {\color{blue} for {each party $\node_i$}}}
			\label{alg:coin}
			\begin{algorithmic}[1]

				\State $\Si \leftarrow \emptyset$, $\Sigma \leftarrow \emptyset$,$\R \leftarrow \emptyset$,  $C \leftarrow \emptyset$,  $X \leftarrow 0$
				\State $\widehat{\Si} \leftarrow \bot$, $seed_j \leftarrow \bot$ for each $j$ in $[n]$
				
				\State activate $\seedgen[\langle  \ID, j \rangle]$ for each $j \in [n]$ with being the leader in $\seedgen[\langle  \ID, i \rangle]$
				%
				
				\Statex\Comment{If a trusted nonce $\eta$ is provided by ``genesis'' (generated by an initialization functionality after all 
					have registered at PKI, cf. \cite{david2018ouroboros}), this line can be replaced by  ``$seed_j \leftarrow \eta $, for each $j$ $\in$ $[n]$''}
				
				\vspace*{0.15 cm}
				\Upon {$seed_j \leftarrow \seedgen[\langle  \ID, j \rangle]$}
				\IF {$j = i$}
				\State $(r, \pi) \leftarrow \Eval^\ID_i(seed_j)$, and activate $\sssh[\langle  \ID, j \rangle]$ as dealer with taking $(r, \pi)$ as input 
				\Else
				\State activate $\sssh[\langle  \ID, j \rangle]$ as a non-dealer participant
				\EndIf
				
				\EndUpon

				\vspace*{0.15 cm}
				\Upon {receiving output from $\sssh[\langle  \ID, j \rangle]$}
				\State $\Si \leftarrow \Si \cup \{ j \}$
				\IF {$|\Si| = n-f$}
				\State activate $\wcs$ taking $\Si$ as input
				\EndIf
				\EndUpon

				\vspace*{0.15 cm}
				\Wait { for $\wcs$ outputs $ \widehat{\Si}$}
				
				\State \textbf{send} $\RECREQ(\ID, k)$ to all parties for every $k \in \widehat{\Si}$
				\Wait { for that for every $k \in \widehat{\Si}$, $\ssrec[\langle  \ID, k \rangle]$ outputs $(r_k,\pi_k)$} {\bf do}
				\FOR {each $k \in \widehat{\Si}$}
				\If {$\VerifyVRF^\ID_k(seed_k, r_k,\pi_k) = 1$}
				\State $R \leftarrow R \cup (k, r_k, \pi_k)$
				\EndIf
				\ENDFOR
				\If {$R\neq \emptyset$} {$\ell \leftarrow \argmax_k \{r_k ~|~ (k, r_k, \pi_k) \in R \}$}  \Comment{Index of the largest VRF in $R$} \Else { $\ell \leftarrow \bot$, $r_{\ell} \leftarrow \bot$, $\pi_{\ell} \leftarrow \bot$} \EndIf
				\State \textbf{send} $\CANDIDATE(\ID, \ell, r_{\ell}, \pi_{\ell})$ to all parties
				\EndWait
				\EndWait
				
				\vspace*{0.15 cm}
				\Upon {receiving $\RECREQ(\ID, k)$ from any party for the first time}
				\Wait { for $\widehat{\Si} \ne \bot$ and that $\sssh[\langle  \ID, k \rangle]$ outputs $ss_k$} {\bf do}		\Comment{If $\widehat{\Si}$ becomes $\emptyset$, it is no longer $\bot$}	
				\State activate $\ssrec[\langle  \ID, k \rangle]$ with taking $ss_k$ as input 
				\EndWait
				\EndUpon

				\vspace*{0.15 cm}		
				\Upon {receiving  $\CANDIDATE(\ID, \ell', r_{\ell'}, \pi_{\ell'})$    from $\node_j$ for the first time}
				\If {${\ell'} = \bot$} {$X \leftarrow X + 1$}
				\Else { {\bf if} $\VerifyVRF^\ID_{\ell'}(seed_{\ell'}, r_{\ell'}, \pi_{\ell'}) = 1$ {\bf then}}
				\Comment{Verifying VRF implicitly waits for $seed_{{\ell}'} \ne \bot$}
				\State $C \leftarrow C \cup (j, \ell', r_{\ell'}, \pi_{\ell'})$
				
				\EndIf

				\IF {$|C| + X =n-f$}
				\State $\hat{\ell} \leftarrow \argmax_{\hat{\ell}}  \{r_{\hat{\ell}}  | (j, \hat{\ell} , r_{\hat{\ell}} , \pi_{\hat{\ell} }) \in C \}$ \Comment{Index of the largest VRF in $C$}
				\State \textbf{output} the lowest bit of $r_{\hat{\ell}}$
				\EndIf
				
				\EndUpon
				
			\end{algorithmic}

		\end{footnotesize}
		
	\end{algorithm}
	
	\item {\em Core-set selecting} (Line 9-12). 
	Each party $\node_i$ records a local set $\Si$ of indices representing the completed $\sssh$ instances.
	Once the local $\Si$ of $\node_i$ contains $n-f$ indices, it activate $\wcs$ taking $\Si$ as input.
	
	\item {\em VRFs revealing} (Line 13-24). 
	Once    $\wcs$ outputs $\hat{\Si}$, an honest party $\node_i$   starts to reconstruct $\AVSS$ associated to the indices in $\hat{\Si}$. 
	After reconstructions, it might get some valid VRF evaluation-proof, and then multicasts the VRF with maximum evaluation to all parties. 
	\item {\em Largest-VRF amplifying} (Line 25-31). After receiving $n-f$ $\CANDIDATE$ messages encapsulating valid VRF evaluation-proof pairs,  $\node_i$  selects the index  $\hat{l}$  of the largest evaluation and outputs the lowest bit of $r_{\hat{l}}$.
\end{enumerate}


 \medskip
 \noindent
 {\bf Security analysis of $\coin$}. The main security properties of $\coin$ can be bridged to the next two key lemmas.
 Intuitively, Lemma \ref{lemma:terminate} states that every party can choose a speculative largest VRF  evaluation. This is   because at least $n-f$ $\seedgen$s and  $n-f$ $\AVSS$es must complete to ensure the completeness of $\wcs$, and implies the termination of $\coin$. While, Lemma \ref{lemma:core} bounds the probability of good case, and states that with a constant probability $\alpha$, all honest parties could  decide the same speculative largest VRF  evaluation $r_{\hat{l}}$ that is also evaluated by some honest party. This implies that with constant probability,  the  output bit  is as uniformly flipped. 
 
\begin{lemma}\label{lemma:terminate}{\em \bf (Termination)} 
	If all honest parties are activated on $\ID$, every honest party will decide a speculative largest VRF evaluation $r_{\hat{l}}$ with a valid proof $\pi_{\hat{l}}$, and all honest parties can eventually receive     the same $seed_{\hat{l}}$ that   $r_{\hat{l}}$ is evaluated on.
\end{lemma}
\begin{proof}
	 According to the correctness and commitment of $\seedgen$, 
	if all honest parties participate in $\seedgen[\langle \ID, j \rangle]$, every party will get the same $seed_j$  
	regarding an honest $\node_j$,
	which means  that all honest parties can complete their $\seedgen$s, compute their VRFs
	and activate their $\sssh$ instances that would finally joined by all honest parties.
	So every honest party will eventually complete at least $n-f$ $\sssh$ instances and record a $(n-f)$-sized set $\Si$ including the indexes of which $\sssh$ instances it participants in. Then every honest party activates $\wcs$ taking its set $\Si$.

	From the totality of $\AVSS$, if some honest party completes $\sssh$ instance on $\ID$, all honest parties will complete. So any index in some honest party's input set can eventually appear in all honest parties' set. From the termination of $\wcs$, all honest parties will output a set $\hat{\Si}$ and send $\RECREQ$ messages.
	
	From the validity of $\wcs$ and the totality of $\AVSS$, all honest parties would complete all $\sssh$ instances corresponds to its $\hat{\Si}$ and then start $\ssrec$s. According to the totality and commitment of $\AVSS$, all secrets corresponds to its $\hat{\Si}$ can be reconstructed.
	Recall that  for each $k \in \hat{\Si_i}$, $\node_i$ participants in $\sssh[ID, k]$. An honest party activates an $\sssh[\ID, k]$ only if it receives a $seed_k$ from  $\seedgen[\langle \ID, k \rangle]$. 
	This means that for each $k \in \hat{\Si}$, 
	$\node_i$  can output in $\seedgen[\langle \ID, k \rangle]$ to get a common $seed_k$.
	So for each $k \in \widehat{\Si}$, it can check whether $(k, r_k, \pi_k)$ is a validated VRF result, and pick up the maximum $r_l$ among all valid $r_k$. 

	Finally, every honest party sends the picked $(l, r_l, \pi_l)$ using a $\CANDIDATE$ message to all parties. All honest parties can eventually receive at least $n-f$ valid $\CANDIDATE$ from different parties. According to the totality and commitment of $\seedgen$, if any honest party gets  $seed_j$ from $\seedgen[\langle \ID, j \rangle]$, all honest parties would obtain the same $seed_j$ so all honest parties can get common VRF seeds and mutually consider whether others' $\CANDIDATE$ messages are valid, and then output $(j, r, \pi)$ where $r$ is the maximum of all $r_l$ among valid $\CANDIDATE$ messages.
\end{proof}

\begin{lemma}\label{lemma:core} {\em \bf (Good-case bound)} 
	Let $\mathsf{Event_{good}}$ to denote the case in which the (f+1)-supporting core-set $\Si^*$ solicits an honest party's VRF evaluation $r$ that is also largest among all parties' VRF. The remaining case denoted by $\mathsf{Event_{bad}}$ to cover all other possible executions. Then,   $\Pr[\mathsf{Event_{good}}] \ge \alpha=1/3$, under the  ideal functionality of VRF   in \cite{david2018ouroboros} (which is realizable in the random oracle model with CDH assumption).
\end{lemma}
\begin{proof}
	From the ($f+1$)-support core set of $\wcs$, at the moment when the first honest party outputs from $\wcs$ and invokes any $\ssrec$ instance, there has existed a core set $\Si^*$ including $2n/3$ indices, each of which represents a shared VRF's evaluation and at most $f < n/3$ indices of which are shared by adversary. 
	Each VRF's evaluation has a $1/n$ probability to be the maximal. Otherwise, the adversary directly breaks the pseudorandomness of VRF by biasing the distribution of corrupted parties' VRF evaluations (which is infeasible because according to the commitment and unpredictability of $\seedgen$, VRF seeds generated by $\seedgen$ protocols are unpredictable before it is committed, and once the $\seedgen$ completing the committing phase, the VRF seeds are fixed).
	Thus, the probability that the $\mathsf{Event_{good}}$ occurs is at least ${{{{2n}\over 3} - {n\over 3}}\over{n}}={1 \over 3}$. 
\end{proof}



\begin{lemma}\label{lemma:core-unpred}
	If $\mathsf{Event_{good}}$ defined in Lemma \ref{lemma:core} occurs, there does not exist a polynomial adversary which can predicate the lowest bits(e.g. lowest $\lambda/2 $ bits) of $r_{\hat{l}}$. 
\end{lemma}

\begin{proof}
	Recall that the honest dealers' $\ssrec$s  leak nothing about their VRF's evaluations so at the moment when $\Si^*$ is fixed, the adversary learns nothing about honest parties' VRF evaluations. Thus when $\mathsf{Event_{good}}$ defined in Lemma \ref{lemma:core} occurs, the adversary cannot predicate the lowest bits of output better than guessing. Otherwise it can break the pseudorandomness of VRF by predicating the lowest bits of honest parties VRF's evaluations without accessing their secret keys.   
\end{proof}

\begin{lemma}\label{lemma:core-common}
	If $\mathsf{Event_{good}}$ defined in Lemma \ref{lemma:core} occurs, all honest parties will output the same bit $b$. 
\end{lemma}
\begin{proof}
	When $\mathsf{Event_{good}}$ defined in Lemma \ref{lemma:core} occurs, at least $f+1$ honest parties will receive the largest VRF's evaluation $r$ from some honest party and multicast it with $\CANDIDATE$ messages. All honest parties can receive at least one $\CANDIDATE$ message containing $r$ so that all honest parties will output the lowest bit of $r$.
\end{proof}

\begin{theorem}\label{thm:coin}
	In the the bulletin PKI setting and the random oracle model, our $\coin$ protocol (formally described in  Alg. \ref{alg:coin}) realizes $(n,f, 2f+1,1/3)$-$\coin$ against $n/3$ static Byzantine corruptions in the asynchronous message-passing model, conditioned on that the underlying primitives are all secure.
\end{theorem}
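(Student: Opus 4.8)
The plan is to verify the two properties demanded by Definition~\ref{def:coin} for the parameters $(n,f,2f+1,1/3)$, i.e.\ \textbf{Termination} and \textbf{Unpredictability} with $\alpha=1/3$ and $k=f+1$, treating all building blocks ($\seedgen$, $\AVSS$, the VRF under malicious key generation, signatures, and the hash) as secure. The two properties are argued independently: Termination by chaining the liveness (Correctness/Totality) of $\seedgen$ and $\AVSS$ through the four phases, and Unpredictability by isolating a constant-probability ``good event'' on which all honest parties agree on a single uniform, unbiased bit.

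For \textbf{Termination} I would proceed phase by phase. By Correctness/Totality of $\seedgen$ every honest party eventually obtains $seed_j$ for all $j$ whose instance any honest party completes (in particular for the $\ge n-f$ honest leaders), and hence activates those $\sssh$ instances; by Correctness/Totality of $\AVSS$ it then completes at least $n-f$ of them, so $|\Si|\ge n-f$ and it sends $\LOCK$. Since $\Si$ is monotone and an honest $\LOCK(\widetilde{\Si}_j)$ carries a snapshot of completed instances, $\AVSS$ Totality forces every honest receiver's $\Si$ to eventually contain $\widetilde{\Si}_j$, so it returns $\CONFIRM$; thus each honest party collects $n-f$ signatures and emits a valid $\COMMIT$. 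The crucial liveness step is reconstruction: after an honest party fixes $\widehat{\Si}$ on a valid $\COMMIT$ and multicasts $\RECREQ(\ID,k)$ for each $k\in\widehat{\Si}$, every honest party completes $\sssh[\langle\ID,k\rangle]$ by Totality and activates $\ssrec[\langle\ID,k\rangle]$, so all honest parties participate and by $\AVSS$ Commitment each such $\ssrec$ terminates; the requester's wait ends and it sends a $\CANDIDATE$. Finally the $\ge n-f$ honest $\CANDIDATE$ messages are all processable (the seeds of honest candidates are available by $\seedgen$ Totality, so the VRF checks pass), so every honest party reaches $|C|+X=n-f$ and outputs.

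For \textbf{Unpredictability} I would first show that \emph{every} in-play VRF value is uniform and hidden when the core set is fixed. Honest evaluations are hidden by $\AVSS$ Secrecy until reconstruction, and reconstruction of any instance starts only after a valid $\COMMIT$ is seen; moreover a corrupt party cannot grind, since its keys are registered at the PKI before the protocol while, by $\seedgen$ Unpredictability, $seed_j$ stays unpredictable until $f+1$ honest parties reach the revealing phase, so (modelling the VRF as a random oracle) $r_j=\Eval_j^\ID(seed_j)$ is a fresh uniform value even for corrupt $j$, independent of the adversary's view at core-set time. Define the \textbf{good event} $E$: the largest in-play value lies in $S^\ast\cap H$, where $S^\ast$ is the core set certified by $\COMMIT$ and $H$ is the honest set. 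The quorum proof carries $\ge f+1$ honest signatures produced only after the signer completed every instance in $S^\ast$, so these $\ge f+1$ honest parties reconstruct all of $S^\ast$ and, on $E$, all adopt the same global maximum $r^\ast$ as their local candidate; since any $n-f$ received $\CANDIDATE$s must include one from these $\ge f+1$ senders (as $(n-f)+(f+1)>n$) and $r^\ast$ is the global maximum, every honest party outputs the lowest bit of $r^\ast$, yielding agreement on one honestly generated uniform bit. To bound $\Pr[E]$ I would invoke exchangeability: conditioned on the adversary's view when $S^\ast$ is fixed, the at most $n$ in-play values are i.i.d.\ uniform with the honest ones independent of the core-set choice, so the maximum is equally likely in any slot; as $|S^\ast\cap H|\ge f+1$ and the in-play set has size at most $n=3f+1$, $\Pr[E]\ge (f+1)/(3f+1)\ge 1/3$. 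Conditioned on $E$ the output is the lowest bit of $r^\ast$, uniform and independent of the guess $b^\ast$ (fixed before $f+1$ honest parties are active, hence before $r^\ast$ is revealed), so $\Pr[b^\ast=\text{output}\mid E]=1/2$ and $\Pr[\adv\ \text{predicts}]\le(1-\Pr[E])+\tfrac12\Pr[E]=1-\tfrac12\Pr[E]\le 1-\alpha/2$ with $\alpha=1/3$.

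\textbf{Main obstacle.} The delicate part is the probabilistic core of $E$ under the \emph{weak} core set: only $\ge f+1$ (not all) honest parties are guaranteed to share $S^\ast$, and a rushing asynchronous adversary adaptively decides which instances get reconstructed and which $\CANDIDATE$ reaches which honest party. I must simultaneously argue (i) that the honest signers of $S^\ast$ reconstruct a common set whose maximum is indeed the propagating value $r^\ast$, (ii) that the adversary's freedom to enlarge the in-play set dilutes the good slots by no more than the factor $(f+1)/(3f+1)$ — the exact source of the $1/3$ — and (iii) that seeing its own (uniform but visible) evaluations gives the adversary no way to bias which slot is maximal. Making the exchangeability argument rigorous against an adaptive scheduler, and pinning down how $\widehat{\Si}$, the certified $S^\ast$, and the $\RECREQ$ propagation interact so that no larger un-propagated honest value can silently break agreement, is where the real work lies.
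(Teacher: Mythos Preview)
Your proposal is correct and follows essentially the same approach as the paper: prove Termination by chaining the liveness guarantees of $\seedgen$ and $\AVSS$ through the four phases, and prove Unpredictability by isolating a ``good event'' (the global VRF maximum belongs to an honest party inside the core set) of probability at least $1/3$, on which all honest parties output the same uniform bit via the $\CANDIDATE$ quorum-intersection argument. Your treatment is in fact more explicit than the paper's on two points---the pigeonhole step $(n-f)+(f+1)>n$ that forces every honest receiver to see $r^\ast$, and the computation $(f+1)/(3f+1)\ge 1/3$---whereas the paper states these conclusions without spelling them out; the paper also packages the core-set existence and the $1/3$ bound as separate lemmas (Lemmas~\ref{lemma:core-set} and~\ref{lemma:case}) rather than inlining them. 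The one imprecision in your write-up is the phrase ``conditioned on the adversary's view when $S^\ast$ is fixed, the \ldots values are i.i.d.\ uniform'': conditioned on that view the corrupt values are known, so full exchangeability fails; the argument that actually works (and that you flag as the main obstacle) is that $S^\ast\cap H$ is measurable with respect to the corrupt values only, while the honest values remain i.i.d.\ uniform and exchangeable, which is enough to get $\Pr[M\in S^\ast\cap H]\ge |S^\ast\cap H|/n\ge (f+1)/(3f+1)$.
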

\begin{proof}
	We prove that Alg.  \ref{alg:coin} realizes the  properties of $\coin$ in  Def. \ref{def:coin} one by one:
	\begin{itemize}
		\item {\em Termination}. Termination can be proved directly from Lemma \ref{lemma:terminate}, 				
		
		\item {\em Reasonably fair bit-tossing}. From Lemma \ref{lemma:core}, the $\mathsf{Event_{good}}$ occurs with a probability $\alpha = 1/3$. 
		Before $f+1$ honest parties are activated to run the protocol, the adversary cannot predicate the protocol execution will fall into which case because no one can predicate the VRF seeds and thus even the corrupted parties cannot compute their VRF evaluations. 
		Following the same argument, from Lemma \ref{lemma:core-unpred},  before $f+1$ honest parties run the protocol, and when $\mathsf{Event_{good}}$ occurs, the adversary cannot predicate the lowest bit of the largest VRF's evaluation better than guessing. Moreover, in this case, all honest parties output the same bit according to  Lemma \ref{lemma:core-common}. Therefore,  the adversary succeeds in predicating some honest party's output with $\alpha/2$ probability.
		When $\mathsf{Event_{bad}}$ occurs with $1- \alpha$ probability, honest parties may not be able to output the same value, i.e., some honest parties output $0$ and some may output $1$. In this case, the adversary can always guess a bit $b$ which is equal to some honest parties' output. Therefore, the probability that adversary wins in the predication game is $\Pr[\adv \textnormal{ wins}] \le 1-\alpha + \alpha/2 = 1 - \alpha/2$, where $\alpha = 1/3$.
	\end{itemize}
\end{proof}
{\textsc{Remark on static security}}. The reason why our common coin cannot tolerate adaptive  corruptions is that the $\seedgen$ protocol is static, which is further caused by using Gurkan et al.'s statically-secure non-interactive  $\PVSS$ instantiation \cite{gurkan2021aggregatable}. Note that   if there is a one-time common random string $\eta$ announced after PKI registration, we do not have to use   $\seedgen$ to pack VRF, and  it is fine to directly use  $\eta$ as VRF seed (cf. Line 3 in Algorithm ). In such setting, our protocol actually can be adaptive secure, because $\AVSS$ and $\wcs$ are adaptively secure to ensure that: 
there must exist $f+1$ {\em forever} honest parties that hold a set of VRF evaluation-proof pairs with an intersection consisting of at least $n-f$ common VRFs, 
so Lemmas \ref{lemma:terminate} and \ref{lemma:core} still hold against an adaptive adversary in such setting. Namely,
given the extra setup assumption of one-time common  random string, our   coin protocol  (and also our later results including $\elect$, $\ABA$ and $\VBA$ constructions) can be adaptively secure.

\medskip
\noindent
{\bf Complexities of $\coin$}. The $\coin$ protocol   incurs $\bigO(n^3)$ messages, because it activates one $\wcs$, $n$ $\AVSS$es and $n$ $\seedgen$s in addition to $n^2$  $\CANDIDATE$ messages and $n^3$ $\RECREQ$ messages.
The overall communication complexity     is $\bigO(\lambda n^3)$, because each $\AVSS$ and $\seedgen$  incurs $\bigO(\lambda n^2)$ bits,   the $\wcs$ protocol exchanges $\bigO(\lambda n^3)$ bits,
and the   size of each $\CANDIDATE$ and $\RECREQ$ messages is $\bigO(\lambda n)$-bit and $\bigO(n)$-bit, respectively. 
Also,  the $\coin$ can terminate in constant asynchronous rounds, mainly because all underlying building blocks would output in constant rounds deterministically.

\subsection{Resulting ABA without private setup}

Given the new private-setup free $\coin$ protocol, one can construct more efficient asynchronous binary agreement ($\ABA$) with expected constant running time and cubic communications with PKI only. 
%
%
In particular, we primarily focus on $\ABA$ with the following standard  definition \cite{cachin2000random,mostefaoui2015signature}.

\begin{definition}[Asynchronous Binary Agreement] \label{def:aba}
	A protocol realizes  $\ABA$, if it has syntax and properties defined as follows.
	
	\smallskip
	{\textsc{Syntax}}.
	For all execution of each protocol instance   with an identifier $\ID$, 
	each party input a single bit besides some implicit input including all parties public keys and its private key, and   outputs a   bit $b$. 	%
	
	\smallskip
	{\textsc{Properties}}. 
	It  satisfies the next properties   with all but negligible probability:
	\begin{itemize}
		\item {\bf Termination}. If all honest parties are activated on $\ID$, then every honest party outputs for $\ID$.
		
		\item {\bf Agreement}. Any two honest parties  that output associated to $\ID$ would output  the same   bit.
		
		\item {\bf Validity}. If any honest party outputs $ b $ for $\ID$, then at least an honest party inputs $ b $  for $\ID$.
	\end{itemize}
\end{definition}

%

\ignore{
	\begin{algorithm}[ht]
	\begin{footnotesize}
	
	\caption{$\aba$ protocol, with identifier $\ID$, {\color{blue} for {each party $\node_i$}} (adapted from \cite{mostefaoui2015signature}) }\label{alg:aba}
	\begin{algorithmic}[1]
	\Statex  \textbf{Initialization:} $r \leftarrow 0$, $est_0 \leftarrow v$

		\Upon {receiving input $v$}
	
		\algrenewcommand\algorithmicloop{\textbf{repeat}}
		\Repeat
	
		\State \textbf{send} ${\VAL(\ID, r,est_r)}$ to all parties
		\Upon {receiving $f+1$ $\VAL(\ID,r,v')$ messages from distinct parties}
		\State \textbf{send} ${\VAL(\ID, r,v')}$ to all parties if ${\VAL(\ID, r,v')}$ has not been sent
		\EndUpon
		\Upon {receiving $2f+1$ $\VAL(\ID,r,v')$ messages from distinct parties}
		\State $values_r \leftarrow values_r \cup \{v'\}$
		\EndUpon
		
		\algrenewcommand\algorithmicloop{\textbf{wait}}
		\Wait { for $values_r \neq \emptyset$} {\bf do}
		\State \textbf{send} ${\AUX(\ID, r, v_r)}$ to all parties, where $v_r$ is the current value in $values_r$
		\EndWait
		
		\Wait { for $n-f$ $\AUX(\ID, r, v_{j})$ messages from a set $S$ of distinct parties such that $|S|=n-f$ and $V_r \leftarrow \bigcup_{\node_j \in S} v_{j}$  is a subset of  $values_r$}  {\bf do}

		\State \textbf{send} ${\CONF(\ID, r,values_r)}$ to all parties
		\EndWait
		
		\Wait { for $n-f$ $\CONF(\ID, r, values_{j})$ messages from a set $S$ of distinct parties such that $|S|=n-f$ and  $S_r \leftarrow \bigcup_{\node_j \in S} values_{j}$ is a subset of $values_r$}  {\bf do}
		\If {$r = 0$}
		\State $coin_r \leftarrow b$
		\Else
		\State $coin_r \leftarrow \coin(\langle \ID, r \rangle)$
		\EndIf
		\If {$|S_r| = 1$}~~//  i.e., there is only one bit $x$ in $S_r$
		\If {$x = coin_r\%2$}
		\State $\textbf{output} $ $x$
		\Else
		\State $est_{r+1} \leftarrow x$
		\EndIf
		\Else  
		\State $est_{r+1} \leftarrow coin_r\%2$
		\EndIf
			\State $r \leftarrow r+1$
		\EndWait
		\EndRepeat
		\EndUpon
		
	\end{algorithmic}
	
	\end{footnotesize}
	
\end{algorithm}
}

\medskip
\noindent
{\bf Constructing $\ABA$}. 
We refrain from reintroducing the $\ABA$ protocols presented in \cite{mostefaoui2015signature} and \cite{crain2020two},
as we only need to plug in our $\coin$ primitive to instantiate their reasonably fair common coin abstraction. More formally,

	


\begin{theorem}\label{thm:aba} 
	Given our $(n,f, 2f+1,1/3)$-$\coin$ protocol, \cite{mostefaoui2015signature} and \cite{crain2020two} can  implement $\ABA$ in the asynchronous message-passing model with $n/3$ static Byzantine corruption and bulletin PKI assumption, and    cost expected constant asynchronous rounds, expected $\bigO(n^3)$ messages and expected $\bigO(\lambda n^3)$ bits. 
\end{theorem}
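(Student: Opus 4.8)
The plan is to treat Theorem \ref{thm:aba} as a modular plug-in argument: the protocols of \cite{mostefaoui2015signature} and \cite{crain2020two} are already proven correct relative to an abstract reasonably fair common coin that is invoked once per round as a fresh instance $\coin[\langle \ID, r \rangle]$ in round $r$, so it suffices to verify that Algorithm \ref{alg:coin} instantiates that abstraction and then to re-run the complexity accounting. First I would recall, via the Remark following Definition \ref{def:coin}, that these classic $\ABA$ frameworks only require a $(n,f,f+1,\alpha)$-$\coin$ with some constant $\alpha<1$; since Theorem \ref{thm:coin} establishes that Algorithm \ref{alg:coin} realizes the strictly stronger $(n,f,2f+1,1/3)$-$\coin$, the abstraction is met with $\alpha=1/3$. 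Concretely I would check the two properties the original proofs consume: Termination (every honest party outputs a bit, so no round can stall on the coin) and reasonable fairness with unpredictability (with probability at least $1/3$ all honest parties output a common bit that the adversary cannot guess with advantage beyond the bound $1-\alpha/2$).

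Next I would establish the three $\ABA$ properties. Agreement and Validity are inherited verbatim from the original analyses, because in those protocols safety never depends on the coin's value: once the honest estimates collapse to a single bit they are locked there by the BV-broadcast and $\AUX$/$\CONF$ message discipline, and validity follows from the all-same-input case, so both hold unconditionally for any coin satisfying Termination. For Termination, i.e. expected constant running time, I would run the standard coupling argument: in any round where the honest estimates have not yet converged, the round's coin is a fresh instance whose output the adversary cannot bias before honest parties activate it, so with probability at least $\alpha=1/3$ the parties see a common unpredictable bit, which matches the adversarially ``forced'' value with probability $1/2$; hence each round decides with probability at least $\alpha/2=1/6$, giving an expected round count at most $6=\bigO(1)$.

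For the complexity bounds I would account for cost per round and then multiply by the expected round count. The $\ABA$ skeleton messages (e.g. $\VAL$, $\AUX$, $\CONF$ in \cite{mostefaoui2015signature}) carry only a bit and a round index, so one round of the skeleton costs $\bigO(n^2)$ messages and $\bigO(\lambda n^2)$ bits; the dominant per-round cost is the single coin invocation, which by the complexity analysis of Algorithm \ref{alg:coin} deterministically costs $\bigO(n^3)$ messages and $\bigO(\lambda n^3)$ bits in constant asynchronous rounds. Since each round's cost is deterministically bounded by a constant times $\bigO(\lambda n^3)$ bits and the number of rounds $T$ satisfies $\mathbb{E}[T]=\bigO(1)$, linearity of expectation (with the per-round bound independent of the execution) yields expected $\bigO(n^3)$ messages, expected $\bigO(\lambda n^3)$ bits, and expected constant running time.

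The main obstacle I anticipate is the timing of unpredictability rather than any of the above bookkeeping. In the asynchronous setting the adversary controls scheduling and can observe all deterministic round-$r$ traffic (the $\AUX$/$\CONF$ exchanges that reveal honest estimates) before the coin is opened, so I must argue that seeing this traffic does not let it predict $\coin[\langle \ID, r \rangle]$. This is exactly where the unpredictability threshold matters: our coin remains unpredictable until $f+1$ honest parties have activated the instance, which is no earlier than the moment the round's forced value is fixed, so the adversary's advance knowledge is confined to data independent of the freshly shared VRF evaluations. Making this ordering rigorous, i.e. arguing that ``forced value fixed'' provably precedes ``coin becomes predictable'' in every adversarial schedule, is the one step I would treat with care; the remainder is a faithful transcription of the existing $\ABA$ proofs with the coin abstraction replaced by ours.
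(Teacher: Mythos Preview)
Your proposal is correct and takes essentially the same approach as the paper: a modular plug-in argument that defers termination, agreement, and validity to \cite{mostefaoui2015signature} and \cite{crain2020two} and then observes that the per-round $\coin$ invocation dominates the complexity. In fact you supply considerably more detail than the paper, which simply cites those works for the three $\ABA$ properties and states in one line that, given a costless coin, those constructions attain expected constant rounds with $\bigO(n^2)$ messages and $\bigO(\lambda n^2)$ bits, so the totals become $\bigO(n^3)$ and $\bigO(\lambda n^3)$ once our $\coin$ is plugged in.
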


The proofs for termination, agreement and validity can be found in \cite{mostefaoui2015signature} and \cite{crain2020two}, respectively. 
The complexity of   resulting $\ABA$s is dominated by our $(n,f, 2f+1,1/3)$-$\coin$ protocol, because given costless $\coin$, both  protocols can terminate in expected constant   rounds and exchange only  $\bigO(n^2)$ bits. 

	

\section{Augment: Towards Leader Election with Agreement}
\label{sec:elect}

This Section presents our  efficient asynchronous leader election ($\elect$) protocol without relying on private setup. This is the key step to realize fast, efficient and private-setup free multi-valued validated byzantine agreement ($\VBA$) in the asynchronous network environment. Considering that $\VBA$ is the quintessential core building block for efficient asynchronous DKG \cite{abraham2021reaching}, our technique  essentially can be plugged in the existing fast-terminating AJM+21 ADKG protocol to reduce its communication complexity from $\bigO(\lambda n^3 \log n)$ to $\bigO(\lambda n^3)$,
thus initialing a   new path to   easy-to-deploy replicated services in the asynchronous  network.

\subsection{Leader election without private setup}
\label{sec:leaderelec}

The aim of the $\elect$ primitive, in our context, is to randomly elect someone of the participating parties \cite{abraham2018validated}. 
More importantly, the primitive shall prevent the adversary from fully predicating which party would be elected, 
otherwise, the adversary might schedule message deliveries and cause the higher level protocol to never stop (or at least cause slow termination).  
For example, in Abraham et al's $\VBA$ \cite{abraham2018validated}, $\elect$ is invoked after $n-f$ input broadcasts are completed, and the   termination of this $\VBA$ protocol requires   $\elect$  to luckily choose an indeed completed input broadcast. Clearly, if the adversary can always predicate the $\elect$ result in advance, it can always delay the to-be-elected broadcast to make it not appear in the $n-f$ completed broadcasts,   thus causing $\VBA$ not to terminate. 

{\em Necessity of Agreement}. In addition, different from $\coin$ that only ensures agreement with a constant probability (e.g., 1/3), $\elect$  always has agreement.  
This is particularly important in many  $\VBA$  constructions, because $\elect$ is usually used to   decide which party's input becomes the final output,  
so lacking agreement in $\elect$ might immediately break   $\VBA$'s agreement. 
Essentially, the main task of this Section is to lift our $\coin$ protocol to realize the necessary   agreement. 

Formally, we would design an $\elect$ protocol realizing the following properties in the asynchronous   network without private setups: 

\begin{definition} [$(n,f, f+k, \alpha)$-Leader Election] \label{def:elect}
	A protocol is said to be $(n,f, f+k,\beta)$-$\elect$, if it is   among $n$ parties with up to $f$ static byzantine corruptions, and has syntax and properties   as follows.
	
		\smallskip
	{\textsc{Syntax}}.
	For each protocol instance with   session identifier $\ID$, 
	every party   takes the system's public knowledge (i.e., $\lambda$ and all public keys) and its own private keys  as input, and outputs a value $\ell \in [n]$. 
	%
	
	%
	
	\smallskip
	{\textsc{Properties}}. 
	It  satisfies the next properties except with negligible probability:
	\begin{itemize}
		\item {\bf Termination}. Conditioned on that all honest parties are activated on  $\ID$, every honest party would output a value $\ell \in [n]$. 
		\item {\bf Agreement}. For any two honest parties $\node_i$ and $\node_j$ that output $\ell_i$ and $\ell_j$ for $\ID$, respectively, there is $\ell_i = \ell_j$. 
		\item {\bf Reasonably fair leader-election}. Before $k$ honest parties ($1 \le k \le f+1$) are activated on  $\ID$, 
		the adversary $\adv$ cannot always predicate the elected leader. 
		More precisely, 
		consider the   predication game:
		 $\adv$ guesses  $\ell^*$ before $k$ honest parties   activated on $\ID$,
		if $\ell^*$   equals to some honest party's output for $\ID$, we say that $\adv$ wins;
		we require   $\Pr[\adv \textnormal{ wins}] \le  1 - \alpha + \alpha/n$.
		
		Here $\alpha$   represents the lower-bound probability that the output is as if uniformly distributed over $[n]$,
		while $1-\alpha$ captures the possibility that the adversary might predicate/bias the output.
	\end{itemize}
\end{definition}

\noindent	 
{\textsc{Remarks}}. 
When plugging in  a $(n,f, f+k,\alpha)$-$\elect$   with agreement ($k \ge 1$), 
most $\VBA$ constructions \cite{cachin2001secure,abraham2018validated,lu2020dumbo} can   preserve their securities and still terminate in expected constant rounds, as long as $\alpha$ is a certain constant between $(0,1]$. 
Also, sometimes it can be important to realize larger $k$ (e.g., $f+1$) to clip the power of   asynchronous adversary as  in the $\VBA$ constructions of \cite{abraham2018validated,lu2020dumbo}, 
and we present a $(n,f, 2f+1,1/3)$-$\elect$ protocol.\footnote{Note that $(n,f, 2f+1,1/3)$-$\elect$ can prevent an adaptive adversary from always proposing the output in some $\VBA$ constructions \cite{abraham2018validated,lu2020dumbo} (which give the honest parties a chance to output their proposals). Similar to AJM+21 \cite{abraham2021reaching},  our $\elect$ protocol has the  potential to realize adaptive security if the underlying aggregatable $\PVSS$  can be adaptively secure. So we lift $k=f+1$ to maximize the strength of results.}

\begin{figure}
	\begin{center}
		\vspace{-0.5cm}
		\captionsetup{font={footnotesize}}
		\includegraphics[width=16.2cm]{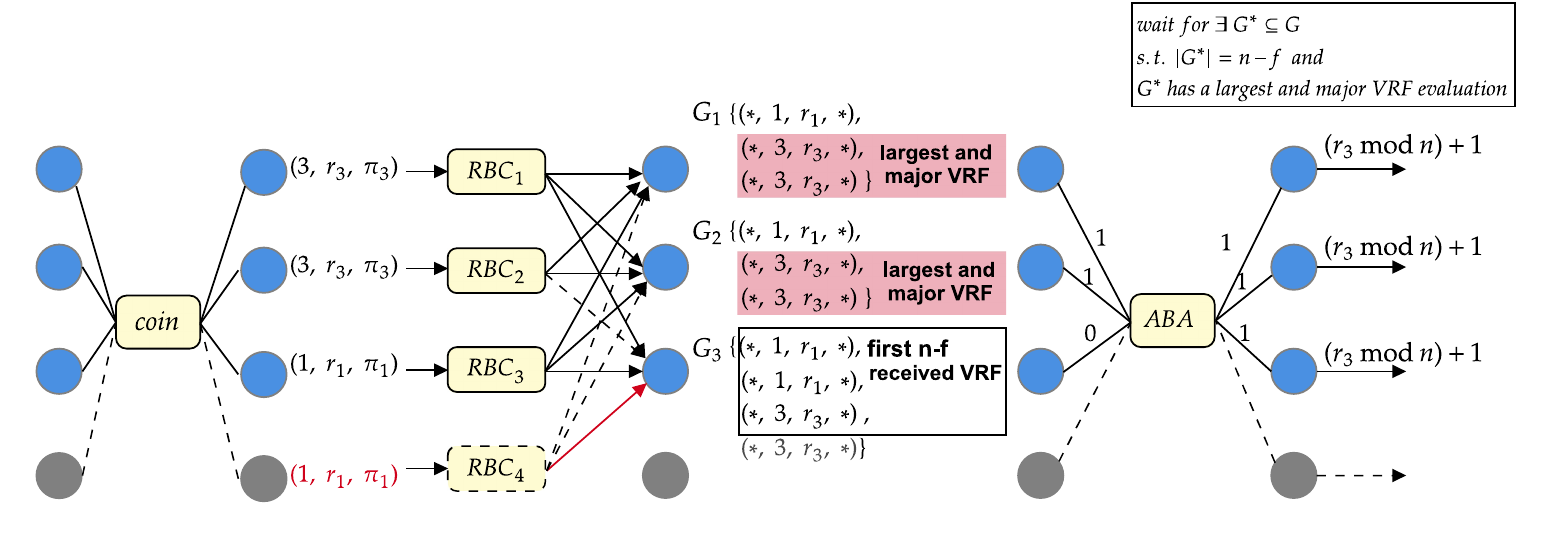}\\
		\vspace{-0.5cm}
		\caption{Overview of our $\elect$ protocol. Each party inputs $1$ to $\ABA$ if there exists a VRF evaluation which is the largest and majority among all $\RBC$ outputs. If $\ABA$ outputs $1$, wait for $G^* \subseteq G_i$ and output.} 
		\label{fig:elect}
	\end{center}
	\vspace{-0.5cm}
\end{figure}
 
\noindent
{\bf High-level rationale}. Our starting point of constructing $\elect$ is our $\coin$ protocol, at the end of which each party can have a speculative largest VRF. 
Recall Lemma \ref{lemma:terminate} and Lemma \ref{lemma:core} that reflect  the essential properties of   $\coin$.    

Lemma \ref{lemma:core} states that: with a constant probability $\alpha$, the speculative largest VRF of all honest parties is essentially same.
So our $\coin$ construction essentially can be thought of a leader election realizing   agreement with only $\alpha$ probability.
Thus, lifting perfect agreement   boils down to the problem of cleaning up the possible disagreement in the else $1-\alpha$ worse cases.

Furthermore, an in-depth thinking on Lemma \ref{lemma:terminate}   brings to light that all honest parties   can get the seeds needed to verify the speculative largest VRFs of all parties at the end of $\coin$ execution. This hints   a possibility that they can  cross-check the   largest VRFs for each other, and then vote on whether a common largest VRF indeed exists.
I.e., each party verifies some speculative largest VRFs from at least $n-f$ parties, and then votes 1 (resp. 0) to an asynchronous binary agreement ($\ABA$), according to whether there   exists   a largest and majority VRF out of the  $n-f$ verified VRFs (resp. or not). This efficient voting procedure through one single $\ABA$ can  resolve the possible disagreement on the speculative largest VRF at the end of coin execution, because the largest and majority VRF in the $n-f$ verified VRFs must be unique if  $\ABA$ returns 1.

\begin{algorithm}[H]
	\begin{footnotesize}
		
		\caption{$\elect$ protocol with identifier $\ID$,  {\color{blue} for {each party $\node_i$} } }\label{alg:elect}
		\begin{algorithmic}[1]

			\State $\G \leftarrow \emptyset$, $\G^* \leftarrow \emptyset$, 
			$ballot \leftarrow  0$, activate $\CBC[\langle \ID,j\rangle]$ for each $j\in [n]$
			\State run the code of $\coin$ in Alg. \ref{alg:coin} with replacing Line 31 by ``$\cmax \leftarrow (\hat{\ell}, r_{\hat{\ell}}, \pi_{\hat{\ell}})$" 
			\Wait  { for $\cmax$ is assigned by $(\hat{\ell}, r_{\hat{\ell}}, \pi_{\hat{\ell}})$, i.e., Line 31 of modified $\coin$ algorithm is executed}
			\State  input $\cmax$ to $\CBC[\langle \ID,i\rangle]$
			\EndWait

			\vspace*{0.13 cm}
			\Upon {$\CBC[\langle \ID,j\rangle]$ outputs $\cmax_{,j} = (\ell_j, r_{\ell_j}, \pi_{\ell_j})$} 
			\If {$\VerifyVRF^\ID_{\ell_j}(seed_{\ell_j}, r_{\ell_j}, \pi_{\ell_j}) = 1$}  		 \Comment{Verifying VRF implicitly waits for $seed_{\ell_j} \ne \bot$} 
			\State $G \leftarrow G \cup (j, \ell_j, r_{\ell_j}, \pi_{\ell_j})$
			\If {$|G|=n-f$}
			\If {exist $(\cdot, \ell^*, r^*, \cdot)$ matching the majority elements in $G$}
			
			\If {$r^*$ is the largest VRF evaluation among all elements in $G$}
			\Statex {\ \ \ \ \ \ \ \ \ \ \ \ } \Comment{Namely, there exists $r^*$ that is the largest and majority VRF   among   $G$}
			\State   
			$ballot \leftarrow 1$ 
			
			\EndIf
			\EndIf

			\State  activate $\ABA[\ID]$ with  $ballot$ as input
			\Wait { for that $\ABA[\ID]$ outputs $b$}
			\If {$b=1$}
			\Wait { for $\exists$ $G^* \subset G$ s.t. $|G^*|=n-f$ and $G^*$ has a largest and major VRF evaluation } 
			\State \textbf{output} $(r^* \bmod  n) +1$, where $r^*$ is the largest and majority VRF   among   $G^*$
			\EndWait
			\Else
			{  \bf output} the default index, i.e., 1
			\EndIf	
			\EndWait
			
			\EndIf
			\EndIf							
			\EndUpon
			

			\vspace*{0.15 cm}

		\end{algorithmic}
		
	\end{footnotesize}
	
\end{algorithm}

\medskip
\noindent
{\bf Constructing $\elect$}. Our   a reasonably fair random $\elect$ protocol is formally shown in Alg. \ref{alg:elect}, and it has three main steps that proceeds as follows:
\begin{enumerate}
	\item {\em Committing the  largest VRF} (Line 1-4). Each party firstly runs the code of $\coin$ protocol, and obtains the speculate largest VRF's evaluation seen in its view, i.e., get  $\cmax=(\hat{\ell}, r_{\hat{\ell}}, \pi_{\hat{\ell}})$, where $\hat{\ell}$ represents that this speculative largest VRF is evaluated by which party, and $r_{\hat{\ell}}$ and $\pi_{\hat{\ell}}$ are the VRF evaluation and proof, respectively. 
	After that, the party 
	broadcasts 
	$\cmax$ to commit this speculative largest  VRF evaluation. 
	Here Bracha's reliable broadcasts ($\CBC$)  \cite{bracha1987asynchronous} are used to prevent the corrupted parties from committing different speculative largest VRF to distinct parties.
	\item {\em Voting on how to output} (Line 5-13). A party can eventually output in $n-f$ $\CBC$s, each of which can return a  valid VRF evaluation-proof pair (which can be verified because the needed $seed$ can be waited from $\seedgen$ protocols). Then, the party checks if there exists a $\CBC$ output s.t. (i) it carries a VRF evaluation same to the majority of $n-f$ $\CBC$ outputs', and (ii) it also carries the largest VRF evaluation among all $n-f$ VRF evaluations received from $\CBC$s.  
	If such an element exists, 
	the party activates $\ABA$ with input 1, otherwise, it activates $\ABA$ with input 0.
	\item {\em Output decision} (Line 14-17). If   $\ABA$ outputs 1, then each party waits for that there exists a $(n-f)$-sized subset $G^*$ of all valid speculative largest VRFs received from $\CBC$s,
	s.t. there exists $r^*$ that is the largest and majority VRF  evaluation among all VRFs in  $G^*$,
	then it outputs $(r^* \bmod  n) +1$. 
	If   $\ABA$ outputs 0, all parties would output a default index, e.g., 1.

\end{enumerate}

\ignore{
{\em  Termination and Agreement intuitions}. Given our $\coin$ construction, the main barrier of lifting it to $\elect$ is to ensure perfect agreement while preserving the other benefits of our $\coin$ protocol. Intuitively, we go through the challenge by combining $\ABA$ and our specific voting rules: everyone broadcasts   the speculative largest VRF evaluation obtained from $\coin$ and   waits for receiving $n-f$ such broadcasted VRFs to check
(i) if there exists a majority one in them and (ii) if the majority one is also the largest one.
Here all honest parties can cross-check each other's speculative largest VRF, because they must be able to eventually obtain the needed seeds from $\seedgen$ (cf. Lemma \ref{lemma:terminate}).
So $\ABA$ must terminate, because every honest party would input to it.

When $\ABA$   outputs 1, some honest party must realize that the above voting rules are satisfied, and such VRF must be unique. Moreover, all honest parties  can eventually realize the same VRF satisfying the voting rules, and then decide the same elected leader according to the lowest $\log(n)$ bits of the VRF. When $\ABA$   outputs 0, it is trivial because the default leader is output.

{\em Reasonable fairness intuition}. In addition, the above voting rules do not harm  the fairness of electing leader.
Recall that with at least $1/3$ probability, all honest parties can realize a same speculative largest  VRF evaluated by an honest party (cf. Lemma \ref{lemma:core}). In such case, $\ABA$ always returns 1 as no honest party   inputs 0 to $\ABA$, and then the   lowest $\log(n)$ bits of this largest VRF evaluation must become the elected leader.

}

\medskip
\noindent
{\bf Security Analysis of $\elect$}. The random leader election protocol presented in Algorithm \ref{alg:elect} securely realizes $(n,f, 2f+1,1/3)$-$\elect$   in the asynchronous message-passing model.
Here we prove its securities.

	\begin{lemma}\label{lemma:same}
	For any two parties $\node_i$ and $\node_j$, if there exists $(\cdot, \ell, r, \cdot)$ matching the majority elements in $G^*_i$ and $r$ is the largest VRF evaluation among all elements in $G^*_i$, and there exists $(\cdot, \ell', r', \cdot)$ matching the majority elements in $G^*_j$ and $r'$ is the largest VRF evaluation among all elements in $G^*_j$,  then the  $( \ell, r)=( \ell', r')$. 
	
\end{lemma}
\begin{proof}
	We prove this by contradiction. Suppose $r\neq r'$. By the code, $(\cdot, \ell, r, \cdot)$  and $(\cdot, \ell', r', \cdot)$ match the majority of $G^*_i$ and $G^*_j$, respectively, which means that the number of their appearance in $G^*_i$ and $G^*_j$ are at least $f+1$, respectively.  
	Without loss of generality, we assume that $r > r'$. 
	Note that there are $n-f$ elements in $G^*_j$, so at least one valid $(\cdot, \ell, r, \cdot)$  must be included in $G^*_j$, because all elements in $G^*_i$ and $G^*_j$ are obtained via reliable broadcast that ensures agreement. Since $r'$ is the largest VRF evaluation among all elements in $G^*_j$, it also means $r'> r$, which is a contradiction to the assumption. Hence, $( \ell, r)=( \ell', r')$. 
\end{proof}

\begin{lemma}\label{lemma:agreement}
	For any two honest parties output $b=1$ from $\ABA$, they will output the same elected leader.
\end{lemma}
\begin{proof}
	If  $b=1$, from the validity of $\ABA$, there is at least one honest party activates $\ABA$ with input $1$, which implies that at least one honest party record a $G^*$ where exists $(\cdot, \ell^*, r^*, \cdot)$ matching the majority elements in $G^*$ and $r^*$ is the largest VRF evaluation among all elements in $G^*$. Since all elements are the outputs of $\RBC$s. From the totality, all honest parties can receive all elements in this $G^*$.
	Hence, each honest party can wait for a $G^* \subseteq G $ and then output.
	According to the Lemma \ref{lemma:same}, any valid $G^*$ has the same $(\cdot, \ell, r, \cdot)$ which matches the majority elements in $G^*$ and $r$ is the largest VRF evaluation. 
	Hence, all honest parties output the same value $(r \bmod  n) +1$.	
\end{proof}

\begin{lemma}\label{lemma:fair-good}
	When the $\mathsf{Event_{good}}$ defined in Lemma \ref{lemma:core} occurs, the polynomial-time adversary cannot predicate the elected leader better than guess.
\end{lemma}
\begin{proof}
	From Lemma \ref{lemma:core-common}, when the $\mathsf{Event_{good}}$ occurs, all honest parties will output the same $\cmax=(\ell^*, r^*, \pi^*)$ after running the code of $\coin$. In this case, all honest parties have the same $(\ell^*, r^*, \pi^*)$ and send it by $\RBC$. 
	Following the validity of $\RBC$, each honest party can receive at least $n-f$ messages from distinct $\RBC$ instances, at least $n-2f \ge f+1$ of which are sent by distinct honest parties and contain the same $(\ell^*, r^*, \pi^*)$. So all honest parties can collect a $G^*$, in which $(\cdot, \ell^*, r^*, \cdot)$ matches the majority elements and $r^*$  is the largest VRF. Then all honest parties activate $\ABA$ with $1$ as input. According to the validity of $\ABA$, all honest parties will output $1$ from $\ABA$, then all honest parties output the same value $( r^* \bmod  n) +1$ according to Lemma \ref{lemma:agreement}.
	
	Following   Lemma \ref{lemma:core-unpred},  with a probability $\alpha = \Pr[\mathsf{Event_{good}}]=1/3$, the adversary cannot predict   $\ell = ( r^* \bmod  n) +1$ better than guessing.
	Thus in this case, the probability that the adversary $\Adv$ succeeds in predicting   some honest party's output is no more than $\alpha \over n$. 
\end{proof}

\begin{theorem}\label{thm:elect}
	In the bulletin PKI setting,  Alg. \ref{alg:elect} realizes $(n,f, 2f+1,1/3)$-$\elect$   in the asynchronous message-passing model against $n/3$ static Byzantine corruptions, conditioned on that the underlying primitives are all secure.
\end{theorem}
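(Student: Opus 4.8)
The plan is to prove the three properties of Definition~\ref{def:elect} in turn, reusing the good-event analysis already established for the underlying $\coin$ protocol (Theorem~\ref{thm:coin}) together with the standard agreement/totality guarantees of $\CBC$ and the termination/agreement/validity of $\ABA$. For \textbf{termination}, I would first invoke the termination of $\coin$ so that every honest party eventually has $\cmax$ assigned by its modified Line~39 and feeds it into its own $\CBC[\langle\ID,i\rangle]$; since at least $n-f$ honest parties do so, totality of $\CBC$ guarantees that each honest party collects $n-f$ valid $\CBC$ outputs, hence $|G|=n-f$ is reached and $\ABA[\ID]$ is activated and terminates. If $\ABA$ returns $0$, the party outputs the default index~$1$. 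If $\ABA$ returns $1$, validity of $\ABA$ implies some honest party input $1$, i.e.\ it found a qualifying snapshot $G^{*}$ and sent $\VOTE(\ID,G^{*})$; because $G^{*}$ consists of $\CBC$ outputs, totality of $\CBC$ ensures each of these outputs eventually reaches every honest party, so $G^{*}\subseteq G$ eventually holds everywhere and the final \textbf{wait} unblocks, producing an output.

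The heart of the argument is \textbf{agreement}, and this is where I expect the main difficulty. When $\ABA$ returns $0$ all honest parties output the default index, so it suffices to treat the case $\ABA=1$, where each outputting party returns $(r^{*}\bmod n)+1$ for the largest-and-majority VRF $r^{*}$ of its accepted snapshot. I would show such an $r^{*}$ is unique across all valid snapshots. The key counting is that since $n-f\ge 2f+1$, a strict majority among the $n-f$ elements of any valid $G^{*}$ is at least $f+1$, so $r^{*}$ is carried by at least $f+1$ \emph{distinct} $\CBC$ instances. Now suppose two honest parties accept snapshots with maxima $r_1^{*}\neq r_2^{*}$, say $r_1^{*}>r_2^{*}$. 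The second snapshot contains $n-f$ distinct instance indices, and since $(f+1)+(n-f)=n+1>n$, its index set must intersect the $(\ge f+1)$-set of instances carrying $r_1^{*}$ in some instance $j$. By \emph{agreement} of $\CBC[\langle\ID,j\rangle]$ both honest parties hold the same value there (and VRF \emph{uniqueness} ties the carried index $\ell^{*}$ to a single $r^{*}$), so $r_1^{*}$ appears inside the second snapshot; but $r_2^{*}$ was its \emph{largest} element, forcing $r_1^{*}\le r_2^{*}$, a contradiction. Hence $r^{*}$ is common and all honest outputs coincide.

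For \textbf{unpredictability} I would reduce to the good event already quantified for $\coin$. Let $E$ be the event that the elected $r^{*}$ is a VRF evaluated by an honest party; the analysis underlying Theorem~\ref{thm:coin} shows the common largest VRF is honestly computed with probability at least $\alpha=\beta=1/3$, so $\Pr[E]\ge\beta$, and in that event the voting rules are satisfiable (the honest global maximum is consistent and held by a majority of any accepted $n-f$ snapshot), $\ABA$ outputs $1$, and the output equals $(r^{*}\bmod n)+1$ for this honest, hidden $r^{*}$. Conditioned on $E$, before $2f+1$ honest parties begin revealing the honest share is confidential by $\AVSS$ \emph{secrecy} and the evaluation is unpredictable because $\seedgen$ fixes an unpredictable seed and the VRF behaves as a random oracle under malicious key generation; thus $(r^{*}\bmod n)+1$ is (up to $\negl$) uniform over $[n]$ and independent of the adversary's pre-reveal guess $\ell^{*}$, giving $\Pr[\adv\text{ wins}\mid E]=1/n\pm\negl$. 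Writing $\Pr[\adv\text{ wins}]=\Pr[E]/n+\Pr[\bar E]\Pr[\adv\text{ wins}\mid\bar E]$ and using $\Pr[E]\ge\beta$, one bounds $\Pr[\adv\text{ wins}]\le 1-\Pr[E](1-1/n)\le 1-\beta+\beta/n$ and $\Pr[\adv\text{ wins}]\ge\beta/n$, so $|\Pr[\adv\text{ wins}]-\beta/n|\le 1-\beta$ as required.

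The principal obstacle, as noted, is the uniqueness argument in agreement: it must combine the quorum-intersection counting $(f+1)+(n-f)>n$, the per-instance consistency from $\CBC$ agreement, and VRF uniqueness, and I must also verify that in the good event the honest maximum genuinely satisfies both the ``largest'' and ``majority'' voting predicates so that the $\ABA$-then-$\VOTE$ machinery actually outputs the honest index; the remaining steps are routine applications of the assumed security of $\coin$, $\CBC$, $\ABA$, $\AVSS$, and the VRF.
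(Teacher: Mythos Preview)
Your proposal is correct and follows essentially the same approach as the paper: termination via the $\coin$/$\CBC$/$\ABA$ guarantees, agreement via a uniqueness lemma for the largest-and-majority VRF across valid snapshots (the paper states this as a separate lemma using the same $(f{+}1)+(n{-}f)>n$ intersection plus $\CBC$ agreement), and unpredictability via the $\mathsf{Event_{good}}$ decomposition already used for $\coin$. Your probability bookkeeping for unpredictability is in fact a bit cleaner than the paper's.
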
 
\begin{proof}
	Here we prove that Alg. \ref{alg:elect} satisfies the  properties of $\elect$ given in Def. \ref{def:elect} one by one:
	\begin{itemize}
		\item {\em Termination}. From Lemma \ref{lemma:terminate}, each honest party will output a $\cmax=(\ell^*, r^*, \pi^*)$, 
		then each honest party will broadcast its $(\ell^*, r^*, \pi^*)$ using $\CBC$. According to the validity of $\CBC$, each honest party can eventually collect a set $G$ containing at least $n-f$ $\CBC$ outputs. For an honest party, if there exists $(\cdot, \ell^*, r^*, \cdot)$ matching the majority elements in $G$ and $r^*$ is the largest VRF evaluation among all elements in $G$, activates the $\ABA[\ID]$ with $1$ as input, otherwise, inputs $0$ into $\ABA[\ID]$.

		%

		According to the termination and agreement of $\ABA$, if all honest parties participate in the $\ABA$, then all of them will output the same bit $b$. If $b=0$, all honest parties output the default index, i.e., $1$. 
		If $b=1$, from Lemma \ref{lemma:agreement}, all honest parties will output a same value.


		\item {\em Agreement}. According to the termination and agreement of $\ABA$, if all honest parties participate in the $\ABA$, all of them would output from $\ABA$ with the same bit $b$. We analyze it in two cases: 
		(i), If $b=0$, it is obvious that all honest parties will output the default index, i.e., $1$.
		(ii), If $b=1$, from Lemma \ref{lemma:agreement}, all honest parties will output $(r \bmod  n) +1$, where $r$ is the largest VRF's evaluation.

		\item {\em Reasonably fair leader-election}. We use the $\mathsf{Event_{good}}$ and $\mathsf{Event_{bad}}$ defined in the Lemma \ref{lemma:core} to discuss this property by two cases.  
		Case (i): With probability $\alpha$, the $\mathsf{Event_{good}}$ occurs, then from Lemma \ref{lemma:fair-good}, the probability that the adversary $\Adv$ succeeds in predicting the output $\ell$ which coincides with some honest party's output is no more than $\alpha \over n$. 
		%
		%
		%
		%
		Case (ii): if the $\mathsf{Event_{bad}}$ occurs, the adversary $\Adv$ might lead up to different honest parties to obtain different $\cmax$, so that some honest parties would not be able to find the majority elements $(\cdot, \ell^*, r^*, \cdot)$, where $r^*$ is the largest VRF evaluation in $G$. 
		Nevertheless, it cannot be worse than that $\ABA$ always outputs 0 and the adversary always predicates the output.
		%
		
		In sum, the probability that adversary wins in the predication game is $\Pr[\adv \textnormal{ wins}] \le 1-\alpha +  \alpha /n $, where $\alpha = 1/3$.

	\end{itemize}
\end{proof}


\medskip
\noindent
{\bf Complexities of $\elect$}. 
The overall message complexity is expected $\bigO(n^3)$, because the execution of $\coin$ part spends  $\bigO(n^3)$ messages, $n$ $\CBC$ incurs $\bigO(n^3)$ messages,  and the $\ABA$ instance costs expected $\bigO(n^3)$ messages. The overall exchanged bits are $\bigO(\lambda n^3)$ bits on average, because each $\CBC$ instance  costs $\bigO(\lambda n^2)$ bits and  the $\ABA$ instance incurs expected $\bigO(\lambda n^3)$ bits. Moreover, the   $\elect$ protocol can terminate in expected $\bigO(1)$ asynchronous rounds, which is mainly dominated by the underlying $\ABA$ instance.

\subsection{Resulting    VBA without private setup}
Given our $\elect$ protocol, we are ready to construct the private-setup free validated Byzantine agreement ($\VBA$), which is essentially a special Byzantine agreement  with external validity.
As aforementioned, it is     the core building block in many asynchronous protocols, such as atomic broadcast \cite{guo2020dumbo} and fast-terminating asynchronous DKG \cite{abraham2021reaching}. 
More formally, it can be defined as follows.

\begin{definition}[Asynchronous Validated Byzantine Agreement ($\VBA$)] \label{def:vba}
	
	\smallskip
	{\textsc{Syntax}}.
	For each $\VBA$   instance   with an identifier $\ID$ and a  polynomial-time computable global predicate $Q_\ID$, 
	each party inputs a value  (besides the implicit inputs including all   public keys and its own private key), and   outputs a  value. 	%
	
	\smallskip
	{\textsc{Properties}}. 
	It  satisfies the next properties   with all but negligible probability:
	\begin{itemize}
		\item {\bf Termination}. If all honest parties activate  on $\ID$ with an input satisfying $Q_\ID$, then every honest party outputs for $\ID$.
		
		\item {\bf Agreement}. Any two honest parties  that output associated to $\ID$ would output  the same   value.
		
		\item {\bf External-Validity}. If any honest party outputs $ v $ for $\ID$, then $Q_\ID(v)=1$.
	\end{itemize}
\end{definition}

\medskip
\noindent
{\bf Constructing $\VBA$   without private setup}. 
Most existing     $\VBA$ constructions \cite{cachin2001secure,abraham2018validated,lu2020dumbo} 
rely on a pre-configured non-interactive threshold PRF (tPRF) \cite{cachin2000random}
to implement a Leader Election primitive that can uniformly elect a common party out of all parties.
Alternatively, our reasonably fair $\elect$ protocol is pluggable in all $\VBA$  implementations \cite{cachin2001secure,abraham2018validated,lu2020dumbo}
to replace   tPRF, thus removing the possible unpleasant private setup of it. More formally,

\begin{theorem}\label{thm:aba} 
	In the bulletin PKI setting, given our $(n,f, 2f+1,1/3)$-$\elect$ protocol, \cite{cachin2001secure,abraham2018validated,lu2020dumbo}    implement $\VBA$ in the asynchronous message-passing model with $n/3$ static Byzantine corruption, and         cost expected constant rounds, expected $\bigO(n^3)$ messages, and expected $\bigO(\lambda n^3)$ bits (w.r.t. $\lambda n$-bit or shorter inputs).
\end{theorem}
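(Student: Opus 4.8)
The plan is to treat this as a black-box composition result: the three target $\VBA$ constructions \cite{cachin2001secure,abraham2018validated,lu2020dumbo} were all designed against an \emph{abstract} leader election oracle (instantiated in the literature via a trusted-dealer tPRF), and the whole task reduces to showing that our $(n,f,2f+1,1/3)$-$\elect$ faithfully realizes that abstraction with a constant fairness parameter. First I would isolate exactly which properties each original proof demands of its leader election subroutine: (i) \textbf{termination} (every honest party eventually obtains an index), (ii) \textbf{perfect agreement} (all honest parties obtain the \emph{same} index), and (iii) \textbf{reasonable fairness with unpredictability} (before enough honest parties activate, the adversary cannot bias the elected index away from uniform by more than a constant, captured here by $\beta=1/3$). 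By Theorem \ref{thm:elect}, our $\elect$ provides precisely (i)--(iii); in particular its perfect agreement is what lets us plug it in verbatim.

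With the interface matched, the \textbf{agreement} and \textbf{external-validity} of the resulting $\VBA$ follow unchanged from the original security analyses of \cite{cachin2001secure,abraham2018validated,lu2020dumbo}, since those arguments invoke the election only through properties (i)--(iii); in particular agreement crucially uses that all honest parties adopt a common elected index, which is exactly the agreement property of our $\elect$. For \textbf{termination} and the \emph{expected-constant round} claim, I would re-examine each protocol's per-view (or per-epoch) progress condition: a view decides as soon as the elected index points to a ``good'' party whose proposal is already locked/keyed compatibly with the honest parties' states. Because $\elect$ is unpredictable until $k$ honest parties activate, the adversary must fix which proposals are locked before learning the leader, so the elected index is effectively independent of the adversarial choices up to the fairness gap; since at least a constant fraction of indices are good, each view succeeds with probability at least some constant $p>0$. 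The number of views is therefore stochastically dominated by a geometric random variable with success probability $p$, giving expected $\bigO(1)$ views and hence expected constant running time.

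The complexity bound then follows by multiplying the expected number of views by the per-view cost. Each view disseminates the $n$ parties' $\lambda n$-bit inputs through $n$ reliable/provable broadcasts at $\bigO(\lambda n^2)$ bits each, runs one $\elect$ instance at $\bigO(\lambda n^3)$ bits and $\bigO(n^3)$ messages (Theorem \ref{thm:elect}), and performs a constant number of $\bigO(\lambda n)$-sized voting/confirmation rounds; the dominating term is $\bigO(\lambda n^3)$ bits and $\bigO(n^3)$ messages per view in constant rounds. Combining with the expected $\bigO(1)$ view count yields expected $\bigO(\lambda n^3)$ bits, expected $\bigO(n^3)$ messages, and expected constant running time for $\lambda n$-bit input. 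The main obstacle I expect is the termination step: the three $\VBA$ protocols were originally analysed assuming a \emph{perfect} ($\beta=1$) election, so I must verify separately for each that the per-view success probability remains a positive constant under merely reasonable fairness ($\beta=1/3$) and under an adversary that is constrained only by unpredictability rather than by an ideal random oracle; this requires checking that none of the termination arguments secretly relies on exact uniformity of the elected index rather than on a constant fair-election probability.
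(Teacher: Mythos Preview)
Your proposal is correct and follows essentially the same route as the paper: both treat the result as a black-box substitution of the $\elect$ primitive into \cite{cachin2001secure,abraham2018validated,lu2020dumbo}, defer agreement and external-validity to the original analyses, and argue expected constant running time via a per-iteration constant success probability (the paper computes this explicitly as $2/9$ for \cite{abraham2018validated} by multiplying the $2/3$ chance that the elected leader completed nomination with the $\beta=1/3$ fairness).

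One point the paper raises that your plan does not address: the original constructions in \cite{cachin2001secure,abraham2018validated,lu2020dumbo} use non-interactive \emph{threshold signatures} to keep quorum certificates short inside the provable-broadcast subprotocols, and threshold signatures themselves presuppose a private setup. To stay in the bulletin-PKI-only model you must also replace those certificates by the trivial concatenation of $n-f$ individual signatures; this inflates each certificate by an $\bigO(n)$ factor but the resulting per-iteration communication is still $\bigO(\lambda n^3)$, so the headline bound survives. You should make this explicit, since otherwise your per-view ``$\bigO(\lambda n)$-sized voting/confirmation'' accounting silently assumes a primitive that is unavailable without private setup.
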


Proofs for $\VBA$ properties can be found in \cite{cachin2001secure,abraham2018validated,lu2020dumbo} with some trivial adaptions.
Here we briefly  discuss  the security intuition behind the securities and the resulting   complexities, with using Abraham et al.'s $\VBA$ construction  \cite{abraham2018validated} as an example.

%
%
%
%
%
%
%
Recall that  each    iteration of Abraham et al.'s $\VBA$ \cite{abraham2018validated} proceeds as follows: 
every party begins as a leader  to perform a 4-stage provable-broadcast (PB) protocol to broadcast a {\bf key}  proof (carrying a value as well), a {\bf lock} proof, and a {\bf commit} proof, where each proof is essentially a quorum certificate;
following  {\bf key}-{\bf lock}-{\bf commit} proofs, each party can further generate and multicast a completeness proof, 
attesting that it delivers these proofs to at least $f+1$ honest parties, which is called leader nomination; 
then, after at least $n-f$ 4-staged PBs are proven to complete leader nominations, 
a   $\elect$ primitive is needed to  sample a party called leader in a perfect fair (or reasonably fair) way; 
so with some constant probability, i.e., $2/9$ in case of plugging our $(n,f, 2f+1,1/3)$-$\elect$ protocol, 
the elected leader already finished its nomination and   delivered a {\bf commit} proof to at least $f+1$ honest parties, 
and these parties can output the value received from the leader's 4-stage PB;
and after one more round to multicast and amplify the proofs, all parties also output the same value; 
otherwise, it is a worse case with $7/9$ probability, in which no enough honest parties {\bf commit} regarding the elected leader, 
and the protocol   enters the next iteration; nonetheless, the nice properties of  the {\bf key} and {\bf lock}  proofs would ensure that the parties can luckily output in the next iteration with the same $2/9$ chance. 

Thus,   plugging our $\elect$ primitive into Abraham et al.'s $\VBA$ would preserve its constant running time. 
For message complexity, no extra cost is placed  except our $\elect$ primitive, so it becomes dominated by the $\bigO(n^3)$ messages of $\elect$.
For communication complexity, it is worth noticing that non-interactive threshold signature scheme is used to form short quorum certificates in the 4-staged provable-broadcast (PB) protocols; nevertheless, such instantiation of quorum certificate can be  replaced by trivially concatenating digital signatures from $n-f$ distinct parties in the bulletin PKI setting, which only adds an $\bigO(n)$ factor to the size of quorum certificates, thus causing   $\bigO(\lambda n^3)$  communication complexity to this private-setup free $\VBA$ instantiation (for $\lambda n$-bit input).

\subsection{Applications}\label{sec:application}

\noindent
{\bf Application to   asynchronous DKG}. The resulting $\VBA$ protocols can be  plugged in  AJM+21   $\ADKG$ \cite{abraham2021reaching} to  reduce  the communication 
to $\bigO(\lambda n^3)$ bits, with preserving fast termination in expected $\bigO(1)$   rounds and optimal $n/3$  resilience.

The basic idea (cf. Section 7.5 in \cite{abraham2021reaching})   lets each party multicast an aggregatable $\PVSS$ hiding a random secret. Then, everyone gathers and combines $n-f$ $\PVSS$ from distinct parties.
So they can input the aggregated $\PVSS$ to one $\VBA$ instance  (with external validity   specified to check the input is indeed $\PVSS$ aggregated by $n-f$ parties' contributions).
Finally, each party can get a consistent  $\PVSS$ script returned by $\VBA$, and therefore can decrypt it to get its     key share. 
The resulting      communication cost is $\bigO(\lambda n^3)$, because  all $\PVSS$ scripts are $\bigO(\lambda n)$-bit.
%

\medskip
\noindent
{\bf Application to     random beacon w/o DKG}. Our $\elect$ protocol can be slightly adapted to realize an asynchronous random beacon service that all participating parties can proceed by consecutive epochs and continually output an unbiased and unpredictable value in each epoch. 
Here unbiased means the output is uniformly distributed~\cite{Bhat2021Rand,das2021spurt}; and the unpredictable means that the adversary cannot tell the   random output of next epoch
better than guessing, unless $f+1$ honest parties already output in the current epoch. 

To implement asynchronous random beacon, we can let all parties to   execute a sequence of $\elect$ protocols, with the following minor changes:
 (i) when $\ABA$ unluckily returns 0 and thus no largest VRF is agreed, the honest parties do not output the default value, and they directly move into the next $\elect$ instance;
 (ii) instead of returning a short index belong $[n]$, the parties can output the lowest $\bigO(\lambda)$ bits of the selected largest VRF, e.g., the half least-significant bits of the largest VRF evaluation.
The former adaption ensures that a non-default value will be output with a probability of $1-{(1-\alpha)}^k$ after sequentially running  $k$ $\elect$ instances, 
and thus a random value can always be output after expected constant rounds.

For bias-resistance, according to the committing and unpredictability of $\seedgen$, the adversary cannot manipulate the generation of VRF seeds so that they cannot bias the $\VRF$s evaluated on the $seed$ or immediately break the unpredictability of $\VRF$. The unpredictability is similar, because before $f+1$ honest parties invoke $\seedgen$ protocols, the adversary cannot predicate the output VRF seeds, so all VRF evaluations in the next epoch would remain secret to the adversary.
	\bibliographystyle{splncs04}
	\bibliography{reference}

\begin{thebibliography}{10}
\providecommand{\url}[1]{\texttt{#1}}
\providecommand{\urlprefix}{URL }
\providecommand{\doi}[1]{https://doi.org/#1}

\bibitem{abraham2010fast}
Abraham, I., Aguilera, M.K., Malkhi, D.: Fast asynchronous consensus with
  optimal resilience. In: International Symposium on Distributed Computing. pp.
  4--19 (2010)

\bibitem{abraham2019communication}
Abraham, I., Chan, T.H., Dolev, D., Nayak, K., Pass, R., Ren, L., Shi, E.:
  Communication complexity of byzantine agreement, revisited. In: Proceedings
  of the 2019 ACM Symposium on Principles of Distributed Computing. pp.
  317--326 (2019)

\bibitem{abraham2019synchronous}
Abraham, I., Devadas, S., Dolev, D., Nayak, K., Ren, L.: Synchronous byzantine
  agreement with expected $o (1)$ rounds, expected $ o (n^{2} ) $
  communication, and optimal resilience. In: International Conference on
  Financial Cryptography and Data Security. pp. 320--334. Springer (2019)

\bibitem{abraham2021reaching}
Abraham, I., Jovanovic, P., Maller, M., Meiklejohn, S., Stern, G., Tomescu, A.:
  Reaching consensus for asynchronous distributed key generation. In:
  Proceedings of the 40th Symposium on Principles of Distributed Computing
  (2021)

\bibitem{abraham2018validated}
Abraham, I., Malkhi, D., Spiegelman, A.: Asymptotically optimal validated
  asynchronous byzantine agreement. In: Proceedings of the 2019 ACM Symposium
  on Principles of Distributed Computing. pp. 337--346 (2019)

\bibitem{alhaddad2021high}
AlHaddad, N., Varia, M., Zhang, H.: High-threshold avss with optimal
  communication complexity. In: International Conference on Financial
  Cryptography and Data Security (2021)

\bibitem{attiya2004distributed}
Attiya, H., Welch, J.: Distributed computing: fundamentals, simulations, and
  advanced topics, vol.~19. John Wiley \& Sons (2004)

\bibitem{backes2013asynchronous}
Backes, M., Datta, A., Kate, A.: Asynchronous computational vss with reduced
  communication complexity. In: Topics in Cryptology -- CT-RSA 2013. pp.
  259--276

\bibitem{backes2011computational}
Backes, M., Kate, A., Patra, A.: Computational verifiable secret sharing
  revisited. In: Advances in Cryptology -- ASIACRYPT 2011. pp. 590--609

\bibitem{bangalore2018almost}
Bangalore, L., Choudhury, A., Patra, A.: Almost-surely terminating asynchronous
  byzantine agreement revisited. In: Proceedings of the 2018 ACM Symposium on
  Principles of Distributed Computing. pp. 295--304 (2018)

\bibitem{ben1983another}
Ben-Or, M.: Another advantage of free choice (extended abstract): Completely
  asynchronous agreement protocols. In: Proceedings of the second annual ACM
  symposium on Principles of distributed computing. pp. 27--30. ACM (1983)

\bibitem{Bhat2021Rand}
Bhat, A., Shrestha, N., Luo, Z., Kate, A., Nayak, K.: Randpiper --
  reconfiguration-friendly random beacons with quadratic communication. In:
  Proceedings of the 2021 ACM SIGSAC Conference on Computer and Communications
  Security. pp. 3502--3524 (2021)

\bibitem{blum2020asynchronous}
Blum, E., Katz, J., Liu-Zhang, C.D., Loss, J.: Asynchronous byzantine agreement
  with subquadratic communication. In: Theory of Cryptography Conference. pp.
  353--380 (2020)

\bibitem{bracha1987asynchronous}
Bracha, G.: Asynchronous byzantine agreement protocols. Information and
  Computation  \textbf{75}(2),  130--143 (1987)

\bibitem{cachin2002asynchronous}
Cachin, C., Kursawe, K., Lysyanskaya, A., Strobl, R.: Asynchronous verifiable
  secret sharing and proactive cryptosystems. In: Proceedings of the 9th ACM
  Conference on Computer and Communications Security. pp. 88--97 (2002)

\bibitem{cachin2001secure}
Cachin, C., Kursawe, K., Petzold, F., Shoup, V.: Secure and efficient
  asynchronous broadcast protocols. In: Annual International Cryptology
  Conference. pp. 524--541. Springer (2001)

\bibitem{cachin2000random}
Cachin, C., Kursawe, K., Shoup, V.: Random oracles in constantinople: practical
  asynchronous byzantine agreement using cryptography. In: 19th Annual ACM
  Symposium on Principles of Distributed Computing (2000)

\bibitem{cachin2005asynchronous}
Cachin, C., Tessaro, S.: Asynchronous verifiable information dispersal. In:
  24th IEEE Symposium on Reliable Distributed Systems (SRDS'05). pp. 191--201.
  IEEE (2005)

\bibitem{canetti1999adaptive}
Canetti, R., Gennaro, R., Jarecki, S., Krawczyk, H., Rabin, T.: Adaptive
  security for threshold cryptosystems. In: Annual International Cryptology
  Conference. pp. 98--116 (1999)

\bibitem{canetti1993fast}
Canetti, R., Rabin, T.: Fast asynchronous byzantine agreement with optimal
  resilience. In: Proceedings of the twenty-fifth annual ACM symposium on
  Theory of computing. pp. 42--51 (1993)

\bibitem{cascudo2017scrape}
Cascudo, I., David, B.: {SCRAPE}: Scalable randomness attested by public
  entities. In: Proc. ACNS 2017. pp. 537--556

\bibitem{cohen2020not}
Cohen, S., Keidar, I., Spiegelman, A.: Not a coincidence: Sub-quadratic
  asynchronous byzantine agreement whp. In: 34th International Symposium on
  Distributed Computing (DISC 2020) (2020)

\bibitem{crain2020two}
Crain, T.: Two more algorithms for randomized signature-free asynchronous
  binary byzantine consensus with $t< n/3$ and $\mathcal{O}(n^2)$messages and
  $\mathcal{O}(1)$ round expected termination. arXiv preprint arXiv:2002.08765
  (2020)

\bibitem{das2021spurt}
Das, S., Krishnan, V., Isaac, I.M., Ren, L.: Spurt: Scalable distributed
  randomness beacon with transparent setup. In: 2022 IEEE Symposium on Security
  and Privacy (SP)

\bibitem{dasasynchronous}
Das, S., Xiang, Z., Ren, L.: Asynchronous data dissemination and its
  applications. In: Proceedings of the 2021 ACM SIGSAC Conference on Computer
  and Communications Security (2021)

\bibitem{david2018ouroboros}
David, B., Ga{\v{z}}i, P., Kiayias, A., Russell, A.: Ouroboros praos: An
  adaptively-secure, semi-synchronous proof-of-stake blockchain. In: Annual
  International Conference on the Theory and Applications of Cryptographic
  Techniques. pp. 66--98 (2018)

\bibitem{dinsdale2020afgjort}
Dinsdale-Young, T., Magri, B., Matt, C., Nielsen, J.B., Tschudi, D.: Afgjort: A
  partially synchronous finality layer for blockchains. In: International
  Conference on Security and Cryptography for Networks. pp. 24--44. Springer
  (2020)

\bibitem{dolev1985bounds}
Dolev, D., Reischuk, R.: Bounds on information exchange for byzantine
  agreement. Journal of the ACM (JACM)  \textbf{32}(1),  191--204 (1985)

\bibitem{dolev1983authenticated}
Dolev, D., Strong, H.R.: Authenticated algorithms for byzantine agreement. SIAM
  Journal on Computing  \textbf{12}(4),  656--666 (1983)

\bibitem{Dwork1988ConsensusIT}
Dwork, C., Lynch, N., Stockmeyer, L.: Consensus in the presence of partial
  synchrony. J. ACM  \textbf{35},  288--323 (1988)

\bibitem{feldman1997optimal}
Feldman, P., Micali, S.: An optimal probabilistic protocol for synchronous
  byzantine agreement. SIAM Journal on Computing  \textbf{26}(4),  873--933
  (1997)

\bibitem{Fischer2005EasyIP}
Fischer, M., Lynch, N., Merritt, M.: Easy impossibility proofs for distributed
  consensus problems. Distributed Computing  \textbf{1},  26--39 (2005)

\bibitem{fischer1982impossibility}
Fischer, M.J., Lynch, N.A., Paterson, M.S.: Impossibility of distributed
  consensus with one faulty process. Tech. rep., Massachusetts Inst of Tech
  Cambridge lab for Computer Science (1982)

\bibitem{Fitzi2003EfficientPP}
Fitzi, M., Garay, J.: Efficient player-optimal protocols for strong and
  differential consensus. In: PODC '03 (2003)

\bibitem{gkagol2019aleph}
G{\k{a}}gol, A., Le{\'s}niak, D., Straszak, D., {\'S}wi{\k{e}}tek, M.: Aleph:
  Efficient atomic broadcast in asynchronous networks with byzantine nodes. In:
  Proceedings of the 1st ACM Conference on Advances in Financial Technologies.
  pp. 214--228 (2019)

\bibitem{ganesh2021optimal}
Ganesh, C., Patra, A.: Optimal extension protocols for byzantine broadcast and
  agreement. Distributed Computing  \textbf{34}(1),  59--77 (2021)

\bibitem{gelashvili2021jolteon}
Gelashvili, R., Kokoris-Kogias, L., Sonnino, A., Spiegelman, A., Xiang, Z.:
  Jolteon and ditto: Network-adaptive efficient consensus with asynchronous
  fallback. In: Proc. FC 2022

\bibitem{guo2022speeding}
Guo, B., Lu, Y., Lu, Z., Tang, Q., Xu, J., Zhang, Z.: Speeding dumbo: Pushing
  asynchronous bft closer to practice. In: Proc. NDSS 2022

\bibitem{guo2020dumbo}
Guo, B., Lu, Z., Tang, Q., Xu, J., Zhang, Z.: Dumbo: Faster asynchronous bft
  protocols. In: Proceedings of the 2020 ACM SIGSAC Conference on Computer and
  Communications Security. pp. 803--818 (2020)

\bibitem{gurkan2021aggregatable}
Gurkan, K., Jovanovic, P., Maller, M., Meiklejohn, S., Stern, G., Tomescu, A.:
  Aggregatable distributed key generation. In: Advances in Cryptology --
  EUROCRYPT 2021

\bibitem{kapron2010fast}
Kapron, B.M., Kempe, D., King, V., Saia, J., Sanwalani, V.: Fast asynchronous
  byzantine agreement and leader election with full information. ACM
  Transactions on Algorithms (TALG)  \textbf{6}(4),  1--28 (2010)

\bibitem{kate2019brief}
Kate, A., Miller, A., Yurek, T.: Brief note: Asynchronous verifiable secret
  sharing with optimal resilience and linear amortized overhead. arXiv preprint
  arXiv:1902.06095  (2019)

\bibitem{kate2010constant}
Kate, A., Zaverucha, G.M., Goldberg, I.: Constant-size commitments to
  polynomials and their applications. In: International conference on the
  theory and application of cryptology and information security. pp. 177--194.
  Springer (2010)

\bibitem{Katz2006OnEC}
Katz, J., Koo, C.Y.: On expected constant-round protocols for byzantine
  agreement pp. 445--462 (2006)

\bibitem{keidar2021all}
Keidar, I., Kokoris-Kogias, E., Naor, O., Spiegelman, A.: All you need is dag.
  In: Proceedings of the 40th Symposium on Principles of Distributed Computing
  (2021)

\bibitem{king2013byzantine}
King, V., Saia, J.: Byzantine agreement in polynomial expected time. In:
  Proceedings of the forty-fifth annual ACM symposium on Theory of computing.
  pp. 401--410 (2013)

\bibitem{kokoris2020asynchronous}
Kokoris~Kogias, E., Malkhi, D., Spiegelman, A.: Asynchronous distributed key
  generation for computationally-secure randomness, consensus, and threshold
  signatures. In: Proceedings of the 2020 ACM SIGSAC Conference on Computer and
  Communications Security. pp. 1751--1767 (2020)

\bibitem{krawczyk1994secret}
Krawczyk, H.: Secret sharing made short. In: Advances in Cryptology -- CRYPTO
  1993. pp. 136--146

\bibitem{lamport1982byzantine}
Lamport, L., Shostak, R., Pease, M.: The byzantine generals problem. ACM
  Transactions on Programming Languages and Systems (TOPLAS)  \textbf{4}(3),
  382--401 (1982)

\bibitem{libert2016born}
Libert, B., Joye, M., Yung, M.: Born and raised distributively: Fully
  distributed non-interactive adaptively-secure threshold signatures with short
  shares. Theoretical Computer Science  \textbf{645},  1--24 (2016)

\bibitem{lu2019honeybadgermpc}
Lu, D., Yurek, T., Kulshreshtha, S., Govind, R., Kate, A., Miller, A.:
  Honeybadgermpc and asynchromix: Practical asynchronous mpc and its
  application to anonymous communication. In: Proceedings of the 2019 ACM
  SIGSAC Conference on Computer and Communications Security. pp. 887--903
  (2019)

\bibitem{lu2020dumbo}
Lu, Y., Lu, Z., Tang, Q., Wang, G.: Dumbo-mvba: Optimal multi-valued validated
  asynchronous byzantine agreement, revisited. In: Proceedings of the 39th
  Symposium on Principles of Distributed Computing. pp. 129--138 (2020)

\bibitem{micali1999verifiable}
Micali, S., Vadhan, S., Rabin, M.: Verifiable random functions. In: Proceedings
  of the 40th Annual Symposium on Foundations of Computer Science. p.~120
  (1999)

\bibitem{miller2016honey}
Miller, A., Xia, Y., Croman, K., Shi, E., Song, D.: The honey badger of bft
  protocols. In: Proceedings of the 2016 ACM SIGSAC Conference on Computer and
  Communications Security. pp. 31--42. ACM (2016)

\bibitem{mostefaoui2015signature}
Most{\'e}faoui, A., Moumen, H., Raynal, M.: Signature-free asynchronous binary
  byzantine consensus with $t< n/3$, $\mathcal{O}(n^2)$ messages, and
  $\mathcal{O}(1)$ expected time. Journal of the ACM (JACM)  \textbf{62}(4),
  ~31 (2015)

\bibitem{bitcoin}
Nakamoto, S.: Bitcoin: A peer-to-peer electronic cash system  (2008)

\bibitem{nayak2020improved}
Nayak, K., Ren, L., Shi, E., Vaidya, N.H., Xiang, Z.: Improved extension
  protocols for byzantine broadcast and agreement. In: 34th International
  Symposium on Distributed Computing (DISC 2020). Schloss
  Dagstuhl-Leibniz-Zentrum f{\"u}r Informatik (2020)

\bibitem{patra2011communication}
Patra, A., Rangan, C.P.: Communication optimal multi-valued asynchronous
  byzantine agreement with optimal resilience. In: International Conference on
  Information Theoretic Security. pp. 206--226 (2011)

\bibitem{pedersen1991non}
Pedersen, T.P.: Non-interactive and information-theoretic secure verifiable
  secret sharing. In: Annual international cryptology conference. pp. 129--140
  (1991)

\bibitem{rabin1983randomized}
Rabin, M.O.: Randomized byzantine generals. In: 24th Annual Symposium on
  Foundations of Computer Science (sfcs 1983). pp. 403--409. IEEE (1983)

\bibitem{yang2021dispersedledger}
Yang, L., Park, S.J., Alizadeh, M., Kannan, S., Tse, D.: {DispersedLedger}:
  {High-Throughput} byzantine consensus on variable bandwidth networks. In:
  19th USENIX Symposium on Networked Systems Design and Implementation (NSDI
  22) (2022)

\bibitem{yurek2021hbacss}
Yurek, T., Luo, L., Fairoze, J., Kate, A., Miller, A.K.: hbacss: How to
  robustly share many secrets. IACR Cryptol. ePrint Arch.  \textbf{2021}, ~159
  (2021)

\end{thebibliography}
	%



\begin{subappendices}
	\renewcommand{\thesection}{\Alph{section}}%
	
	\section{Secrecy Game for AVSS}\label{app:sec_game}
	
	\noindent 
	{\bf Secrecy Game}. 
	{\em 
		The Secrecy game between an adversary $\adv$ and a challenger $\cha$ is defined as follows to capture the secrecy threat in the $\AVSS$ protocol among $n$ parties with up to $f$ static corruptions in the bulletin PKI setting:
		\begin{enumerate}
			\item  $\adv$ chooses a set $\bar{Q}$ of up to $f$ parties to corrupt, a dealer $\node_D$ ($\node_D\notin\bar{Q}$) and a session identifier $\ID$, and also  generates the secret-public key pairs for each corrupted party in $\bar{Q}$, and sends $\bar{Q}$, $\node_D$, $\ID$ and all relevant public keys to   $\cha$.
			\item The challenger $\cha$ generates the secret-public key pair  for every honest party in $[n] \setminus \bar{Q}$, and sends these public keys to the adversary $\adv$.
			
			\item $\adv$ chooses two secrets $s_0$ and $s_1$ with same length and send them to $\cha$. 
			
			\item The challenger  $\cha$ decides a hidden bit $b\in\{0,1\}$ randomly, executes the $\sssh$ protocol on $\ID$ (for all honest parties) to share $s_b$ via interacting with $\adv$ (that are on behalf of  the corrupted parties). During the execution, $\adv$ is consulted to schedule all message deliveries and would learn: (i) the protocol scripts sent to the ``corrupted parties'', and (ii) the length of all messages sent among the honest parties.
			
			\item The adversary $\adv$ guesses a bit $b'$. 
		\end{enumerate}
		The advantage of $\adv$ in the above Secrecy game ${\bf Adv_{sec}}$ is  $|\Pr[b=b'] -1/2 |$.
	}
	
	\medskip
	\noindent
	Recall that the secrecy requirement of $\AVSS$ requires that the adversary's advantage in the above game shall be  negligible.

	\ignore{
	\section{Deferred Proof for AVSS Construction}\label{app:avss_proof}

	\begin{lemma}\label{lemma:sh-consistent}
		If any two honest parties $\node_i$ and $\node_j$ output $(cipher, \cdot,\cdot,\cdot)$ and $(cipher', \cdot,\cdot,\cdot)$ in $\sssh[\ID]$, respectively, then $cipher=cipher'$ 
		except  with negligible probability.
	\end{lemma}
	\begin{proof}
		Suppose that $cipher \neq cipher'$, $\node_i$ receives $2f+1$ $\READY$ messages containing $cipher$, the senders of which include at least $f+1$ honest parties; in the same way, $\node_j$ must have received at least $f+1$ $\READY$ messages containing $cipher'$ from honest parties; so it induces that at least one honest party sent two different messages, which is impossible. So there is a contradiction if $cipher \neq cipher'$, implying $cipher=cipher'$.
		
	\end{proof}
	\begin{lemma}\label{lemma:totality}
		If some honest party outputs in the $\sssh$ instance associated to $\ID$, 
		then every honest party  activated to execute the $\sssh$ instance would complete the execution and output.
	\end{lemma}
	\begin{proof}
		Assume that an honest party outputs in the $\sssh$, it must have received $2f+1$ $\READY$ messages. At least $f+1$ of the messages are sent from honest parties. Therefore, all parties will eventually receive $f+1$ $\READY$ messages from these $f+1$ honest parties and send a $\READY$ message as well. Then, all honest parties will eventually receive $2f+1$ valid $\READY$ messages and output. So the totality property always holds.
	\end{proof}
	
	\begin{lemma}\label{lemma:commit}
		 When some honest party outputs in the $\sssh$ instance for $\ID$, there exists a value $m^*$ that is fixed associated to  $\ID$.
	\end{lemma}
	\begin{proof}
		Firstly, we prove that if any two honest parties record $\cmt$ and $\cmt'$ respectively, then $\cmt=\cmt'$. Note that honest party will record a $\cmt$  only if it receives a valid $\Pi$ containing  $n-f$ valid signatures for $\cmt$. Suppose that $\cmt \neq \cmt'$, $\node_i$ receives a $\Pi$, which means at least $f+1$ honest parties have signed for $\cmt$; in the same way, $\node_j$ receives a $\Pi'$ and at least $f+1$ honest parties have signed for $\cmt'$. So according to the unforgeability of digital signatures,  at least one honest party signed for both $\cmt$ and $\cmt'$, which is impossible. Thus, $\cmt = \cmt'$. 
		Moreover, $\C$ is computationally binding conditioned on DLog assumption, so all honest parties agree on the same polynomial $A^*(x)$ committed to $\C$, which fixes a unique $key^*$.
		From Lemma \ref{lemma:sh-consistent} and the totality of $\AVSS$, when some honest party outputs in the $\sssh$ instance for $\ID$, all honest parties receive the same cipher $cipher$. So there exists a unique $m^*=cipher \oplus key^*$, which can be fixed once some honest party outputs in $\sssh$.
	\end{proof}

	\begin{lemma}\label{lemma:commit-rec}
		 When all honest parties activate  $\ssrec$ on $\ID$, each of them can reconstruct the same value $m^*$
	\end{lemma}
	\begin{proof} 
		Any honest party outputs in the $\sssh$ subprotocol must receive $2f+1$ $\READY$ messages from distinct parties, at least $f+1$ of which are from honest parties. Thus, at least one honest party has received $2f+1$ $\ECHO$ messages from distinct parties. This ensures that at least $f+1$ honest parties get the same commitment $\C$ and a valid quorum proof $\Pi$. 
		Since the signatures in $\Pi$ are unforgeable, at least $f+1$ honest parties did store valid shares of $A^*(x)$ and $B^*(x)$ along with the corresponding commitment $\C$ except with negligible probability. 
		So after all honest parties start $\ssrec$, there are at least $f+1$ honest parties would broadcast $\KREC$ messages with valid shares of $A^*(x)$ and $B^*(x)$. These messages can be received by all parties and can be verified by at least $f+1$ honest parties who record $\C$. With overwhelming probability, at least $f+1$ parties can interpolate $A^*(x)$ to compute $A^*(0)$  as $key$ and broadcast it, and all parties can receive at least $f+1$ same $key^*$ and then output the same $m^* = cipher \oplus key^*$ as they obtain the same ciphertext $cipher$ from $\sssh$. 
	\end{proof}
	
	\begin{lemma}\label{lemma:correct-1}
		If the dealer is honest and all honest parties are activated to run $\sssh$ on $\ID$, all honest parties would  output in the $\sssh$ instance.
	\end{lemma}
	\begin{proof}
		If the dealer is honest and all honest parties are activated, it is clear that (i) all honest parties can eventually wait the shares of $A(x)$ and $B(x)$ as well as the same commitment $\C$ so all honest parties will sign for $\C$. Thus, the honest dealer must can collect at least $n-f$ valid digital signature for $\C$ from distinct parties to form valid $\Pi$ and (ii) all honest parties can eventually broadcast the same $\ECHO$ messages and the same $\READY$ messages after receiving the shares of $A(x)$ and $B(x)$ as well as the same $\C$, thus finally outputting in the $\sssh$ instance.
	\end{proof}

	\begin{lemma}\label{lemma:correct-2}
		 If the dealer is honest and inputs secret $m$, the value $m^*$ reconstructed by any honest party in the corresponding $\ssrec$ instance must be equal to $m$, for all $\ID$.
	\end{lemma}
	\begin{proof}
		From Lemma \ref{lemma:commit-rec}, we have proved that all honest parties will reconstruct the $m^*$ which is fixed when some honest party completes the $\sssh$. So all we need is to prove that the fixed $m^*$ is equal to the $m$ that the honest dealer inputs in the $\sssh$. 
		It is easy to see that (i) any honest party must output a ciphertext $cipher$ same to the ciphertext computed by the honest sender and (ii) due to the correctness and binding of commitment scheme honest parties must receive the same $C$ to $A(x)$ and $B(x)$, where $A(x)$ and $B(x)$ are chosen by the honest deader. So $m^* = cipher \oplus A(0) = m$.
	\end{proof}
	
	\begin{lemma}\label{lemma:secure}
		In   any $\sssh$ instance, if the dealer is honest, the adversary shall not learn any information about the key shared by the dealer from its view.
	\end{lemma}
	\begin{proof}
		The adversary's $view$ in an $\sssh$ execution with an honest dealer would include the commitment $\C$, the ciphertext $cipher$, some signatures for $\C$, the secret shares received by up to $f$ corrupted parties as well as all public keys and corrupted parties secret keys. The signatures leak nothing related to the shared $key$ (even if the adversary can fully break digital signature to learn the private signing keys). 
		Thus, following the information-theoretic argument in \cite{pedersen1991non}, since the commitment $\C$ is perfectly hiding and $f$ shares of Shamir's secret sharing scheme also leaks nothing about the key, the adversary can  learn nothing about $key$.
	\end{proof}
	
	\medskip
	\noindent
	{\bf Deferred proof for Theorem \ref{thm:avss}}. Here down below we prove Theorem \ref{thm:avss}, namely, to analyze   the $\AVSS$ protocol presented by Alg.  \ref{alg:sh} and Alg.  \ref{alg:rec} in details to argue how it securely realizes  Def. \ref{def:avss}.
	
	\begin{proof}
		Here prove that Alg. \ref{alg:sh} and Alg. \ref{alg:rec} satisfy   $\AVSS$'s properties one by one:
		\begin{itemize}
			\item {\em Totality}. Totality can be proved immediately from Lemma \ref{lemma:totality}.
			
			\item {\em Commitment}.	Commitment can be proved from Lemma \ref{lemma:commit} and Lemma \ref{lemma:commit-rec}.
			
			\item {\em Correctness}. Correctness can be proved from Lemma \ref{lemma:correct-1} and Lemma \ref{lemma:correct-2}

			
			
			
			\item {\em Secrecy}.
			From Lemma \ref{lemma:secure}, if the dealer is honest, the adversary can learn nothing about $key$. So the adversary cannot distinguish the distribution of $cipher = key \oplus m_b$ and a uniform  distribution (otherwise, the adversary can be invoked to break Lemma \ref{lemma:secure}). Therefore, the adversary's advantage in the Secrecy game ${\bf Adv_{sec}}$ is negligible. 

		\end{itemize}
	\end{proof}
	
	\section{Deferred Proof for WCS Construction}\label{app:wcs_proof}
		{\bf Deferred proofs for Theorem \ref{thm:wcs}}. 
	\begin{proof}
		We prove that Alg.  \ref{alg:wcs} realizes the  properties of $\wcs$ in  Def. \ref{def:wcs} one by one:
		\begin{itemize}
			\item {\em Termination}. If any value $v$ in some honest party's input set will eventually be included into all honest parties input sets, any honest $\node_i$'s $\widetilde{Si_i}$ will be included in all honest parties local sets. So any honest $\node_i$ can collect a set $\Sigma_i$ containing at least $n-f$ signatures for its $\widetilde{\Si_i}$ and multicast it to all parties via a $\COMMIT$ message. 
			For any honest party $\node_i$, once it  receives a valid $\Sigma_j$ for the first time, it will fix a $\hat{\Si}$ and output. 
			\item {\em (f+1)-Supporting Core-Set}. When the first honest party outputs from the protocol, it has received a valid $\Sigma_j$ with $n-f$ signatures for $\Si_j$ and at least $f+1$ signatures are signed by honest parties. Note that an honest party $\node_i$ will sign for some $\Si_j$ only if $\Si_j \geq n-f$ and $\Si_j \subseteq \Si_i$. Thus, trivially from the unforgeability of digital signatures, with all but negligible probability, once the first honest party receives a valid $\Sigma_j$ for $\Si_j$ and outputs, there exists a core set $\Si^*=\Si_j$ which is subset of at least $f+1$ honest parties' local $\Si$. After that, if some of the $f+1$ honest parties can output a set $\hat{\Si}$, then $\Si^* \subseteq \hat{\Si}$.  
			\item {\em Validity}. The validity of $\wcs$ is trivial since each honest party outputs its local set $\Si$ which is the input of itself.
			
		\end{itemize}
	\end{proof}
}

	\section{An Implementation of Seed Generation}\label{app:seed}
	
	Here we give an exemplary construction for reliable   leaded seeding ($\seedgen$) through the elegant idea of   aggregatable public verifiable secret sharing ($\PVSS$)   in \cite{gurkan2021aggregatable}. The general idea of \cite{gurkan2021aggregatable} is to lift the beautiful $\scrape$ $\PVSS$ scheme \cite{cascudo2017scrape} to enable the aggregation of $\PVSS$ scripts from distinct participating parties, and along the way, it presents a way of using knowledge-of-signatures to allow each party to attach an aggregatable ``tag'' attesting its contribution in the aggregated $\PVSS$ script, thus ensuring that anyone can check the finally aggregated $\PVSS$ script $\pvss$ (and tag) to verify whether $\pvss$ indeed commits a polynomial collectively ``generated'' by more than $f$ parties while   preserving the size of communicated  scripts   minimal.
	
	It thus becomes immediate to follow the  nice idea   to construct an efficient $\seedgen$ protocol as shown in Alg. \ref{alg:seed} in the PKI setting,
	in which: (i) each party firstly generates a $\scrape$ $\PVSS$ script along with a knowledge-of-signature, such that the leader can collect and aggregate $2f+1$  $\scrape$ $\PVSS$ scripts and multicast the aggregated $\PVSS$ script  along with a vector of knowledge-of-signatures and some other metadata (called $\pvsstag$ by us) to the whole network, later (ii) each party returns to the leader a signature for the aggregated $\PVSS$ script, so the leader can collect and multicast a quorum certificate containing at least $2f+1$ signatures to attest that it has ``committed''  a consistent $\PVSS$ script collectively contributed by at least $2f+1$  parties across the whole network, finally (iii) it becomes simple for every party to multicast its own secret share regarding to the final $\PVSS$ script,
	thus reconstructing the secret collectively generated by at least $2f+1$ parties, which is naturally the random seed to output.

	For sake of completeness, we briefly review the cryptographic abstraction of the aggregatable   $\PVSS$  scheme (with unforgeable tags attesting contributions) among $n$ parties with a secrecy threshold $t$ due to Gurkan et al.  \cite{gurkan2021aggregatable}. Note that we mainly focus on abstracting the needed properties to construct and prove our $\seedgen$ protocol.

 \begin{algorithm}[!h]
 	\begin{footnotesize}
 		
 		\caption{Implementation of Aggregatable PVSS   \cite{gurkan2021aggregatable}}
 		\label{alg:aggpvss}
 		\begin{algorithmic}
 			
 			\vspace*{0.1cm}
 			\State $\deal(\allek,  \sk_i, a_0) \rightarrow \pvss$
 			\begin{itemize}
 				\item \textbf{initialize} ($w_1$, .. , $w_n$) $\leftarrow$ ($0$, ..., $0$); ($C_1$, .. , $C_n$) $\leftarrow$ ($\bot$, ..., $\bot$); ($\sigma_1$, .. , $\sigma_n$) $\leftarrow$ ($\bot$, ..., $\bot$),
 				\item $w_i \leftarrow 1$; $C_i \leftarrow  {g_1}^{a_0}$; $\sigma_i \leftarrow \mathsf{SoK}.\Sign(C_i, sk_i, c_i)$
 				\item  randomly choose ($a_1$, ..., $a_t$) from $\mathbb{F}^t$
 				\item $F(X) \leftarrow \sum_{i=0}^{t}a_iX^i$
 				\item $F_0, ..., F_t \leftarrow {g_1}^{a_0}, ...,{g_1}^{a_t}$;
 				$\hat{u}_2 \leftarrow {\hat{u}_1}^{a_0}$;  $A_1$, ..., $A_n$ $\leftarrow$ ${g_1}^{f(\omega_1)}$, ...,  ${g_1}^{f(\omega_n)}$; $\hat{Y}_1$, ..., $\hat{Y}_n$ $\leftarrow$ ${ek_1}^{f(\omega_1)}$, ..., ${ek_n}^{f(\omega_n)}$
 				\item \textbf{return} $\mathbf{F}$, $\hat{u}_2$,  $\mathbf{A}$,  $\mathbf{\hat{Y}}$, $(C_1, ..., C_n)$, $(w_1, ..., w_n)$, $(\sigma_1, ..., \sigma_n)$
 			\end{itemize}			 	
 			
 			\vspace*{0.2 cm}
 			\State $\vrfyscript(\allek, \allvk, \pvss) \rightarrow 0/1$
 			\begin{itemize}
 				\item $\mathbf{F}$, $\hat{u}_2$,  $\mathbf{A}$,  $\mathbf{\hat{Y}}$, $\mathbf{C}$ $\boldsymbol{w}$, $\boldsymbol{\sigma}$ $\leftarrow$ \textbf{parse}($\pvss$)
 				\item  $\alpha \stackrel{\mathbb{\$}}{\leftarrow} \mathbb{F}$
 				\item \textbf{check} $\prod_{j=1}^{n}{{A_j}^{l_j(\alpha)}} = \prod_{j=0}^{t}F_j^{\alpha^j}$, $l_j(X)$ denotes the Lagrange polynomial equal to $1$ at $\omega_j$ and $0$ at $\omega_i \neq \omega_j$
 				\item \textbf{check} $e(F_0, \hat{u}_1) = e(g_1, \hat{u}_2)$
 				\item \textbf{check} $e(g_1, \hat{Y}_j) = e(A_j, ek_j)$ for each $1 \leq j \leq n$
 				\item \textbf{for} $1 \leq i \leq n$:
 				\textbf{if} $w_i \neq 0$, then \textbf{check} $\mathsf{SoK}.\Vrfy(vk_i, C_i, \sigma_i)=1$
 				\item \textbf{check} $C_1^{w_1} ...C_n^{w_n}=F_0$
 				\item \textbf{return} $1$ if all checks pass, else \textbf{return} $0 $
 			\end{itemize}
 			\vspace*{0.2  cm}
 			\State  $\agg(\pvss_1, \pvss_2) \rightarrow \pvss$
 			\begin{itemize}
 				\item $(F_{1,0}, ..., F_{1,t})$,~$\hat{u}_{1,2}$,~$(A_{1,1}, ..., A_{1,n})$,~$(\hat{Y}_{1,1}, ...,\hat{Y}_{1,n})$,~$(C_{1,1}, ..., C_{1,n})$,~$(w_{1,1}, ..., w_{1, n})$,~$(\sigma_{1,1}, ..., \sigma_{1,n})$~$\leftarrow$~\textbf{parse}($\pvss_1$)
 				\item $(F_{2,0}, ..., F_{2,t})$,~$\hat{u}_{2,2}$,~$(A_{2,1}, ..., A_{2,n})$,~$(\hat{Y}_{2,1}, ...,\hat{Y}_{2,n})$,~$(C_{2,1}, ..., C_{2,n})$,~$(w_{2,1}, ..., w_{2,n})$,~$(\sigma_{2,1}, ..., \sigma_{2,n})$~$\leftarrow$~\textbf{parse}($\pvss_2$)
 				\item \textbf{for} $0 \leq i \leq t$:
 				$F_i \leftarrow F_{1,i}F_{2,i}$ 
 				\item \textbf{for} $1 \leq i \leq n$:
 				\begin{itemize}
 					\item $A_i \leftarrow A_{1,i}A_{2,i}$; $\hat{Y}_i \leftarrow \hat{Y}_{1,i}\hat{Y}_{2,i}$; $w_i \leftarrow w_{1,i}+w_{2,i}$ 
 					\item \textbf{if} $\sigma_{1,i} \neq \bot$: $\sigma_i \leftarrow \sigma_{1,i}$, \textbf{else}: $\sigma_i \leftarrow \sigma_{2,i}$
 					\item \textbf{if} $\C_{1,i} \neq \bot$: $\C_i \leftarrow \sigma_{1,i}$, \textbf{else}: $\C_i \leftarrow \sigma_{2,i}$
 				\end{itemize}
 				\item $\hat{u}_2 \leftarrow \hat{u}_{1,2}\hat{u}_{2,2}$
 				\item \textbf{return} $\mathbf{F}$, $\hat{u}_2$,  $\mathbf{A}$,  $\mathbf{\hat{Y}}$, $\mathbf{C}$, $\boldsymbol{w}$, $\boldsymbol{\sigma}$ 
 			\end{itemize}
 			\vspace*{0.2 cm}
 			\State $\share(\dk_i, \pvss) \rightarrow \sh_i$
 			\begin{itemize}
 				\item $\mathbf{F}$, $\hat{u}_2$,  $\mathbf{A}$,  $\mathbf{\hat{Y}}$, $\mathbf{C}$, $\boldsymbol{w}$, $\boldsymbol{\sigma}$ $\leftarrow$ \textbf{parse}($\pvss$)
 				\item \textbf{return} $\mathbf{\hat{Y}_i}^{{dk_i}^{-1}}$ 
 			\end{itemize}
 			\vspace*{0.2 cm}
 			\State $\vrfyshare(j, \sh_j, \pvss) \rightarrow 0/1$.
 			\begin{itemize}
 				\item $\mathbf{F}$, $\hat{u}_2$,  $\mathbf{A}$,  $\mathbf{\hat{Y}}$, $\mathbf{C}$, $\boldsymbol{w}$, $\boldsymbol{\sigma}$ $\leftarrow$ \textbf{parse}($\pvss$)
 				\item \textbf{check} $e(A_j, \hat{h}_1) = e(g_1, \sh_j)$
 				\item \textbf{return} $1$ if the check pass, else \textbf{return} $0$
 			\end{itemize}
 			\vspace*{0.2 cm}
 			\State $\aggshares(\{(j, \sh_j)\}_{t}) \rightarrow a$
 			\begin{itemize}
 				\item $\Si \leftarrow \emptyset$
 				\item \textbf{for} all input $(j, \sh_j)$:
 				\textbf{if} $\vrfyshare(j, \sh_j, \pvss) = 1$, then $\Si \leftarrow \Si \cup (j, \sh_j)$
 				\item \textbf{return} $\prod_{i \in \Si}{\sh_i}^{l_{\Si, i}(0)}$	
 			\end{itemize}
 			
 			\vspace*{0.2 cm}
 			\State  $\vrfysecret(s, \pvss) \rightarrow 0/1$
 			\begin{itemize}
 				\item \textbf{check} 	$e(F_0, \hat{h}_1) = e(g_1, s))$
 				\item \textbf{return} $1$ if the check pass, else \textbf{return} $0$
 			\end{itemize}
 			\vspace*{0.2 cm}
 			\State $\weights(\pvss) \rightarrow  \boldsymbol{w}$
 			\begin{itemize}
 				\item $\mathbf{F}$, $\hat{u}_2$,  $\mathbf{A}$,  $\mathbf{\hat{Y}}$, $\mathbf{C}$, $\boldsymbol{w}$, $\boldsymbol{\sigma}$ $\leftarrow$ \textbf{parse}($\pvss$)
 				\item \textbf{return}  $\boldsymbol{w}$
 			\end{itemize}
 			
 		\end{algorithmic}
 		
 	\end{footnotesize}
 	
 \end{algorithm}
 
\noindent	
\subsection{Aggregatable PVSS}

 Gurkan et al.   \cite{gurkan2021aggregatable} constructed   aggregable $\PVSS$   by lifting Scrape  $\PVSS$ scheme from Cascudo et al. \cite{cascudo2017scrape}. 
We slightly  rephrase Gurkan et al.'s algorithm description, and present it in Alg. \ref{alg:aggpvss}.
Note that the participating parties use the same CRS consisting of (1) a bilinear group description $\mathsf{bp}$ (which fixes $g_1 \in \mathbb{G}_1$ and $\hat{h}_1 \in \mathbb{G}_2$), (2) a group element $\hat{u}_1 \in \mathbb{G}_2$ (3) encryption keys $ek_i \in \mathbb{G}_2$ for every party $\node_i$ with corresponding decryption keys $dk_i \in F$ known only to $\node_i$  such that $ek_i = h^{dk_i}$ and (4) verification keys $vk_i \in \mathbb{G}_2$ for every party $\node_i$ with corresponding secret keys $\sk_i \in F$ known only to $\node_i$  such that $\vk_i = {g_1}^{sk_i}$.

	\ignore{
	Let us begin with introducing the  syntax and properties $\PVSS$ scheme without aggregability. 
	Remark that the   scheme might need some public parameter $\param$ (which is simply the description for some cyclic groups  and would be explained later). Under the PKI setting, each party $\node_i$ is bounded to an encryption-decryption key pair $(\ek_i, \dk_i)$. The decryption key $\dk_i$   is only known by the party $\node_i$, while the encryption key $\ek_i$ 
	is known by all parties. Let $\allek$   denote the encryption keys $\{\ek_1,\dots, \ek_n\}$ of all parties.
	
	\smallskip
	{\textsc{Syntax}}. The $(n,t)$-$\PVSS$ scheme can be described as a tuple of  non-interactive  algorithms as follows (with taking  $\param$ as an  implicit input):
	\begin{itemize}
		\item $\deal(\allek, s) \rightarrow \pvss$. This (probably probabilistic) algorithm takes a secret $s$ as input and outputs the $\PVSS$ script $\pvss$.
		
		\item $\vrfyscript(\allek, \pvss) \rightarrow 0/1$. This deterministic algorithm takes all encryption keys as input, thus that it can verify whether a $\PVSS$ script $\pvss$ is valid in the sense that $\pvss$ commits a fixed polynomial that can be   reconstructed collectively by $n$ parties (i.e., output 1) or not (i.e., 0).

		\item $\share(\dk_i, \pvss) \rightarrow \sh_i$. When executed by $\node_i$, this  algorithm takes a valid $\pvss$ script and the party's decryption key $\dk_i$ as input and outputs the secret share $\sh_i$ regarding the secret committed to $\pvss$.
		
		\item $\vrfyshare(j, \sh_j, \pvss) \rightarrow 0/1$. It takes a $\PVSS$ script  $\pvss$ and a secret share $\sh_j$ from $\node_j$, and it verifies whether $\sh_j$ is the correct $j^{th}$ share of the polynomial committed to $\pvss$ (i.e., output 1) or not (i.e., output 0).
		
		\item $\aggshares(\{(j, \sh_j)\}_{t}) \rightarrow a$. It takes $t$ valid secret shares from distinct parties regarding an implicit $\PVSS$ script  $\pvss$   and computes the secret $a$ committed to the implicit $\pvss$.
		
		\item $\vrfysecret(s, \pvss) \rightarrow 0/1$. It verifies whether a secret $s$ is indeed committed to $\pvss$ (i.e., output 1) or not (i.e., output 0).
		
	\end{itemize}
}
	 
	Conditioned on SXDH assumption and PKI setup, 
	the above aggregatable $\PVSS$ scheme has a few nice security properties such as {\em verifiable commitment}, {\em verifiable aggregation} and {\em secrecy}:
	\begin{itemize}
		\ignore{
		\item {\bf Correctness} is as straightforward.
		\begin{itemize}
			\item $\forall~ s\in \mathbb{Z}_q$, the script verification      passes for an honest dealer $\node_i$, and only the $i^{th}$ weight is 1,
			\begin{small}
			$$\Pr \big[~
			\begin{array}{lr} 
				\vrfyscript(\allek,\allvk, \pvss)=1&\\
				\weights(\pvss)=\boldsymbol{w}_i
			\end{array} 
			~|~  \pvss \leftarrow \deal(\allek,\sk_i, s) ~\big] = 1$$
			\end{small}
			Here $\boldsymbol{w}_i$ denotes that except that the $i^{th}$ element is 1,   all else positions in the  vector are 0.
			
			\medskip
			\item Any honestly  decrypted share of any valid $\PVSS$ script  can be validated,
			\begin{small}
			$$\Pr\bigg[ 
			\vrfyshare(j, \sh_j, \pvss)=1 ~\bigg|~  
			\begin{array}{lr}
				\vrfyscript(\allek, \allvk, \pvss)=1 &\\
				\sh_j \leftarrow \share(\dk_j, \pvss)   
			\end{array}
			\bigg] = 1
			$$	\end{small}
		
			\item Any honestly  decrypted share of any valid $\PVSS$ script  can be validated,
		    \begin{small}
			$$\Pr\bigg[ 
			\vrfyshare(j, \sh_j, \pvss)=1 ~\bigg|~  
			\begin{array}{lr}
				\vrfyscript(\allek, \allvk, \pvss)=1 &\\
				\sh_j \leftarrow \share(\dk_j, \pvss)   
			\end{array}
			\bigg] = 1
			$$	\end{small}
			
			\item If an honest dealer shares a secret $s$ with a  $\PVSS$ script $\pvss$, then any $t$ honest parties (e.g., denoted by a set $Q$) can collectively reconstruct the original secret $s$, namely,
			\begin{small}$$\Pr\Bigg[ 
			\aggshares(S)=s ~\Bigg|~  
			\begin{array}{lr}
				S \leftarrow  \{(j,\sh_j \leftarrow \share(\dk_j, \pvss) )\}_{j \in Q}    &\\
				Q \in \{P ~|~ P \in [n] \wedge |P|=t \} &\\
				\pvss \leftarrow \deal(\allek, s) &\\
			\end{array}
			\Bigg] = 1
			$$\end{small}
			
			\item The secret obtained through aggregating $t$ valid secret shares for a valid $\pvss$ script can also be validated with using the same $\pvss$, namely,
			
		\begin{small}	$$\Pr\Bigg[ 
			\vrfysecret(s, \pvss)  = 1 ~\Bigg|~  
			\begin{array}{lr}
				Q \in \{P ~|~ P \in [n] \wedge |P|=t \} &\\
				\vrfyshare(j, \sh_j, \pvss) =1, \forall j \in Q  &\\
				
				\aggshares(S)=s  \wedge 
				S \leftarrow  \{(j,\sh_j )\}_{j \in Q}  &\\
			\end{array}
			\Bigg] = 1
			$$\end{small}
			
		\end{itemize}
		}
		
		\item {\bf Verifiable commitment} indicates that any sharing script $\pvss$ can be verified by the public to tell whether it is valid to commit a fixed secret  $F^*(0)$ or not. Namely, if $\pvss$ is valid due to $\vrfyscript(\allek, \allvk, \pvss)=1$, it is guaranteed that: 
		\begin{itemize}
			\item There exists a fixed secret $F^*(0)$, such that $\vrfysecret(F^*(0), \pvss)=1$.
			\item Any $t$   decryption shares validated by $\vrfyshare$ with regard to the $\pvss$ script from   distinct parties (including up to $f$ malicious ones) can recover a      secret $F(0)$ same to $F^*(0)$.
		\end{itemize}
		
		
		\ignore{
			$$\Pr\Bigg[ 
			\begin{array}{lr}
				\aggshares({S}_1) &\\
				~~~~~~~ = &\\
				\aggshares({S}_2) 
			\end{array}
			~\Bigg|~  
			\begin{array}{lr}
				Q, Q_h, \pvss, \{(ek_j, dk_j)\}_{j \in \bar{Q}} \leftarrow  \adv(1^\lambda)    &\\
				Q \in \{P ~|~ P \in [n] \wedge |P|=f+1 \} &\\
				Q_h \in \{P ~|~ P \in [n] \setminus \bar{Q} \wedge |P|=f+1 \} &\\
				
				\vrfyscript(\allek, \pvss)=1 &\\
				
				
				S_h \leftarrow    \{(j,\hat{A}_j \leftarrow \share(\dk_j, \pvss) )\}_{j \in {S}_h } &\\ 
				
				{S}_1 \leftarrow  \{(j,\hat{A}_j \leftarrow  \adv(1^\lambda)  )\}_{j \in {Q}_1 }  &\\
				{S}_2 \leftarrow  \{(j,\hat{A}_j \leftarrow  \adv(1^\lambda)  )\}_{j \in {S}_h }  &\\
				
				\forall (j, \hat{A}_j) \in  {S}_1, \vrfyshare(j, \hat{A}_j, \pvss)=1  &\\
				\forall (j, \hat{A}_j) \in  {S}_2, \vrfyshare(j, \hat{A}_j, \pvss)=1  &\\
				
			\end{array}
			\Bigg] \le \negl(\lambda)
			$$
		}
		
		\medskip
		\item {\bf Verifiable aggregation} means that any verified $\PVSS$ script $\pvss$ with weight tag $\boldsymbol{w}$ must indeed commit a secret that is a linear combination of participating parties' secrets due to $\boldsymbol{w}$. 
		\begin{itemize}
			\item If $\node_i$ is honest, 
			it is infeasible for the adversary to compute a $\PVSS$ script $\pvss_i$ by itself (without query $\node_i$ to get $\pvss_i$)  s.t., $\vrfyscript(\allek, \allvk, \pvss_i) = 1$  and $\weights(\pvss_{i})$ returns $\boldsymbol{w}$ with a non-zero $i^{th}$ position.
			\item Moreover, $F^*(0) = \sum_{i=1}^{n} w_i F_i^*(0)$. Here $F^*(0)$ is the secret committed to a $\PVSS$ script $\pvss$, $w_i$ is the $i^{th}$ element in $\boldsymbol{w}=\weights(\pvss)$, 
			and $F_i^*(0)$ represents the secret that is committed to some  $\PVSS$ script $\pvss_i$ that is computed by the party $\node_i$.    
		\end{itemize}
		
		\medskip
		\item {\bf Secrecy} means that the adversary learns nothing besides the public knowledge through the $\PVSS$ script $\pvss$, unless the adversary gets the decrypted secret shares of $t-f$ honest parties. 
	\end{itemize} 
	
	\medskip
	With the above properties at hand, we actually see that the adversary to have negligible probability to win the following Prediction game. The property is also called   unpredictability by us, and captures the major threats in our $\seedgen$ protocol (soon to be explained). It intuitively states that if each honest party $\node_i$ randomly chooses the input secret $s_i$ committed to its $\PVSS$ script $\pvss_i$, then the adversary cannot compute the aggregated secret $s$ committed to any valid $\pvss$, as long as   $\pvss$  has a non-zero weight to reflect some honest party's $\PVSS$ script $\pvss_i$ is indeed aggregated to it. 
	
	\smallskip	
	\noindent
	{\bf Prediction game}. 
	{\em 
		The Prediction game between an adversary $\adv$ and a challenger $\cha$ is defined as follows for a $(n,t)$ aggregatable $\PVSS$ scheme with   weight tags in the presence of up to $f$ static corruptions ($f+1 \le t$):
		\begin{enumerate}
			\item The adversary chooses a set $\bar{Q}$ of   $f$ corrupted parties,  generates the public-private key pairs for each corrupted party in $\bar{Q}$, and sends all the public keys to the challenger $\cha$.
			\item The challenger $\cha$ generates all public-private key pairs for all honest parties, and sends these public keys to the adversary $\adv$.
			\item The adversary $\adv$ queries $\cha$ for each party $\node_i$ in $[n] \setminus \bar{Q}$, such that  $\cha$ randomly chooses $s_i \in \mathbb{Z}_q$, computes   $\pvss_i \leftarrow \deal(\allek,  \sk_i,s_i)$ and sends $\pvss_i$ to $\adv$.
			The adversary now produces a valid $\pvss$ script such that $\weights(\pvss)$ outputs $\vec{w}$ containing more than $f+k$ positions (where $1 \le k \le n-f$), and then sends $\pvss$ to $\cha$.
			\item The adversary asks the challenger to compute at most $t-f-1$ decryption shares on the received $\pvss$ and then send these shares back, and then
			the adversary guesses a value  $s^* \in \mathbb{Z}_q$. 

		\end{enumerate}
		The adversary wins if $s^*=s$, where $s$ is the actual secret committed to $\pvss$. 
	}
	
		\vspace{-0.5cm}
	\noindent
	\begin{lemma}[Unpredictability] 
		 There exists an  aggregatable $\PVSS$ construction \cite{gurkan2021aggregatable} based on SXDH assumption, such that no static adversary controlling up to $f$ parties can  win Prediction game with all but negligible probability.
	\end{lemma}
	
	\smallskip
	Note that Gurkan et al. \cite{gurkan2021aggregatable} did not formalize   unpredictability for their aggregatable $\PVSS$ scheme in our way. Nevertheless, as a trivial corollary, their construction  satisfies the unpredictability property (that can   simplify the  proof   for the security of our $\seedgen$ protocols), because failing to satisfy unpredictability  would directly break the authors' DKG scheme  (as the adversary can directly tell the   aggregated secret key to break DKG).

	\ignore{
	\subsection{Aggregatable PVSS with weight tags}
	To lift   Scrape  $\PVSS$ scheme to enjoy aggregability, \cite{gurkan2021aggregatable} recently proposed to add an  aggregating algorithm to the scheme, which   naturally combines  any two valid sharing scripts to form another   one: 
	
	\begin{itemize}
		\item $\agg(\pvss_1, \pvss_2) \rightarrow \pvss$. This   algorithm takes two valid $\PVSS$ scripts $\pvss_1$ and $\pvss_2$ as input and outputs a valid $\PVSS$ script $\pvss$. We might let $\agg(\{\pvss_i,\dots, \pvss_j\})$ to be short for iteratively aggregating the $\pvss$ scripts in the set.
		
		
		
		
	\end{itemize}
	
	Now we expect the {\em commitment} property can   be extended accordingly to capture the aggregation of $\PVSS$ scripts, that is: 
	\begin{itemize}
		\item Suppose  that two valid  $\PVSS$ scripts $\pvss_1$ and $\pvss_2$ commit two secrets $F^*_1(0)$ and $F^*_2(0)$, respectively; then $\agg(\pvss_1, \pvss_2)$ always outputs a valid $\pvss$  committing a  secret $F^*(0) = F^*_1(0)+F^*_2(0)$.
	\end{itemize}

	The  aggregatable $\PVSS$  scheme can be further lifted with weight tags (for the original purpose of gossip-based distributed key generation) in \cite{gurkan2021aggregatable}, which means each aggregated $\PVSS$ script $\pvss$  is attached with some implicit tags, so they can be checked by the public to verify the script $\pvss$   indeed   aggregates   which parties' $\PVSS$ scripts. 
	Those lifted algorithms were termed as aggregatable distributed key generation in \cite{gurkan2021aggregatable}; nevertheless, we remain to  view it as  PVSS with advanced properties for presentation simplicity (which is also pointed out in \cite{gurkan2021aggregatable}), since we do not use it as DKG for any threshold public-key cryptosystem and just use the scheme for sharing some secret  confidentially in a publicly verifiable manner. 
	For the purpose, one can add another algorithm $\weights$ to the $\PVSS$ scheme and also slightly lift the syntax of $\PVSS$  scheme as follows:
	\begin{itemize}
		
		\item $\deal(\allek,  \sk_i, s) \rightarrow \pvss$. Now the algorithm takes an extra secret key $\sk_i$ as input when it is executed by the party $\node_i$, which is needed to make $\pvss$ carry an unforgeable weight tag bounded to the identity $\node_i$. 
		Remark that sometimes, we might let $\deal$ to share a randomly chosen element in $\mathbb{Z}_q$ and thus omit the input field  of $s$  to rewrite it as $\deal(\allek,  \sk_i) \rightarrow \pvss$ for short.
		
		\item $\vrfyscript(\allek, \allvk, \pvss) \rightarrow 0/1$. It takes some verification keys $\allvk$ besides $\allek$ and $\pvss$ as input. The output   still represents whether $\pvss$ is valid or not.
		
		\item $\weights(\pvss) \rightarrow \vec{w}$. It takes a valid $\pvss$ script as input and outputs an $n$-sized vector $\vec{w}$,  every $j^{th}$ element in which belongs to $\mathbb{N}^0$ and represents that the $\pvss$ script indeed aggregates a certain $\pvss$ script from the party $\node_j$.
		
	\end{itemize}
	
	The     further augmented aggregatable $\PVSS$ scheme with weight tags $\vec{w}$ shall satisfy extra properties to make  $\vec{w}$    reflect the   indices of the   parties   actually contributing in     $\pvss$.
	The additional   {\em correctness} property  related to weights are:
	\begin{itemize}
		\item If $ \deal(\allek,  \sk_i, s) \rightarrow \pvss$ and $\weights(\pvss) \rightarrow \vec{w}$, then the $i^{th}$ element at $\vec{w}$ is 1, and all else positions  at $\vec{w}$ are 0, which means the script $\pvss$ is contributed by $\node_i$.
		\item If $\pvss_1$ and $\pvss_2$ are two valid $\PVSS$ scripts and  $\agg(\pvss_1, \pvss_2)$ outputs $\pvss$, then $\vec{w} = \vec{w}_1+\vec{w}_2$, where $\vec{w} \leftarrow \weights(\pvss)$, $\vec{w}_1 \leftarrow \weights(\pvss_1)$, and $\vec{w}_2 \leftarrow \weights(\pvss_2)$.
	\end{itemize} 
	
	
	\noindent{\bf Unpredictability}.
	Finally, the weights carried by each $\pvss$ script shall also be {\em unforgeable}, in the   sense of attesting that the $\weights$ algorithm can   extract from the  script to tell that the script indeed aggregates which parties'  $\pvss$ scripts. We can think of the property by requiring the adversary to have negligible probability to win the following Unpredictability game, which intuitively states that if each honest party $\node_i$ randomly chooses the input secret $s_j$ committed to its $\PVSS$ script $\pvss_j$, then the adversary cannot compute the aggregated secret $s$ committed to a valid $\pvss$, as long as the script $\pvss$  has a non-zero weight to reflect some honest party's contribution in it. 

	\ignore{
		\begin{itemize}
			\item In  the Unforgeability game of the augmented aggregatable $\PVSS$ scheme with tagging for contributing parties (among $n$ parties with up to $f$ corruptions),    the adversary $\adv$ interacts with a challenger $\cha$ as:
			\begin{enumerate}
				\item The challenger $\cha$ generates the decryption-encryption key pairs and sign-verification key pairs for all honest parties, and sends these public keys to the adversary $\adv$.
				\item The adversary $\adv$ generates the decryption-encryption key pairs and sign-verification key pairs on behalf of all up to $f$ corrupted parties, and sends these public keys to the challenger $\cha$.
				\item For  each honest $\node_i$, the challengers $\cha$ randomly chooses a secret  $s_{i}$,
				computes $(\pvss_{i}, \pvsstag_{i}) \leftarrow \deal(\allek,  \sk_i, s_{i})$ and sends $(\pvss_{i}, \pvsstag_{i})$ to the adversary.
				
				\item The adversary $\adv$ guesses a bit $b'$.
			\end{enumerate}
			\item The advantage of $\adv$ in the above IND1-Secrecy game is   $|\Pr[b=b'] -1/2 |$
		\end{itemize}
	}
	

	\medskip
	\noindent
	{\bf Unpredictability game}. 
	{\em 
		The Unpredictability game between an adversary $\adv$ and a challenger $\cha$ is defined as follows for a $(n,t)$ aggregatable $\PVSS$ scheme with   weight tags in the presence of up to $f$ static corruptions ($f+1 \le t$):
		\begin{enumerate}
			\item The adversary chooses a set $\bar{Q}$ of   $f$ corrupted parties,  generates the public-private key pairs for each corrupted party in $\bar{Q}$, and sends all the public keys to the challenger $\cha$.
			\item The challenger $\cha$ generates all public-private key pairs for all honest parties, and sends these public keys to the adversary $\adv$.
			\item The adversary $\adv$ queries $\cha$ for each party $\node_i$ in $[n] \setminus \bar{Q}$, such that  $\cha$ randomly chooses $s_i \in \mathbb{Z}_q$, computes   $\pvss_i \leftarrow \deal(\allek,  \sk_i,s_i)$ and sends $\pvss_i$ to $\adv$.
			The adversary now produces a valid $\pvss$ script such that $\weights(\pvss)$ outputs $\vec{w}$ containing more than $f+k$ positions (where $1 \le k \le n-f$), and then sends $\pvss$ to $\cha$.
			\item The adversary asks the challenger to compute at most $t-f-1$ decryption shares on the received $\pvss$ and then send these shares back, and then
			the adversary guesses a value  $s^* \in \mathbb{Z}_q$. 

		\end{enumerate}
		The adversary wins if $s^*=s$, where $s$ is the actual secret committed to $\pvss$. 
	}

	\medskip
	Remarkably, Gurkan et al. \cite{gurkan2021aggregatable} give a beautiful construction  of aggregatable $\PVSS$ scheme    with unforgeable tags to attest the contributing parties by lifting Scrape $\PVSS$ \cite{cascudo2017scrape}. 
	The scheme assumes some public parameter $\param$ consisting of: (i) two cyclic groups $\mathbb{G}_1$ and $\mathbb{G}_2$; 
	(ii) a random generator $g \in \mathbb{G}_1$ and two random elements $h \in \mathbb{G}_2$ and $u \in \mathbb{G}_2$. Besides those, their scheme can be established in the PKI setting, as it just requires each party to register two public keys. Moreover, each $\PVSS$ script or tag includes at most $\bigO(\lambda n)$ bits.  
	Note that Gurkan et al. did not formalize the unpredictability for their aggregatable $\PVSS$ scheme in our way. Nevertheless, as a trivial corollary, their construction  satisfies the unpredictability property needed by us for constructing $\seedgen$, because failing to satisfy unpredictability  would directly break the authors' DKG scheme  (as the adversary can directly tell the   aggregated secret used as the DKG's secret key).
	%
	We refer the interested reader to \cite{gurkan2021aggregatable}  for  details.
	
}

	\begin{algorithm}[!h]
		\begin{footnotesize}
			
			\caption{$\seedgen$ protocol with identifier $\ID$ and leader $\node_L$}
			\label{alg:seed}
			\begin{algorithmic}[1]
				
				\vspace*{0.1cm}
				\Statex {\color{blue}  /* Protocol for {each party $\node_i$} */ }
				\vspace*{0.1 cm}
				
				\Upon {being activated}
				\State randomly sample a secret $s$, $\pvss_i \leftarrow \deal(\allek,  \sk_i, s)$,
				and \textbf{send} $\SCRIPT(\ID,\pvss_i)$ to $\node_L$
				\EndUpon
				
				\vspace*{0.1 cm}
				\Upon {receiving $\LOCKPVSS(\ID,\pvss)$ message from $\node_L$ for the first time}
				\If {{$\vrfyscript(\allek, \allvk, \pvss) = 1$ $\wedge$ ($\weights(\pvss)$ contains $2f+1$ ones)}}
				\State record $\pvss$, $\sigma_i \leftarrow \Sign^\ID_{i}(\pvss)$
				and  \textbf{send} $\CONFIRMPVSS(\ID,\sigma_i)$ to $\node_L$
				\EndIf
				\EndUpon
				
				\vspace*{0.1 cm}
				\Upon {receiving $\COMMITPVSS(\ID,  \Sigma)$ from $\node_L$ for the first time}
				\If{ $\Sigma$ contains $2f+1$ valid signatures  for a recorded $\pvss$ from distinct parties}
				\State  $\sh_i \leftarrow \share(\pvss)$
				and  $ \textbf{send}$ $\SEEDSHARE(\ID,\sh_i)$ to $\node_L$
				\EndIf
				\EndUpon
				
				\vspace*{0.1 cm}
				\Upon {receiving  $\SEED(\ID,  \Sigma,seed)$ from $\node_L$ for the first time}
				\If {  $\vrfysecret(seed, \pvss)=1$ $\wedge$ ($\Sigma$ contains  $2f+1$ valid signatures  for $\pvss$ from distinct  parties)}
				\State  $\textbf{send}$ $\SEEDECHO(\ID, seed)$ to all parties
				\EndIf
				
				\EndUpon
				
				\vspace*{0.1 cm}
				\Upon {receiving $2f+1$ $\SEEDECHO(\ID, seed)$   from distinct parties}
				\State \textbf{send} $\SEEDREADY(\ID, seed)$  to all parties if $\SEEDREADY$ not sent yet
				\EndUpon
				
				\vspace*{0.1 cm}
				\Upon {receiving $f+1$ $\SEEDREADY(\ID, seed)$   from distinct parties}
				\State \textbf{send} $\SEEDREADY(\ID, seed)$  to all parties if $\SEEDREADY$ not sent yet
				\EndUpon
				
				\vspace*{0.1 cm}
				\Upon {receiving $2f+1$ $\SEEDREADY(\ID, seed)$    from distinct parties}
				\State \textbf{output} $seed$
				\EndUpon
				
				\vspace*{0.2 cm}
				\Statex {\color{blue}  /* Protocol for {the leader $\node_L$} */ }

				\vspace*{0.1 cm}
				\Upon {receiving $\SCRIPT(\ID,\pvss_j)$ from $\node_j$ for the first time}
				\If {$\vrfyscript(\allek,  \pvss_j) = 1$ $\wedge$ (the weights of $\pvss_j$ are all zeros but one at the $j^{th}$ position)}
				\State $\K \leftarrow \K \cup \{\pvss_j\}$
				\If {$|K|=2f+1$}
				\State $\pvss \leftarrow \agg(\K) $
				and \textbf{send} $\LOCKPVSS(\ID,\pvss)$ to all parties
				\EndIf
				\EndIf
				\EndUpon
				
				\vspace*{0.1 cm}
				\Upon  {receiving $\CONFIRMPVSS(\ID,\sigma_j)$ message from $\node_j$ for the first time}
				\Wait { for $\pvss$ is recorded} \EndWait
				\If {$\Verify^\ID_j(\pvss,\sigma_j)=1$}
				\State $\Sigma \leftarrow \Sigma \cup \{(j, \sigma_j)\}$
				\If {$|\Sigma|=2f+1$}
				\textbf{send} $\COMMITPVSS(\ID,  \Sigma)$ to all parties
				\EndIf
				\EndIf
				\EndUpon
				
				\vspace*{0.1 cm}
				\Upon  {receiving $\SEEDSHARE(\ID,\sh_j)$ message from $\node_j$ for the first time}
				\If {$\vrfyshare(\sh_j,\pvss)=1$}
				\State $S \leftarrow S \cup \{(j, \sh_j)\}$
				\If {$|S|=2f+1$}
				$seed \leftarrow \aggshares(S)$
				and \textbf{send} $\SEED(\ID,  \Sigma, seed)$ to all parties
				\EndIf
				\EndIf
				\EndUpon	
				
			\end{algorithmic}
			
		\end{footnotesize}
		
	\end{algorithm}

\subsection{Seeding from Aggregatable PVSS}
Here down below we describe an exemplary $\seedgen$ construction  from the $(n, 2f+1)$ aggregatable $\PVSS$ scheme (as detailed by Alg. \ref{alg:seed}). 
Recall that the protocol  is a two-phase protocol (including a committing phase and a revealing phase), and its execution can be briefly described as follows: 
\begin{itemize}
	\item {\em Committing Phase I -- Seed aggregation} (Line 1-2, 18-22).In this phase, each party invokes $\deal$ to create a $\pvss$ script and send this script to the leader $\node_L$. When the leader receives $\pvss_j$ from $\node_j$, it verifies the $\pvss$ using $\ek$ and checks if the weights of $\pvss_j$ are all zeros but one at the $j^{th}$ position. Once collecting $2f+1$ valid $\pvss$ scripts, the leader aggregates them and send it using a $\LOCKPVSS $ message.
	\item {\em Committing Phase II -- Seed commitment} (Line 3-8, 23-27). After receiving a $\LOCKPVSS$ message with a valid $\pvss$. Each party sign for it and send the signature $\sigma_i$ to the leader to commit the same $\pvss$. Upon receiving $2f+1$ valid signature for the $\pvss$, the leader send a $\COMMITPVSS$ message containing a signature set $\Sigma$. After receiving a valid $\COMMITPVSS$ messages from the leader, each party confirms that the output is actually fixed, and it takes the $\pvss$ as a input in $\share $ to output a share $\sh_i$ regarding the secret committed to $\pvss$.
	\item {\em Revealing Phase -- Seed recovery} (Line 9-17, 28-31).  The leader collects $2f+1$ valid shares which are committed to $\pvss$ and aggregates them to a $seed$. Then the $seed$ is sent to all parties with  $\Sigma$. Once honest parties receiving a valid seed with a hash value of $\pvss$ and $2f+1$ valid signatures for $\pvss$, the execution is just like a reliable broadcast. Specifically, each party sends $\SEEDECHO$ to all parties. After receiving  $2f+1$  $\SEEDECHO$ messages, honest parties send $\SEEDREADY$ to all parties. If an honest party receives $f+1$ $\SEEDREADY$ messages of same value and $\SEEDREADY$ has not been sent, it sends $\SEEDREADY$. When receiving $2f+1$ $\SEEDREADY$ messages of same value, honest parties output the $seed$. 
\end{itemize}

	\begin{lemma}\label{lemma:seed-com1}
		Upon any honest party completes the   committing phase,  any two valid $\COMMITPVSS$ messages $\COMMITPVSS(\ID, \Sigma)$ and $\COMMITPVSS(\ID, \Sigma')$   must have $\Sigma$ and $\Sigma'$ containing $n-f$ valid signatures for the same $\pvss$, and   there exists a fixed value $seed$ associated   to this $\pvss$.
	\end{lemma}
	\begin{proof}
		Assume that $\node_i$ is the first party who starts to run the revealing phase of  $\seedgen$ protocol, it implies that $\node_i$ received a valid  $\COMMITPVSS(\ID, \Sigma)$ messag from leader $\node_L$. If another honest party $\node_j$ also received a valid message $\COMMITPVSS(\ID, \Sigma')$ from leader $\node_L$, where signatures in $\Sigma$ are signed for $\pvss$ while signatures in $\Sigma'$ are signed for $\pvss'$, since a valid $\Sigma$ contains $2f+1$ valid signatures for a same $\pvss$ from distinct  parties. According to the the unforgeability of digital signatures, it induces that at least one honest party signed for both $\pvss$ and $\pvss'$, which is impossible because each honest party signs at most once. Hence, when some honest party $\node_i$ starts to run the revealing phase, the $\Sigma$ from any valid $\COMMITPVSS(\ID, \Sigma)$ message is for the same $\pvss$. Following the commitment of the $\PVSS$ scheme, there exists a fixed value $seed$ corresponding to the $\pvss$.
	\end{proof}

	\begin{lemma}\label{lemma:seed-com2}
		If any twos honest party output for $\ID$, then they output the same value. 
	\end{lemma}
	\begin{proof}
		Suppose that some honest party outputs $seed'$ from the $\seedgen$. By the code, it receives $2f+1$ $\SEEDREADY$ messages containing $seed'$. Then at least one honest party received $2f+1$ valid $\SEEDECHO$ messages with the same $seed'$ from distinct parties, which means that at least $f+1$ honest parties received valid $\SEED(\ID, \Sigma,seed')$ message from the leader. From the previous analysis, no honest party will accept a $seed' \neq seed$ from $\node_L$ or multicast it. Thus, $seed' =seed$.
	\end{proof}

	\noindent
	{\bf Deferred proofs for Lemma \ref{lemma:seed}}. Here we prove that the protocol in Alg. \ref{alg:seed} satisfies all properties of Reliable Broadcasted Seeding given in Definition \ref{def:seeding} and analyze how does it incur only quadratic messages and communications.
	\begin{proof}
		Here prove that Alg. \ref{alg:seed} satisfy the $\seedgen$ properties one by one:
		\begin{itemize}
			\item {\em Totality}. Assume that an honest party outputs in the $\seedgen$, it must have received $2f+1$ $\SEEDREADY$ messages. At least $f+1$ of the messages are sent from honest parties. Therefore, all parties will eventually receive $f+1$ $\SEEDREADY$ messages from these honest parties and then send a $\SEEDREADY$ messages as well (if a $\SEEDREADY$ message has not been sent yet). So the totality property always holds.
			\item {\em Correctness}. In the seed aggregation phase, the leader will collect $2f+1$ valid $\pvss_i$ scripts and aggregate them into a $\pvss$ whose weight has $2f+1$ positions to be 1. In the seed commitment phase, all honest parties sign for this $\pvss$ so that the leader can collect at least $n-f$ valid signatures for $\pvss$ to form valid $\Sigma$. In the seed recovery phase, the leader can collect at least $n-f$ valid shares which are committed to the $\pvss$ they have signed for. All honest parties can receive the same $\Sigma$ and $seed$ that pass verifications. So they would broadcast the same $\SEEDECHO$ message and the same $\SEEDREADY$ messages, thus finally outputting in the $\seedgen$ instance.
			
			\item {\em Commitment}. From Lemma \ref{lemma:seed-com1}, upon any honest party completes the protocol's committing phase, there exists a fixed value  $seed$ corresponding to the unique $\pvss$.
			From Lemma \ref{lemma:seed-com2}, if any honest party outputs for $\ID$, then it outputs value $seed$. Thus the commitment can be proved.


			\item {\em Unpredictability}. Prior to   $f+1$  honest parties are activated to run the revealing phase of the $\seedgen$ protocol, the adversary can only collect at most $2f$ decryption shares for the committed $\pvss$ script. 
			Trivially according to the unpredictability of $\PVSS$ with weight tags, since the aggregated $\pvss$ has a weight with $2f+1$ non-zero positions, it is infeasible for the adversary to compute a $seed^* = seed$ at the moment, where $seed$ is the actual  secret committed  to the aggregated $\pvss$ script. 
			
		\end{itemize}
		
		The complexities   can be easily seen as follows: The message complexity of  $\seedgen$ is $O(n^2)$, which is due to each party sends $n$ $\SEEDECHO$ and $\SEEDREADY$ messages; considering that the input secret $s$ and $\pvss$ both are $\bigO(\lambda)$ bits, and there are $O(n)$ messages with $\bigO(\lambda n)$ bits and $O(n^2)$ messages with $\bigO(\lambda)$ bits, thus the communication complexity of the protocol is of overall $O(\lambda n^2)$ bits.
		
	\end{proof}

	\ignore{
		\begin{algorithm}[!htbp]
			
			\caption{$\seedgen$ protocol with identifier $\ID$ and leader $\node_L$}
			\label{alg:seed}
			\begin{algorithmic}[1]
				
				\vspace*{0.12 cm}
				\Statex {\color{blue}  /* Protocol for {each party $\node_i$} */ }
				\vspace*{0.1 cm}
				
				\Upon {being activated}
				\State $keyshare_i \leftarrow \kshgen(sk_i)$
				\State \textbf{send} $(\KEYSHARE, keyshare_i)$ to $\node_L$
				\EndUpon
				
				\vspace*{0.1 cm}
				\Upon {receiving $\LOCKKEY$ message from $\node_L$}
				\If {$\kvrf(key)=1$}
				\State $\sigma_i \leftarrow \Sign^\ID_{i}(key)$
				\State  $\textbf{send}$ $(\CONFIRMKEY, \sigma_i)$ to $\node_L$
				\EndIf
				\EndUpon
				
				\vspace*{0.1 cm}
				\Upon {receiving a $\COMMITKEY$ message from $\node_L$ }
				\If{$|SigS|=2f+1$ and each $\sigma_j$ in $SigS$ is valid for $key$}
				\State $(\phi_i(\ID),\pi_i) \leftarrow \esh(key,sk_i,\ID)$
				\State  $\textbf{send}$ $(\EVALSHARE, i, (\phi_i(\ID),\pi_i))$ to $\node_L$
				\EndIf
				\EndUpon
				
				\vspace*{0.1 cm}
				\Upon {receiving $\EVAL$ message from $\node_L$}
				\If {$\evrf(key,\ID,(\phi(key,\ID),\pi))=1$}
				\State  $\textbf{send}$ $(\EVALECHO, i, (\phi(key,\ID),\pi))$ to all parties
				\EndIf
				\EndUpon
				
				\vspace*{0.1 cm}
				\Upon { receiving $2f+1$ $\EVALECHO$ of same value from distinct parties}
				\State \textbf{send} ${(\EVALREADY, (\phi(key,\ID),\pi))}$ to all parties if $\EVALECHO$ not sent yet
				\EndUpon
				
				\vspace*{0.1 cm}
				\Upon { receiving $f+1$ $\EVALREADY$ of same value from distinct parties}
				\State \textbf{send} ${(\EVALREADY, (\phi(key,\ID),\pi))}$ to all parties if $\EVALECHO$ not sent yet
				\EndUpon
				
				\vspace*{0.1 cm}
				\Upon { receiving $2f+1$ $\READY$ of same value from distinct parties}
				\State $\textbf{output}$ $(\phi(key,\ID),\pi)$
				\EndUpon
				
				\vspace*{0.2 cm}
				\Statex {\color{blue}  /* Protocol for {the leader $\node_L$} */ }

				\vspace*{0.1 cm}
				\Upon {receiving $(\KEYSHARE, keyshare_j)$ from $\node_j$ for the first time}
				\If {$\kshvrf(pk_j,keyshare_j)=1$}
				\State $\K \leftarrow \K \cup \{keyshare_j\}$
				\If {$|K|=f+1$}
				\State $key \leftarrow \kagg(\K) $
				\State $\textbf{send}$ $(\LOCKKEY, key)$ to all parties
				\EndIf
				\EndIf
				\EndUpon
				
				\vspace*{0.1 cm}
				\Upon  {receiving $\CONFIRMKEY$ message from $\node_j$ for the first time}
				\If {$\Verify(key,\sigma_j)=1$}
				\State $\Sigma \leftarrow \Sigma \cup \{(j, \sigma_j)\}$
				\If {$|\Sigma|=2f+1$}
				\State $\textbf{send}$ $(\COMMITKEY, \Sigma)$ to all parties
				\EndIf
				\EndIf
				\EndUpon
				
				\vspace*{0.1 cm}
				\Upon  {receiving $\EVALSHARE$ message from $\node_j$ for the first time}
				\If {$\evrf(key, pk_j, \ID,(\phi_j(\ID),\pi_j))=1$}
				\State $EvalS \leftarrow EvalS \cup \{(\phi_j(\ID),\pi_j)\}$
				\EndIf
				\Wait { until $|EvalS|=2t+1$}
				\State $(\phi(key,\ID),\pi) \leftarrow \ev(key,\ID,EvalS)$
				\State $\textbf{send}$ $(\EVAL, (\phi(key,\ID),\pi))$ to all parties
				\EndWait
				\EndUpon	
			\end{algorithmic}
			
		}
		

\ignore{	
		\section{Deferred Proofs for Common Coin}\label{app:coin}
	
		\ignore{
			\begin{lemma}\label{lemma:core-set}
				With overwhelming probability, once the first honest party receives a valid $\Sigma_j$ for an $\Si^*$, at least $f+1$ honest parties have already recorded local $\Si$ including $\Si^*$. We call such $\Si^*$ the core set.
			\end{lemma}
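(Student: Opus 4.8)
The plan is to prove the lemma by a quorum-counting argument glued together with three ingredients: unforgeability of the signature scheme, collision-resistance of $\hash$, and the monotonicity of each party's local index set $\Si$. First I would unpack what ``a valid $\Sigma_j$ for $\Si^*$'' means according to the verification rule at Line~21 of Alg.~\ref{alg:coin}: a valid $\Sigma_j$ contains valid signatures on $\hash(\Si^*)$ from $n-f$ \emph{distinct} parties. Since at most $f$ of the $n\ge 3f+1$ parties are corrupt, at least $n-2f\ge f+1$ of these $n-f$ signers are honest. This is the source of the ``$f+1$'' in the statement, and it is the only combinatorial counting needed.

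Next I would trace the provenance of each honest signature. By unforgeability, every honest signature on $\hash(\Si^*)$ appearing in $\Sigma_j$ was genuinely produced by the corresponding honest party $\node_k$ while executing Lines~13--15 of Alg.~\ref{alg:coin}. By the code, $\node_k$ signs $\hash(\widetilde{\Si}_j)$ only after the guard $\widetilde{\Si}_j \subseteq \Si$ at Line~14 passes against its own current local set $\Si$, where $\widetilde{\Si}_j$ is the set carried in the $\LOCK$ message it received. Since the resulting signature verifies on $\hash(\Si^*)$, collision-resistance of $\hash$ forces the signed set $\widetilde{\Si}_j$ to equal $\Si^*$ with all but negligible probability; otherwise the adversary, by inducing a signature on a set $\widetilde{\Si}_j \ne \Si^*$ with $\hash(\widetilde{\Si}_j)=\hash(\Si^*)$, would have exhibited a hash collision. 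Hence, at the moment each of these $\ge f+1$ honest parties signed, it held $\Si^* \subseteq \Si$ in its local state.

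Then I would close the gap between ``at the instant of signing'' and ``already recorded $\Si^*\subseteq\Si$ when the first honest party receives a valid $\Sigma_j$.'' The local set $\Si$ is only ever enlarged (indices are inserted at Line~10 and never deleted), so the inclusion $\Si^* \subseteq \Si$ is monotone: once it holds it holds forever. A signature cannot be present in $\Sigma_j$ before it is produced, and $\Sigma_j$ cannot be received before it is assembled and delivered; therefore each of these $\ge f+1$ honest parties had recorded $\Si^* \subseteq \Si$ strictly before any honest party could receive a valid $\Sigma_j$ for $\Si^*$, and it continues to hold thereafter. Absorbing the polynomially many signature-forgery and hash-collision events into a single negligible error term via a union bound yields the ``overwhelming probability'' qualifier and completes the argument.

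The hard part will be the careful coupling of collision-resistance with the signing rule: honest parties sign \emph{hashes} of index sets rather than the sets themselves, so I must explicitly rule out the adversary coaxing an honest party into signing $\hash(\Si')$ for some $\Si'\ne\Si^*$ with $\hash(\Si')=\hash(\Si^*)$ and $\Si'\not\subseteq$ that signer's local set. This is precisely where collision-resistance is indispensable and is what prevents the statement from being a purely combinatorial certainty. The timing-and-monotonicity bookkeeping in the third step is a secondary subtlety but becomes routine once the signing-rule invariant ``an honest signature on $\hash(\Si^*)$ witnesses $\Si^*\subseteq\Si$ at signing time'' is firmly established.
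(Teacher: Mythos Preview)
Your proposal is correct and follows essentially the same quorum-counting argument as the paper: a valid $\Sigma_j$ carries $n-f$ signatures, of which at least $n-2f\ge f+1$ come from honest parties who, by the code and unforgeability, signed only after $\Si^*\subseteq\Si$ held locally. You are simply more explicit than the paper about the role of collision-resistance (since parties sign $\hash(\widetilde{\Si}_j)$ rather than $\widetilde{\Si}_j$ itself) and about the monotonicity/timing bookkeeping, both of which the paper's three-sentence proof leaves implicit.
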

			\begin{proof}
				If an honest party receives a valid $\Sigma_j$ with $n-f$ signatures for $\hash(\Si_j)$, at least $f+1$ signatures are signed by honest parties. Note that an honest party $\node_i$ will sign for some $\Si_j$ only if $\Si_j \subseteq \Si_i$. Thus, trivially from the unforgeability of digital signatures, with all but negligible probability, if an honest parties receives a valid $\Sigma_j$ for $\Si_j$, there exists a core set $\Si^*=\Si_j$ which is subset of at least $f+1$ honest parties' local $\Si$.   
			\end{proof}
		}
		\ignore{
		\begin{lemma}\label{lemma:case}
			Let $\mathsf{Event_{good}}$ to denote a case in which a core set $\Si^*$  solicits an honest party's VRF evaluation that is also largest among all parties' VRF, and the remaining case denoted by $\mathsf{Event_{bad}}$ to cover all other possible executions, then  the probability of  the $\mathsf{Event_{good}}$ occurs is $\alpha\ge1/3$, under the  ideal functionality of VRF   \cite{david2018ouroboros} (which is realizable in the random oracle model with CDH assumption).
		\end{lemma}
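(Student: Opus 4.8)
The plan is to show that, at the moment the core set $\Si^*$ becomes fixed, the $n$ VRF evaluations behave like $n$ i.i.d. uniform values, so that the global maximum lands on an honest index lying inside $\Si^*$ with probability at least $(f+1)/n \ge 1/3$. I would organize the argument in four steps.

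First I would argue that all $n$ VRF evaluations are jointly uniform and independent over the VRF range. For the honest parties this is immediate from the ideal VRF functionality of \cite{david2018ouroboros} (realizable in the random oracle model under CDH), since each honest party evaluates its VRF on a fresh seed. For the corrupted parties the point is more delicate: although a corrupt party may register a maliciously generated key, the $\seedgen$ output used as the VRF input is, by the commitment and unpredictability properties of Lemma \ref{lemma:seed}, fixed and unpredictable before anyone can query the VRF oracle on it. Hence a corrupt party cannot grind its key against the actual seed, and its evaluation on that seed is also uniform and independent of the others. Since the VRF range has size $2^\lambda$, ties occur with only negligible probability and may be ignored. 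Second, I would invoke Lemma \ref{lemma:core-set}: once the first honest party accepts a valid quorum certificate a core set $\Si^*$ of size $n-f$ is fixed, and as at most $f$ parties are corrupt, $\Si^*$ contains at least $n-2f=f+1$ honest indices. Write $H$ for the honest set and $H^*=H\cap\Si^*$, so $|H^*|\ge f+1$.

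Third --- and this is the key independence step --- I would use the secrecy of $\AVSS$ to argue that the honest VRF evaluations stay hidden from $\adv$ until reconstruction begins, which by construction happens only after the quorum certificate (hence $\Si^*$) is fixed. Consequently the identity set $H^*$ that the adversary effectively selects through scheduling is independent of the honest VRF values. Letting $\ell^*=\argmax_{i\in H} r_i$ be the honest index achieving the largest evaluation, for i.i.d. continuous values $\ell^*$ is uniform over $H$ and independent of the magnitude $M=\max_{i\in H}r_i$; combined with independence of honest values from the corrupt values and the scheduling, this gives $\ell^*\perp(H^*,M,\max_{i\notin H}r_i)$. Conditioning on $(H^*,M,\max_{i\notin H}r_i)$ and using $|H^*|\ge f+1$ pointwise therefore yields
\[
\Pr[\mathsf{Event_{good}}]=\Pr\big[\ell^*\in H^*\ \text{and}\ r_{\ell^*}>\max_{i\notin H}r_i\big]\ \ge\ \frac{f+1}{|H|}\cdot\Pr\big[\max_{i\in H}r_i>\max_{i\notin H}r_i\big].
\]
Finally, by exchangeability of the $n$ i.i.d. values the probability that the global maximum is honest equals $|H|/n$, so the right-hand side is at least $\frac{f+1}{|H|}\cdot\frac{|H|}{n}=\frac{f+1}{n}=\frac{f+1}{3f+1}\ge\frac13$, establishing $\alpha\ge 1/3$.

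The main obstacle I anticipate is the rigorous justification of the independence used in the third step: I must rule out that the adversary biases the outcome by correlating its choice of $\Si^*$ (which it schedules while already knowing its own corrupt evaluations) with the honest evaluations. The crux is that $\AVSS$ secrecy keeps every honest $r_i$ hidden until after $\Si^*$ is committed, so the scheduling --- and hence $H^*$ --- is independent of the honest values, while $\seedgen$ unpredictability simultaneously prevents the adversary from inflating its own corrupt evaluations above the honest ones. Handling these two threats jointly (secrecy for honest values, anti-grinding unpredictability for corrupt ones) is precisely what makes the clean symmetry computation legitimate.
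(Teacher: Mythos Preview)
Your proposal is correct and follows essentially the same route as the paper: both arguments hinge on (i) $\seedgen$'s commitment/unpredictability to ensure even maliciously-keyed VRFs behave uniformly on the committed seed, (ii) $\AVSS$ secrecy to keep honest VRF values hidden until after the core set $\Si^*$ is fixed, and (iii) the count $|H\cap\Si^*|\ge f+1$ out of $n$ exchangeable values to obtain $(f+1)/n\ge 1/3$. Your treatment is in fact more rigorous than the paper's own proof, which simply asserts $\alpha\ge 1/3$ and justifies it by a one-line reduction (``otherwise the adversary would bias corrupted VRF outputs or predict honest ones''), whereas you spell out the independence $\ell^*\perp(H^*,M,\max_{i\notin H}r_i)$ and carry through the explicit symmetry computation.
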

		\begin{proof}
			From the k-support common core of $\wcs$, at the moment when the first honest party invokes any $\ssrec$ instance, it already receives a $\Sigma_j$, and there exists a core set $\Si^*$ including $2n/3$ indices, each of which represents a shared VRF's evaluation and at least $n/3$ indices out of which are shared by honest parties. Recall that the honest dealers' $\ssrec$  leaks nothing about their VRF's evaluations, so at the moment when $\Si^*$ is fixed, the adversary learns nothing about honest parties' VRF evaluations. Thus, the probability that the $\mathsf{Event_{good}}$ occurs  is $\alpha\ge1/3$.	Otherwise, the adversary directly breaks the unpredictability of VRF by either biasing the distribution of corrupted parties' VRF evaluations (which is infeasible because according to the commitment and unpredictability of $\seedgen$, VRF seeds generated by $\seedgen$ protocols are unpredictable before it is committed, and once the $\seedgen$ completing the committing phase, the VRF seeds are fixed) or can predicate the high bits of honest parties VRF's evaluations without accessing their secret keys.   
		\end{proof}
	}
		\ignore{
			\begin{lemma}\label{lemma:1/3}
				If $f+1$ honest parties have a common core set and the size is $2f+1$, then with probability $1/3$, all honest parties will output the same value which is proposed by an honest party. 
			\end{lemma}
			\begin{proof}
				Each honest party outputs a pair $(r, \pi)$ using the VRF function, 
				which is used as input of its $\sssh$. After reconstructing all secrets $(r, \pi)$ in its $\Si$, each party chooses the maximal $r$ to be the elected value. According to the unpredictability of VRF, the value $r$ is distributed uniformly. So each $r$ has a $1\over n$ probability to be chosen as the maximal value. And since the privacy of $\sssh$, the adversary can not learn the values of the secrets before honest parties start $\ssrec$. So when there exists a core set containing $2f-1$ values, the probability that the maximal value is in the core set is $2f-1 \over n$. Since there might be $f$ values that are proposed by corrupted parties, the probability that the maximal value is in the core set and proposed by honest parties is at least ${n-2f\over{n} }\geq {1\over {3}} +{1\over {n}}$.\\
				Since $f+1$ honest parties have the common core set, once the maximal value is in the core set and proposed by honest parties, they will broadcast it so that every honest party will receive it and add it in set $\C$. Therefore, with a probability of at least $1/3$, all honest parties will output the same value which is proposed by an honest party. 
			\end{proof}
		}
		
		
		\noindent
		{\bf Deferred proofs for Lemma \ref{lemma:terminate}}. According to the correctness and commitment of $\seedgen$, 
		if all honest parties participate in $\seedgen[\langle \ID, j \rangle]$, every party will get the same $seed_j$  
		regarding an honest $\node_j$,
		which means  that all honest parties can complete their $\seedgen$s, compute their VRFs
		and activate their $\sssh$ instances that would finally joined by all honest parties.
		So every honest party will eventually complete at least $n-f$ $\sssh$ instances and record a $(n-f)$-sized set $\Si$ including the indexes of which $\sssh$ instances it participants in. Then every honest party activates $\wcs$ taking its set $\Si$.

		From the totality of $\AVSS$, if some honest party completes $\sssh$ instance on $\ID$, all honest parties will complete. So any index in some honest party's input set can eventually appear in all honest parties' set. From the termination of $\wcs$, all honest parties will output a set $\hat{\Si}$ and send $\RECREQ$ messages.
		
		From the validity of $\wcs$ and the totality of $\AVSS$, all honest parties would complete all $\sssh$ instances corresponds to its $\hat{\Si}$ and then start $\ssrec$s. According to the totality and commitment of $\AVSS$, all secrets corresponds to its $\hat{\Si}$ can be reconstructed.
		Recall that  for each $k \in \hat{\Si_i}$, $\node_i$ participants in $\sssh[ID, k]$. An honest party activates an $\sssh[\ID, k]$ only if it receives a $seed_k$ from  $\seedgen[\langle \ID, k \rangle]$. 
		This means that for each $k \in \hat{\Si}$, 
		$\node_i$  can output in $\seedgen[\langle \ID, k \rangle]$ to get a common $seed_k$.
		So for each $k \in \widehat{\Si}$, it can check whether $(k, r_k, \pi_k)$ is a validated VRF result, and pick up the maximum $r_l$ among all valid $r_k$. 

		Finally, every honest party sends the picked $(l, r_l, \pi_l)$ using a $\CANDIDATE$ message to all parties. All honest parties can eventually receive at least $n-f$ valid $\CANDIDATE$ from different parties. According to the totality and commitment of $\seedgen$, if any honest party gets  $seed_j$ from $\seedgen[\langle \ID, j \rangle]$, all honest parties would obtain the same $seed_j$ so all honest parties can get common VRF seeds and mutually consider whether others' $\CANDIDATE$ messages are valid, and then output $(j, r, \pi)$ where $r$ is the maximum of all $r_l$ among valid $\CANDIDATE$ messages.
		
		\smallskip
		\noindent
	{\bf Deferred proofs for Lemma \ref{lemma:core} (Good-case Bound lemma of $\coin$)}. 
			From the ($f+1$)-support core set of $\wcs$, at the moment when the first honest party outputs from $\wcs$ and invokes any $\ssrec$ instance, there has existed a core set $\Si^*$ including $2n/3$ indices, each of which represents a shared VRF's evaluation and at most $f < n/3$ indices of which are shared by adversary. 
			Each VRF's evaluation has a $1/n$ probability to be the maximal. Otherwise, the adversary directly breaks the pseudorandomness of VRF by biasing the distribution of corrupted parties' VRF evaluations (which is infeasible because according to the commitment and unpredictability of $\seedgen$, VRF seeds generated by $\seedgen$ protocols are unpredictable before it is committed, and once the $\seedgen$ completing the committing phase, the VRF seeds are fixed).
			Thus, the probability that the $\mathsf{Event_{good}}$ occurs is at least ${{{{2n}\over 3} - {n\over 3}}\over{n}}={1 \over 3}$. 
			
			\smallskip
			\noindent
			\begin{lemma}\label{lemma:core-unpred}
				If $\mathsf{Event_{good}}$ defined in Lemma \ref{lemma:core} occurs, there does not exist a polynomial adversary which can predicate the lowest bits(e.g. lowest $\lambda/2 $ bits) of $r_{\hat{l}}$. 
			\end{lemma}
		
		\begin{proof}
			Recall that the honest dealers' $\ssrec$s  leak nothing about their VRF's evaluations so at the moment when $\Si^*$ is fixed, the adversary learns nothing about honest parties' VRF evaluations. Thus when $\mathsf{Event_{good}}$ defined in Lemma \ref{lemma:core} occurs, the adversary cannot predicate the lowest bits of output better than guessing. Otherwise it can break the pseudorandomness of VRF by predicating the lowest bits of honest parties VRF's evaluations without accessing their secret keys.   
		\end{proof}
	
		\smallskip
		\noindent
		\begin{lemma}\label{lemma:core-common}
			If $\mathsf{Event_{good}}$ defined in Lemma \ref{lemma:core} occurs, all honest parties will output the same bit $b$. 
		\end{lemma}
		\begin{proof}
			When $\mathsf{Event_{good}}$ defined in Lemma \ref{lemma:core} occurs, at least $f+1$ honest parties will receive the largest VRF's evaluation $r$ from some honest party and multicast it with $\CANDIDATE$ messages. All honest parties can receive at least one $\CANDIDATE$ message containing $r$ so that all honest parties will output the lowest bit of $r$.
		\end{proof}
		\smallskip
		\noindent
		{\bf Deferred proofs for Theorem \ref{thm:coin}}. 
		\begin{proof}
			We prove that Alg.  \ref{alg:coin} realizes the  properties of $\coin$ in  Def. \ref{def:coin} one by one:
			\begin{itemize}
				\item {\em Termination}. Termination can be proved directly from Lemma \ref{lemma:terminate}, 				
				
				\item {\em Reasonably fair bit-tossing}. From Lemma \ref{lemma:core}, the $\mathsf{Event_{good}}$ occurs with a probability $\alpha = 1/3$. 
				Before $f+1$ honest parties are activated to run the protocol, the adversary cannot predicate the protocol execution will fall into which case because no one can predicate the VRF seeds and thus even the corrupted parties cannot compute their VRF evaluations. 
				Following the same argument, from Lemma \ref{lemma:core-unpred},  before $f+1$ honest parties run the protocol, and when $\mathsf{Event_{good}}$ occurs, the adversary cannot predicate the lowest bit of the largest VRF's evaluation better than guessing. Moreover, in this case, all honest parties output the same bit according to  Lemma \ref{lemma:core-common}. Therefore,  the adversary succeeds in predicating some honest party's output with $\alpha/2$ probability.
				When $\mathsf{Event_{bad}}$ occurs with $1- \alpha$ probability, honest parties may not be able to output the same value, i.e., some honest parties output $0$ and some may output $1$. In this case, the adversary can always guess a bit $b$ which is equal to some honest parties' output. Therefore, the probability that adversary wins in the predication game is $\Pr[\adv \textnormal{ wins}] \le 1-\alpha + \alpha/2 = 1 - \alpha/2$, where $\alpha = 1/3$.
			\end{itemize}
		\end{proof}
}

\ignore{
		\section{Deferred Proofs for Leader Election}\label{app:elec}

		\ignore{
			\begin{lemma}\label{lemma:aba} 
			\end{lemma}
			\begin{proof}
				Since there is a $\zeroaba$, if at least $f+1$ parties input $0$, then it must be output $0$. However,  when a $\zeroaba$ outputs $1$, we can be sure that the number of parties with $0$ as the input are less than $f+1$, so at least $2f+1$ parties with $1$ as the input, which also means at least $f+1$ honest parties input $1$. According to Lemma \ref{lemma:same}, these parties who input $1$ in a $\zeroaba$ have the same $c^*=(\ell^*, r^*)$, by the code, there exist  $(\cdot, \ell^*, r^*, \cdot, \cdot)$ matching the majority elements in $G$ and $r^*$ is the largest VRF evaluation among all elements in $G$.
			\end{proof}
		}

		\begin{lemma}\label{lemma:same}
			For any two parties $\node_i$ and $\node_j$, if there exists $(\cdot, \ell, r, \cdot)$ matching the majority elements in $G^*_i$ and $r$ is the largest VRF evaluation among all elements in $G^*_i$, and there exists $(\cdot, \ell', r', \cdot)$ matching the majority elements in $G^*_j$ and $r'$ is the largest VRF evaluation among all elements in $G^*_j$,  then the  $( \ell, r)=( \ell', r')$. 
			
		\end{lemma}
		\begin{proof}
			We prove this by contradiction. Suppose $r\neq r'$. By the code, $(\cdot, \ell, r, \cdot)$  and $(\cdot, \ell', r', \cdot)$ match the majority of $G^*_i$ and $G^*_j$, respectively, which means that the number of their appearance in $G^*_i$ and $G^*_j$ are at least $f+1$, respectively.  
			Without loss of generality, we assume that $r > r'$. 
			Note that there are $n-f$ elements in $G^*_j$, so at least one valid $(\cdot, \ell, r, \cdot)$  must be included in $G^*_j$, because all elements in $G^*_i$ and $G^*_j$ are obtained via reliable broadcast that ensures agreement. Since $r'$ is the largest VRF evaluation among all elements in $G^*_j$, it also means $r'> r$, which is a contradiction to the assumption. Hence, $( \ell, r)=( \ell', r')$. 
		\end{proof}

		\noindent
		\begin{lemma}\label{lemma:agreement}
			For any two honest parties output $b=1$ from $\ABA$, they will output the same elected leader.
		\end{lemma}
		\begin{proof}
			If  $b=1$, from the validity of $\ABA$, there is at least one honest party activates $\ABA$ with input $1$, which implies that at least one honest party record a $G^*$ where exists $(\cdot, \ell^*, r^*, \cdot)$ matching the majority elements in $G^*$ and $r^*$ is the largest VRF evaluation among all elements in $G^*$. Since all elements are the outputs of $\RBC$s. From the totality, all honest parties can receive all elements in this $G^*$.
			Hence, each honest party can wait for a $G^* \subseteq G $ and then output.
			According to the Lemma \ref{lemma:same}, any valid $G^*$ has the same $(\cdot, \ell, r, \cdot)$ which matches the majority elements in $G^*$ and $r$ is the largest VRF evaluation. 
			Hence, all honest parties output the same value $(r \bmod  n) +1$.	
		\end{proof}

		\noindent
		\begin{lemma}\label{lemma:fair-good}
			When the $\mathsf{Event_{good}}$ defined in Lemma \ref{lemma:core} occurs, the polynomial-time adversary cannot predicate the elected leader better than guess.
		\end{lemma}
		\begin{proof}
			From Lemma \ref{lemma:core-common}, when the $\mathsf{Event_{good}}$ occurs, all honest parties will output the same $\cmax=(\ell^*, r^*, \pi^*)$ after running the code of $\coin$. In this case, all honest parties have the same $(\ell^*, r^*, \pi^*)$ and send it by $\RBC$. 
			Following the validity of $\RBC$, each honest party can receive at least $n-f$ messages from distinct $\RBC$ instances, at least $n-2f \ge f+1$ of which are sent by distinct honest parties and contain the same $(\ell^*, r^*, \pi^*)$. So all honest parties can collect a $G^*$, in which $(\cdot, \ell^*, r^*, \cdot)$ matches the majority elements and $r^*$  is the largest VRF. Then all honest parties activate $\ABA$ with $1$ as input. According to the validity of $\ABA$, all honest parties will output $1$ from $\ABA$, then all honest parties output the same value $( r^* \bmod  n) +1$ according to Lemma \ref{lemma:agreement}.
			
			Following the Lemma \ref{lemma:core-unpred},  with a probability $\alpha = \Pr[\mathsf{Event_{good}}]=1/3$, the adversary can not predict the $\ell = ( r^* \bmod  n) +1$ better than guess.
			Thus in this case, the probability that the adversary $\Adv$ succeeds in predicting the output $\ell$ which coincides with some honest party's output is no more than $\alpha \over n$. 
		\end{proof}

		\smallskip
		\noindent
		{\bf Deferred proof for Theorem \ref{thm:elect}}. 

		\begin{proof}
			Here we prove that Alg. \ref{alg:elect} satisfies the  properties of $\elect$ given in Def. \ref{def:elect} one by one:
			\begin{itemize}
				\item {\em Termination}. From Lemma \ref{lemma:terminate}, each honest party will output a $\cmax=(\ell^*, r^*, \pi^*)$, 
				then each honest party will broadcast its $(\ell^*, r^*, \pi^*)$ using $\CBC$. According to the validity of $\CBC$, each honest party can eventually collect a set $G$ containing at least $n-f$ $\CBC$ outputs. For an honest party, if there exists $(\cdot, \ell^*, r^*, \cdot)$ matching the majority elements in $G$ and $r^*$ is the largest VRF evaluation among all elements in $G$, activates the $\ABA[\ID]$ with $1$ as input, otherwise, inputs $0$ into $\ABA[\ID]$.

				%

				According to the termination and agreement of $\ABA$, if all honest parties participate in the $\ABA$, then all of them will output the same bit $b$. If $b=0$, all honest parties output the default index, i.e., $1$. 
				If $b=1$, from Lemma \ref{lemma:agreement}, all honest parties will output a same value.


				\item {\em Agreement}. According to the termination and agreement of $\ABA$, if all honest parties participate in the $\ABA$, all of them would output from $\ABA$ with the same bit $b$. We analyze it in two cases: 
				(i), If $b=0$, it is obvious that all honest parties will output the default index, i.e., $1$.
				(ii), If $b=1$, from Lemma \ref{lemma:agreement}, all honest parties will output $(r \bmod  n) +1$, where $r$ is the largest VRF's evaluation.

				\item {\em Reasonably fair leader-election}. We use the $\mathsf{Event_{good}}$ and $\mathsf{Event_{bad}}$ defined in the Lemma \ref{lemma:core} to discuss this property by two cases.  
				Case (i): With probability $\alpha$, the $\mathsf{Event_{good}}$ occurs, then from Lemma \ref{lemma:fair-good}, the probability that the adversary $\Adv$ succeeds in predicting the output $\ell$ which coincides with some honest party's output is no more than $\alpha \over n$. 
				%
				%
				%
				%
				Case (ii): if the $\mathsf{Event_{bad}}$ occurs, the adversary $\Adv$ might lead up to different honest parties to obtain different $\cmax$, so that some honest parties would not be able to find the majority elements $(\cdot, \ell^*, r^*, \cdot)$, where $r^*$ is the largest VRF evaluation in $G$. 
				Nevertheless, it cannot be worse than that $\ABA$ always outputs 0 and the adversary always predicates the output.
				%
				
				In sum, the probability that adversary wins in the predication game is $\Pr[\adv \textnormal{ wins}] \le 1-\alpha +  \alpha /n $, where $\alpha = 1/3$.

			\end{itemize}
		\end{proof}
		
	}

		\ignore{
			\section{Deferred Proofs for $b$-Biased ABA}
			
			\begin{algorithm}[H]
				
				\caption{$\aba$ protocol, with identifier $\ID$}
				\begin{algorithmic}[1]
					\vspace*{0.12 cm}
					\Statex  {\color{blue}  /* Protocol for {each party $\node_i$} */ }
					\vspace*{0.1cm}
					\Upon {receiving input $v$}
					\State $r \leftarrow 0$
					\State $est_r \leftarrow v$
					
					\algrenewcommand\algorithmicloop{\textbf{repeat}}
					\Repeat
					\State $r \leftarrow r+1$
					\State \textbf{send} ${\VAL(\ID, r,est_r)}$ to all parties
					\Upon {receiving $f+1$ $\VAL(\ID,r,v')$ messages from distinct parties}
					\State \textbf{send} ${\VAL(\ID, r,v')}$ to all parties
					\EndUpon
					\Upon {receiving $2f+1$ $\VAL(\ID,r,v')$ messages from distinct parties}
					\State $values_r \leftarrow values_r \cup \{v'\}$
					\EndUpon
					
					\algrenewcommand\algorithmicloop{\textbf{wait}}
					\Wait { for $values_r \neq \emptyset$} {\bf do}
					\State \textbf{send} ${\AUX(\ID, r, v_r)}$ to all parties, where $v_r$ is the current value in $values_r$
					\EndWait
					
					\Wait { for $n-f$ $\AUX(\ID, r, v_{j})$ messages from a set $S$ of distinct parties such that $V_r \leftarrow \bigcup_{\node_j \in S} v_{j} \subseteq values_r$}  {\bf do}
					\State \textbf{send} ${\CONF(\ID, r,values_r)}$ to all parties
					\EndWait
					
					\Wait { for $n-f$ $\CONF(\ID, r, values_{j})$ messages from a set $S$ of distinct parties such that $S_r \leftarrow \bigcup_{\node_j \in S} values_{j}$ is a subset of $values_r$}  {\bf do}
					\If {$r = 1$}
					\State $coin_r \leftarrow b$
					\Else
					\State $coin_r \leftarrow \coin(\langle \ID, r \rangle)$
					\EndIf
					\If {$|S_r| = 1$, i.e., there is only one bit $x$ in $S_r$}
					\If {$x = coin_r$}
					\State $\textbf{output} $ $x$
					\Else
					\State $est_{r+1} \leftarrow x$
					\EndIf
					\Else  
					\State $est_{r+1} \leftarrow coin_r$
					\EndIf
					\EndWait
					\EndRepeat
					\EndUpon
				\end{algorithmic}
				
				
				\yuan{Zhenliang: make up proofs to Alg 6...}
				
			\end{algorithm}
			
		}
		

	\end{subappendices}

\end{document}